\documentclass{elsarticle}
\usepackage{fullpage}

\usepackage{amssymb}               
\usepackage{amsmath}               
\usepackage{amsthm}                
\usepackage{color}
\usepackage{enumitem}
\usepackage{comment}
\usepackage{marginnote}

\usepackage{hyperref}

\newcommand{\Real}{\ensuremath{\mathbb R}}
\newcommand{\mylabel}[1]{\label{#1}}
\newcommand{\ind}[1]{\hspace*{#1em}}

\newcommand{\If}{{\bf if\ }}
\newcommand{\Then}{{\bf then\ }}
\newcommand{\Else}{{\bf else\ }}

\newcommand{\Return}{{\bf return\ }}

\newcommand{\diag}{{\rm diag}}

\newcommand{\length}{{\rm bitlength}}
\newcommand{\Z}{\ensuremath{\mathbb Z}}
\newcommand{\Q}{\ensuremath{\mathbb Q\mskip1mu}}
\newcommand{\Rel}{\ensuremath{\mathcal R}}
\newcommand{\K}{{\mathsf{K}}}

\newcommand{\B}{{\mathsf{B}}}
\newcommand{\M}{{\mathsf{M}}}

\newcommand{\boldz}{{\mathbf 0}}

\newcommand{\Znn}{\ensuremath{\mathbb Z}^{n\times n}}

\newtheorem{theorem}{Theorem}

\newtheorem{definition}[theorem]{Definition}
\newtheorem{corollary}[theorem]{Corollary}

\newtheorem{remark}[theorem]{Remark}
\newtheorem{lemma}[theorem]{Lemma}
\newtheorem{example}[theorem]{Example}

\DeclareMathOperator{\rowmod}{{\mathbf r}mod}
\DeclareMathOperator{\colmod}{{\mathbf c}mod}

\DeclareMathOperator{\loglog}{loglog}

\newlength{\algwidth}
\usepackage{etoolbox}
\makeatletter
\patchcmd{\ps@pprintTitle}
  {Preprint submitted to}
  {Using style file from }
  {}{}
\makeatother

\definecolor{orange}{rgb}{1.0, 0.5, 0.0} 

\begin{document}
\title{Computing bases in Hermite normal form of lattices of integer relations}

\author{George Labahn}
\address{Cheriton School of Computer Science, University of Waterloo,
Waterloo ON, Canada N2L 3G1}
\ead{glabahn@uwaterloo.ca}

\author{Arne Storjohann}
\address{Cheriton School of Computer Science, University of Waterloo,
Waterloo ON, Canada N2L 3G1}
\ead{astorjoh@uwaterloo.ca}

\begin{abstract}
Given a full column rank $M \in \Z^{\ell \times m}$ and
an $F \in \Z^{n \times m}$ 
we present an algorithm  to compute the $n \times n$
basis in Hermite form of the integer lattice comprised of all rows
$p \in \Z^{1 \times n}$ such that $pF \in \Z^{1 \times m}$ is in
the integer lattice generated by the rows of $M$.  The algorithm is randomized
of the Las Vegas type, that is, it can fail with probability at most
$1/2$, but if fail is not returned it guarantees to produce the 
correct result.  When $M$ is
square and $F=I_m$, then the computed basis is the Hermite normal
form of $M$, and the algorithm uses about the same number of bit
operations as required to multiply together two matrices of the
same dimension and size of entries as $M$.
\end{abstract}

\begin{keyword}
Hermite normal form, Integer relations lattice, Howell  form, Smith massager, \\ {\em MSC codes:} 68Q25, 68W30, 15A21, 15A36
\end{keyword}

\maketitle


\section{Introduction}

There has been considerable recent work on designing algorithms for
computations on integer (and polynomial) matrices that have about
the same cost as multiplying together two matrices of the same
dimension and size of entries as the input matrix.  For a nonsingular
$m \times m$ integer input matrix $M$ the target complexity is
$(m^\omega \log ||M||)^{1+o(1)}$ bit operations, where $\omega$ is
the exponent of matrix multiplication, $||M|| = \max_{ij} |M_{ij}|$
denotes the largest entry in absolute value, and the ``$+o(1)$''
in the exponent indicates a missing factor $c_1(\log n)^{c_2}
(\loglog ||M||)^{c_3}$ for positive real constants $c_1,c_2,c_3$.
This complexity has been achieved with a deterministic algorithm
for computing the rational solution to a linear system $\vec{x}
M=\vec{b}$~\cite{BirmpilisLabahnStorjohann19}, and with Las Vegas
randomized algorithms for computing the diagonal Smith normal form
$S$ of $M$~\cite{BirmpilisLabahnStorjohann20} and the Smith form
with unimodular multipliers $MV=US$~\cite{BirmpilisLabahnStorjohann21}.

However, no previous algorithm  with the desired complexity exists
for computing the Hermite normal form of an integer matrix.
Recall that a full column rank $\ell \times m$ matrix
\begin{equation}\label{intro:hermite}
 \left [ \begin{array}{cccc}
h_1 & h_{12} & \cdots & h_{1m} \\
& h_2 & \cdots & h_{2m} \\
& & \ddots & \vdots \\
& & & h_m \\
& & & 
\end{array} \right ] \nonumber
\end{equation}
is in (row) Hermite form if all its entries are nonnegative, the
off-diagonal entries $h_{\ast j}$ are strictly smaller than the
diagonal entry $h_j$ in the same column and any zero rows are at
the bottom.  For every full column rank matrix $M \in \Z^{\ell
\times m}$, there is a unimodular matrix $W \in \Z^{\ell \times
\ell}$ such that $H = W M$ is in Hermite form.  The form is unique
with its existence dating back to 1851~\cite{Hermite}.  The first
$m$ rows of $H$, which we call the Hermite basis of $M$, gives a
canonical presentation for ${\mathcal L}(M)$, the lattice generated
by the $\Z$-linear combinations of the rows of $M$.

\subsection{Main contributions}

In this paper we present a new algorithm for computing the Hermite
form of a full column rank input matrix $M\in\Z^{\ell \times m}$
using $(\ell m^{\omega-1} \log ||M||)^{1+o(1)}$ bit operations.
The algorithm is Las Vegas randomized, that is, it can report
\textsc{fail} with probability at most $1/2$, and if \textsc{fail}
is not reported,  then the output is guaranteed to be correct.  More
precisely we have:

\begin{theorem} \mylabel{thm:ione}
Let $M \in \Z^{\ell \times m}$ have full column rank. There exists
a Las Vegas randomized algorithm that computes the Hermite form of
$M$ in $O(\ell m^{\omega-1}\,\B(D/m + \log m)(\log m)^2))$ bit
operations, where $D$ is a bound for the sum of the bitlengths of
the columns of $M$.
\end{theorem}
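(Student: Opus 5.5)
We prove Theorem~\ref{thm:ione} in two stages. First we reduce the rectangular case to the square nonsingular case. Then we handle the square case through the Smith form with unimodular multipliers~\cite{BirmpilisLabahnStorjohann21}, that is, through a Smith massager.

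\textbf{Reduction to the square case.} Precondition $M$ by a random unimodular transformation $W$, built as an identity with a random dense band of $O(m)$ rows. The goal is that the first $m$ rows of $WM$ form a nonsingular matrix $M_1$ and that the lattice $\mathcal L(M)$ is generated by the rows of $M_1$ together with a few extra random combinations of the rows of $M$. Multiplying the $\ell\times m$ matrix $M$ blockwise costs $O(\ell m^{\omega-1})$ operations on integers of size about $D/m+\log m$, which gives the factor $\ell m^{\omega-1}\B(D/m+\log m)$. The remaining work is then on $O(m)\times m$ matrices. The column-sum bitlength bound $D$ is preserved, up to additive $O(m\log m)$, by partial linearization (column splitting). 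This lets us treat entries as having average size $D/m$ rather than $\log\|M\|$.

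\textbf{Square case via relations lattices.} For nonsingular $M_1$, compute the Smith form $S=\diag(s_1,\dots,s_m)$ and a reduced Smith massager $(U,S)$ using the Las Vegas algorithm of~\cite{BirmpilisLabahnStorjohann20,BirmpilisLabahnStorjohann21}. The massager satisfies that $p\in\mathcal L(M_1)$ if and only if $pU\equiv 0$ column-wise modulo $S$. So $\mathcal L(M_1)$ is exactly the integer relations lattice named in the abstract, with $F=U$ and modulus lattice $\mathcal L(S)$.

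We compute the Hermite basis of this lattice column by column, from the last column to the first, using the Howell form. At each step we compute the Howell form over $\Z/s\Z$ of the residual relation columns, with $s$ the current largest invariant. This yields diagonal entries $h_j$ whose product is $\det M_1=\prod s_i$, which bounds the total bitlength of the output by $O(D)$. Divide and conquer on the column index $j$, together with the $(\log m)^2$ overhead of the recursive Smith-massager and Howell computations, gives the stated bound. The rows of the extra random combinations from the reduction step are absorbed in the same way, by intersecting with the relation condition modulo $S$. Finally, a verification step certifies the output: check that $H$ is in Hermite form, that $\det H=\prod s_i$, and that every row of $H$ satisfies the massager congruence. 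If the check fails the algorithm reports \textsc{fail}, which keeps it Las Vegas. The failure probability of the random preconditioner and of the Smith computation can each be made at most $1/4$ by choosing random entries from a set of size polynomial in $m$.

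\textbf{Main obstacle.} The hardest step is the Hermite basis computation of the relation lattice within the cost bound when the $s_i$ are very unbalanced. A naive Howell computation modulo $s_m$ costs $\log s_m$ per entry, and $\log s_m$ may be as large as $D$ rather than $D/m$. My first approach is to partition columns by the size of $s_i$ and apply partial linearization. Each column with large $s_i$ is expanded into several columns of size $D/m$ using a radix-$X$ representation of $U$'s columns. This keeps the total dimension $O(m)$ and the entry size $O(D/m+\log m)$. Then I would prove that the Hermite basis of the expanded relations problem projects onto the desired one.
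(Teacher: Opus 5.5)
There is a genuine gap, and it sits exactly in the step you defer as the ``main obstacle''; the fix you propose there does not work. The Howell computations must be done modulo (the square of) the largest invariant factor $s$ of the current modulus. Expanding the columns of $U$ in radix $X$ changes neither the modulus nor the invariant factors, so the working precision is still $\log s$, which can be $\Theta(D)$. If you also shrink the moduli, then $pu\equiv 0 \bmod s$ does not split into conditions on the radix-$X$ digits of $u$, because of carries. So the claim that the expanded problem ``projects onto the desired one'' is precisely what is missing. (Linearizing the whole matrix $\left[\begin{smallmatrix} S & 0\\ U & I\end{smallmatrix}\right]$ of Lemma~\ref{rem:alt} is legitimate, but it only gives you a general matrix with small entries whose Hermite form you must still compute, and whose largest invariant factor is still $s$.) The paper uses a different idea, the staged algorithm of Theorem~\ref{thm:hermbas}. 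Each stage puts only the columns of the first half $\bar S_1$ of the remaining diagonal block into Hermite form, working modulo the largest invariant factor $s$ of $\bar S_1$, so Lemma~\ref{lem:smallsnew} gives $\bar m\log s\le 2\log\det S$. The remaining columns are updated by $\colmod$ products modulo $\bar S_2$, done by partial linearization (Theorems~\ref{thm:partial} and~\ref{thm:partial2}).

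Your divide and conquer on the columns is also missing its mechanism. You must identify the two subproblems: $H_1$ is the Hermite basis of $\Rel\bigl(\left[\begin{smallmatrix} S\\ A\end{smallmatrix}\right],F\bigr)$ and $H_2$ that of $\Rel(S,H_1F)$ (Theorems~\ref{thm:phase1} and~\ref{thm:phase2}). Crucially, you must then convert each into coprime inputs with a Smith-form modulus, using Theorem~\ref{thm:bas} plus a Smith massager at every node, so that $\det S=\det S_1\det S_2$ by Theorem~\ref{thm:S}. Without this, the second subproblem $\Rel(S,B)$ keeps the full modulus $S$. The $2^i$ nodes at level $i$ then each carry precision $\log\det S$ rather than about $2^{-i}\log\det S$ on average, the per-level cost no longer decays geometrically, and the stated bound is lost. (The $(\log m)^2$ factor in the paper comes only from the Smith massager cost; the properly balanced recursion is root-heavy and adds no log factor.)

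Finally, your rectangular reduction rests on an unproved claim that a random band of combinations generates $\mathcal L(M)$. Your certificate ($\det H=\prod s_i$ plus the massager congruence) refers to $\mathcal L(M_1)$, not to $\mathcal L(M)$, whose determinant is in general a proper divisor of $\prod s_i$. Nothing certifies $\mathcal L(M)\subseteq\mathcal L(H)$, so a failed compression would go undetected and the algorithm would not be Las Vegas. The paper needs no compression. After a Smith massager $(S_1,V_1)$ of a nonsingular $m\times m$ submatrix $M_1$, it computes the Hermite basis of $\mathcal L(S_1)+\mathcal L(\colmod(M_2V_1,S_1))$ with the same staged algorithm, at cost linear in $\ell$. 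It then proceeds through Lemma~\ref{lem:tostandard} and Theorem~\ref{thm:general} with $G=I_m$.
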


Here, $\B(t)$ bounds the cost of operations on integers with bitlength
bounded by~$t$, including gcd-related operations, and by ``bitlength
of a column'' we mean the bitlength of the largest entry in absolute
value.  The term $D/m$ in the cost estimate is thus a bound for the
average column bitlength of $M$.  We always have $D/m \in
O(\log ||M||)$.  Section~\ref{ssec:cost} gives our cost model.

Our approach is to solve the  more general problem of finding the
{\em  Hermite bases of an integer relations lattice}.  Given $M \in
\Z^{\ell  \times  m}$ having full column rank and $F \in \Z^{n
\times m}$, the set \begin{equation} \mylabel{eq:introrel} \Rel(M,F)
:= \left  \{ p \in \Z^{1 \times n} \mid pF = \boldz \bmod M \right
\} \end{equation} defines an integer relations lattice for $F$ with
modulus $M$.  Here the notation $A = \boldz \bmod M$ stands for
``$A= Q M$ for some integer matrix $Q$,''  that is, the rows of $A$
are in the lattice generated by the rows of $M$.  A basis for the
lattice $\Rel(M,F)$ is any nonsingular matrix $P \in \Z^{n \times
n}$ that has minimal determinant in absolute value among all matrices
that satisfy $PF = \boldz \bmod M$.  In this paper we present an
algorithm for computing the unique basis for $\Rel(M,F)$ that is
in Hermite form.

\begin{theorem} \mylabel{thm:itwo}
Let $M \in \Z^{\ell \times m}$ have full column rank and $F \in
\Z^{n \times m}$.  There exists a Las Vegas randomized algorithm
that computes the Hermite basis $H$ of $\Rel(M,F)$ in
$$ O((\ell+n)m^{\omega-1}\, \B(D/m + \log m) (\log m)^2) $$
plus 
$$ O(n^{\omega}\, \B(D/n  + \log n)(\log n)^2) $$
bit operations, where  $D$ is a bound for the sum of the bitlengths
of the columns of $M$ and $F$.
\end{theorem}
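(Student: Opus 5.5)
The plan is to reduce the problem to a Smith-form computation of $M$, after which everything is done modulo the largest invariant factor. In the first step I would compress $M$ to a square nonsingular matrix. Multiplying by a random dense integer matrix $R\in\Z^{m\times\ell}$, by partial linearization if needed, gives $RM\in\Z^{m\times m}$ whose lattice equals ${\mathcal L}(M)$ with good probability, so the relation $pF=\boldz \bmod M$ is unchanged. This equality can be certified afterwards, which keeps the algorithm Las Vegas. It costs $O(\ell m^{\omega-1}\B(D/m+\log m)\log m)$ and accounts for the $\ell m^{\omega-1}$ term. From here I assume $M$ is square and nonsingular.

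Next I would apply the Smith-massager algorithm of~\cite{BirmpilisLabahnStorjohann20,BirmpilisLabahnStorjohann21}. This computes the Smith form $S=\diag(s_1,\ldots,s_m)$ of $M$ together with a reduced massager $T\in\Z^{m\times m}$, whose column $j$ is reduced modulo $s_j$. The defining property I need is that $MTS^{-1}$ is integral and $\Z$-unimodularly related so that $p\in\Z^{1\times m}$ satisfies $p=\boldz\bmod M$ iff $pT=\boldz \bmod S$ columnwise. This gives
$$ \Rel(M,F)=\{p\in\Z^{1\times n}\mid pFT = \boldz \bmod S\}. $$
The algorithm costs $(m^\omega\,\B(D/m+\log m)(\log m)^2)$ up to the stated form. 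Forming $FT \bmod S$ costs an $n\times m$ by $m\times m$ product with column-scaled moduli. Since $\sum_j\log s_j\le D$, partial linearization keeps this within $O(nm^{\omega-1}\B(D/m+\log m)\log m)$. I can then discard columns with $s_j=1$.

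The remaining task is the Hermite basis of $\{p : pB=\boldz \bmod S\}$, where $B=FT \bmod S\in\Z^{n\times m'}$, $m'\le m$, and $S$ is diagonal with $s_1\mid\cdots\mid s_{m'}$. This lattice contains $s_{m'}\Z^{1\times n}$. The natural route is to compute a basis of the kernel of $[B;S]$ as a module over $\Z/(s_{m'})$ via a Howell form. Concretely, compute the Howell form of $\left[\begin{smallmatrix} I_n & B\\ & S\end{smallmatrix}\right]$ with the $n$ left columns reduced modulo $s_{m'}$. Reading off the block whose right part is zero gives generators of $\Rel$ modulo $s_{m'}$, and adjoining $s_{m'}I_n$ and taking the Hermite form yields $H$. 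To respect the determinant budget, which gives the $n^\omega\B(D/n+\log n)$ term, I would not work with the full modulus uniformly. Instead I would exploit $\det H \mid \prod_j s_j$, so the total bitlength of $H$'s diagonal is at most $D$. I would then use a divide-and-conquer over columns, or a triangular Howell algorithm with column-dependent precision, so that column $i$ of $H$ is handled at its own bitlength.

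The main obstacle is this last step: computing the Hermite basis from the Howell-type kernel within $O(n^\omega\B(D/n+\log n)(\log n)^2)$ when the diagonal entries of $H$ are highly unbalanced. A uniform modulus $s_{m'}$ would cost $n^\omega\B(\log s_{m'})$, and that can be as large as $n^\omega\B(D)$. I would first try recursive halving of the columns: compute the trailing Hermite block, multiply through, and recurse, with each half's work controlled by the sum of its diagonal bitlengths. The analysis would charge the cost to the average bitlength $D/n$ at each of the $\log n$ levels.
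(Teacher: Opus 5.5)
Your second step, a Smith massager of a square nonsingular modulus followed by $\Rel(M,F)=\Rel(S,\colmod(FT,S))$, is Lemma~\ref{lem:diag} and is fine. Halving over the columns of $H$ is also the right shape of recursion. There are, however, two genuine gaps.

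\textbf{Step~1 is false as stated.} Write $M=WT_M$ with $T_M$ the Hermite basis of $M$, so that ${\mathcal L}(W)=\Z^{1\times m}$. Then ${\mathcal L}(RM)={\mathcal L}(M)$ if and only if $\det(RW)=\pm1$, which a random square integer matrix essentially never satisfies. Already for $m=1$ and $M\in\Z^{2\times 1}$ with entries $2,3$, you need $2r_1+3r_2=\pm1$. This happens with probability $O(1/\lambda)$ when the entries of $R$ are bounded by $\lambda$. In general only ${\mathcal L}(RM)\subseteq{\mathcal L}(M)$ holds, so $\Rel(RM,F)$ is usually a proper sublattice of $\Rel(M,F)$. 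Your certification would therefore fail almost always, not with probability at most $1/2$. Generating ${\mathcal L}(M)$ with constant probability needs more than $m$ random combinations, which leaves you with a rectangular modulus again. The paper instead:
\begin{itemize}
\item keeps a nonsingular $m\times m$ submatrix $M_1$ of $M$ and takes a Smith massager $(S_1,V_1)$ of $M_1$;
\item replaces the remaining rows by $M_3=\colmod(M_2V_1,S_1)$ and $F$ by $\colmod(FV_1,S_1)$ (Lemma~\ref{thm:diag});
\item computes the Hermite basis of ${\mathcal L}(S_1)+{\mathcal L}(M_3)$ with the algorithm of Theorem~\ref{thm:hermbas}.
\end{itemize}
That last computation is where the $\ell m^{\omega-1}$ term comes from.

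\textbf{Step~3 is the actual content of the theorem, and you leave it at ``I would first try''.} Making the recursion work needs four ideas your sketch does not contain.
\begin{enumerate}
\item[(i)] You need concrete subproblems. Factoring $H=H_2H_1$ as in~(\ref{eq:partm}), Theorem~\ref{thm:rec2} shows that $H_1$ is the Hermite basis of $\Rel\bigl(\left[\begin{smallmatrix}S\\A\end{smallmatrix}\right],F\bigr)$, with $A$ the trailing rows of $F$. It also shows that $H_2$ is the Hermite basis of $\Rel(S,H_1F)$.
\item[(ii)] The modulus of the second subproblem still has determinant $\det S$. The precision does not split between the halves unless you strip the common right divisor (Theorem~\ref{thm:bas}, via the Hermite form of $\left[\begin{smallmatrix}T&&I\\B&I&\\S&&\end{smallmatrix}\right]$). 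The paper keeps the inputs coprime at every node, including the root. Then $\det S=\det H$ by Theorem~\ref{thm:S}, and $\det S=(\det S_1)(\det S_2)$, which is what makes the precisions at each level of the recursion sum to $\log\det S$. Your $B=\colmod(FT,S)$ is not coprime with $S$ in general.
\item[(iii)] The new moduli must be rediagonalized by a Smith massager at every node. This is where the randomization and the $(\log m)^2$ factor come from.
\item[(iv)] The Hermite forms of $\left[\begin{smallmatrix}A\\S\end{smallmatrix}\right]$ and of the block matrix above must cost in terms of $(\log\det S)/m$ rather than $\log s_{m'}$. Even one Howell computation modulo $s_{m'}^2$ per node breaks the budget when the invariant factors are unbalanced. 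This is why the paper needs the staged slicing algorithm of Section~\ref{sec:hermbas} with the invariant of Lemma~\ref{lem:smallsnew}.
\end{enumerate}
Without (i)--(iv), the $n^{\omega}\B(D/n+\log n)(\log n)^2$ bound is not established.
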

As an example consider when $F=I_m$ in~(\ref{eq:introrel}).  Then
$\Rel(M,I)$ is equal to ${\mathcal L}(M)$ and thus the problem of
computing the Hermite basis of $\Rel(M,I)$ is precisely that solved
by Theorem~\ref{thm:ione}.  As a second example, if $M \in \Z^{m \times m}$ 
is a diagonal matrix of arbitrary nonzero moduli,
then the  lattice of solutions to the linear diophantine system
$\vec{x} A = \vec{b} \bmod M$ is given by the integer relations lattice
$$\Rel \left (M ,\left[ \begin{array}{c}
- \vec b \\ A\end{array} \right]\right ).$$ 
The Hermite basis of this lattice gives a compact presentation of
the general solution to the system.  Intersections of
integer lattices can also be be conveniently described by integer
relations lattices.

\subsection{Historical background}

The last few decades has seen considerable progress in reducing the
cost of computations on polynomial matrices over $\K[x]$.  Over
$\K[x]$ the complexity counts the number of required field operations
from $\K$, and instead of $\log ||M||$ for integer matrices the
measure of size is a bound on the degrees of entries in the input
matrix. For polynomial matrices  there are deterministic reductions
to matrix multiplication for the problems linear
solving~\cite{GuptaSarkarStorjohannValeriote11},
nullspace/kernel~\cite{ZhouLabahnStorjohann2012}, column
bases~\cite{ZhouLabahn13}, minimal or order basis~\cite{ZhouLabahn09},
minimal interpolation bases~\cite{JeannerodNeigerSchostVillard17},
determinant and Hermite normal forms~\cite{LabahnNeigerZhou17}.  In
the case of polynomial Smith forms there is a Las Vegas randomized
algorithm~\cite{Storjohann03a}.

Effective computation of Hermite normal forms of integer and
polynomial matrices has a long history. Procedures developed during
the 1970--1990's produced algorithms that were first provably
polynomial time~\cite{KannanBachem}. In the case of integer matrices
a series of papers~\cite{ChouCollins,DomichKannanTrotter, Iliopoulos89:1,
HafnerMcCurley}  reduced the  complexity to  $(n^{4} \log
||M||)^{1+o(1)}$ bit operations.  Matrix multiplication was
incorporated later~\cite{StorjohannLabahnISSAC96,thesis} to get
algorithms having a worst case complexity $(n^{\omega+1} \log
||M||)^{1+o(1)}$.  These algorithms, although presented for integer
matrices, could also be modified to work for polynomial matrices
having similar complexities with size measured by degrees.  Later
a series of techniques based on heuristic methods, for example found
in~\cite{MicciancioWarinschi01, PernetStein10, PauderisStorjohann13,
LiuPan19}, were presented which  reduced this complexity in some
important cases. However, these heuristic algorithms also require
strong assumptions, for example, that there only be a small number
of non-trivial ($\neq 1$) late diagonal entries of the Hermite form.
Most recently, a Las Vegas randomized 
algorithm~\cite{BirmpilisLabahnStorjohann23b} was given having complexity
$(n^3 \log ||M||)^{1 + o(1)}$ for any nonsingular $M$.

Relation lattices $\Rel(M,F)$ have mostly previously appeared for
$M$ and $F$ being matrices of polynomials $\K[x]$. For example,
Pad\'e approximants and more generally Hermite-Pad\'e approximants
are included when $M = x^N$, while simultaneous Pad\'e approximants
are included when $M$ is a diagonal matrix of powers of $x$. Rational
interpolants are included when $M$ is a diagonal matrix of linear
factors describing  interpolation conditions. Finding bases for the
$\K[x]$ modules is useful as it describes all possible approximants
or interpolants for a given rational expression. This includes, for
example,  descriptions of the well-known block structures of Pad\'e
and multi-point Pad\'e approximation tables~\cite{BeckermannLabahn97}.
Fast algorithms for basis computation in these cases include those
in~\cite{BeckermannLabahn94} and~\cite{NeigerVu17}.

\subsection{Our approach and an outline of the paper} \mylabel{sec:approach} 

We view the problem of computing Hermite forms in terms of finding
{\em minimal denominators} of matrices of rational numbers.  If $H$
is the Hermite form of a nonsingular integer matrix $M$ then $HM^{-1}$
is unimodular over $\Z$.  It follows that $\det H$ is minimal in
absolute value among all nonsingular integer matrices that clear
the denominators of $M^{-1}$ upon premultiplication.  Examples and
some basic properties of minimal denominators can be found in
\cite[Section~3.1]{BirmpilisLabahnStorjohann23b}.

Instead of working with $M^{-1}$ explicitly we use a compact
approximation of the form $F S^{-1}$, where $F$ is an integer matrix
and $S$ is the Smith form of $M$.  The pair $(S,F)$ is called a
{\em Smith massager} for $M$, and was introduced in
\cite[Definition~1]{BirmpilisLabahnStorjohann21} for efficiently
computing Smith forms.  Here we use the fact that $H$ is the Hermite
basis of $\Rel(S,F)$.  Section~\ref{ssec:sm} gives the background
and results on Smith  massagers needed in this paper.

A Smith massager $(S,F)$ has two important properties. First, $S$
and $F$ are right coprime as integer matrices, which is equivalent
to having $\det S = \det H$, where $H$ is the Hermite basis of
$\Rel(S,F)$.  Second, because $S$ is diagonal its modulus action
in the equation $H F = \boldz \bmod S$  is decoupled with respect
to the columns, which means that we can assume without loss of
generality that $(S,F)$ is a \emph{reduced} Smith massager, with
the entries in each column of $F$ having  magnitude strictly less
than the corresponding diagonal entries of $S$. These properties
ensure that the total size in bits required to write down $(S,F)$
is bounded in the worst case by that required to write down the
Hermite basis $H$ of $\Rel(S,F)$.

Our main result  is a recursive algorithm to compute the Hermite
basis corresponding to a Smith massager. During the course of the
algorithm integer relations lattices with a description $\Rel(M,G)$
can arise that are not necessarily in the form that we require.
In particular, $M$ may have more rows than columns, $M$ may not be
in Smith form, and $M$ and $G$ may not be coprime.  In order to
handle such an arbitrary input we show how to construct from $(M,G)$
a Smith massager $(S,F)$ for the Hermite basis of $\Rel(M,G)$.  The
construction of $(S,F)$ uses a sequence of transformations:
\begin{equation}\mylabel{eq:tf}
 (M,G)
\stackrel{1}{\rightarrow} \left (\left [ \begin{array}{c} M_1 \\\hline M_2 \end{array} \right ], G \right )
\stackrel{2}{\rightarrow} \left (\left [ \begin{array}{c} S_1 \\\hline M_3 \end{array} \right ], G_1\right)
\stackrel{3}{\rightarrow} (T_1, G_1)
\stackrel{4}{\rightarrow} (S_2, G_2) 
\stackrel{5}{\rightarrow} (K, C)
\stackrel{6}{\rightarrow} (S, F).
\end{equation}
Transformations 1--6 achieve, in turn, that $M_1$ is a nonsingular
submatrix of $M$, $S_1$ is the Smith form of $M_1$, $T_1$ is the
Hermite basis of ${\mathcal L}(S_1) + {\mathcal L}(M_3)$ (the lattice
generated by the rows of $S_1$ and $M_3$), $S_2$ is the Smith form
of $T_1$, $(K,C)$ has inputs which are coprime as integer matrices,
and $S$ is the Smith form of $K$.  Section~\ref{sec:relations}
explains in detail these transformations and establishes their
correctness, that is, that the transformed input generates the same
lattice of integer relations.

The transformations mentioned above 
imply that
computing a Hermite basis for $\Rel(M,G)$ reduces to finding a
Hermite basis for  $\Rel(S,F)$ where $(S,F)$ is a Smith massager.
The main idea of our recursive algorithm for computing the Hermite
basis of $\Rel(S,F)$ is to split  into two subproblems based on the
columns of the output basis $H$.  The algorithm uses a parameter
$m$ that is an upper bound on the number of nontrivial columns (with
diagonal entry $>1$) in the Hermite basis being computed.  If $H$
is $n \times n$ then at the top level of the recursion we have
$m=n$.  For any partition $m=m_1+m_2$ we can uniquely factor
$H=H_2H_1$ as
\begin{equation}\mylabel{eqn:subprobs}
H = H_2 H_1  = 
{\left [ \begin{matrix} ~ I_{m_1} & H_{12}~ \\
 & \bar{H}_2 ~\end{matrix} \right ]}
{
\left [ \begin{matrix} ~\bar{H}_1 & \\
 &  I_{m_2}  \end{matrix} \right ]} = \left [ \begin{matrix} ~ \bar{H}_1 & H_{12} \\
 & \bar{H}_2 \end{matrix} \right]. 
 \end{equation}
Combining these two factors into a single Hermite basis is free.
The difficulty in this case is defining the two subproblems.  In
Section~\ref{sec:alg} we show how to use a subset of the transformations
in~(\ref{eq:tf}) to efficiently construct Smith massagers for $H_1$
and $H_2$, resulting in an algorithm for computing $H.$

The computations providing the biggest challenge for finding Smith
massagers for each of the two subproblems of our algorithm are
transformations~3 and~5.
For transformation~3 note that the input matrix
\begin{equation} \mylabel{eq:toherm1}
  \left [ \begin{array}{c}  S_1 \\\hline M_3 \end{array} \right ] 
\mbox{~~has Hermite form~~}
 \left [ \begin{array}{c}  T_1 \\\hline \phantom{\boldz} \end{array} \right ],
\end{equation}
where $T_1$ is the Hermite basis of ${\mathcal L}(S_1) + {\mathcal
L}(M_3)$, as required.  It turns out that the presence of $S_1$
means that $T_1$ can be recovered by computing the row \emph{Howell
form} of this input matrix over $\Z/(s)$, with $s$ being the largest
invariant factor of $S_1$.  A fast algorithm exists for the Howell
form, and working modulo $s$ can already give an improved bit
complexity compared to using a ``modulo determinant'' Hermite form
algorithm, which for this input matrix could work modulo $\det S_1$.
For transformation~5 a similar approach also works. If $T_2$ is the
Hermite basis of ${\mathcal L}(S_2) + {\mathcal L}(G_2)$, then
\begin{equation} \mylabel{eq:toherm2}
  \left [ \begin{array}{ccc}  T_2 & & I  \\ G_2 & I &  \\
  S_2 & &  \end{array} \right ]
\mbox{~~has Hermite form with shape~~}
 \left [ \begin{array}{ccc}  T_2 & & \ast\\ 
 & I & C \\
 & & K  \end{array} \right ],
\end{equation} 
and $\Rel(S_2,G_2)=\Rel(K,C)$ with the latter having coprime inputs.
The block $T_2$ which is needed to complete the defintion of the input
matrix in~(\ref{eq:toherm2}) can be computed by transforming 
to Hermite form of an input matrix of the type shown in~(\ref{eq:toherm1}). 
To compute the Hermite form in~(\ref{eq:toherm2}) we can work modulo
the largest invariant factor of $S_2$.   This is described in
Section~\ref{sec:hermmodular} along with the cost of each transformation.
Note that $C$ and $K$ will have good bounds on their total size because
they are 
blocks of the Hermite form an input matrix that has 
determinant equal in absolute value to $\det S_2$.

In the worst case, the largest invariant factor $s$ of a Smith form
$S$ can have $\log s \in \Omega(\log \det S)$, and existing algorithms
to compute the Hermite forms in~(\ref{eq:toherm1}) and~(\ref{eq:toherm2})
are still too costly for our complexity goal.  In Section~\ref{sec:hermbas}
we develop a new {\em slicing} algorithm that more fully exploits
the presence and structure of $S$ and achieves a cost that depends
on the \emph{average} bitlength $O((\log \det S)/m)$, where $m$ is
the dimension of $S$.

In addition to transformations~3 and~5, the main other computational
step needed for finding the Smith massagers for each of the two
subproblems of our algorithm is to transform a modulus to Smith
form.  
In the case of transformation~2 we use an existing algorithm to
compute a Smith massager $(S_1,V_1)$ for $M_1$ and then set $M_3 =
\colmod(M_2V_1,S_1)$ and $G_1 = \colmod(GV_1,S_1)$, where
$\colmod(\cdot,S)$ denotes matrix multiplication modulo a diagonal
matrix $S$. Transformations~4 and~6 work the same way, each requiring
one Smith massager computation and  one $\colmod$ matrix multiplication.
Beyond computing Smith massagers, all of the computational effort
to achieve the transformations discussed so far is integer matrix
multiplication, albeit of a special type.  In Section~\ref{sec:partial}
we develop procedures which use partial linearization  to do these
$\colmod$ matrix multiplications in the required bit complexity.

The total cost of our Hermite basis algorithm depends on the dimension
parameters $n$ (which is fixed) and $m$, and the precision parameter
$\det H$.  The algorithm constructs the subproblems using a
factorization (\ref{eqn:subprobs}) with $m_1=m_2=m/2$, and $(\det
H_1)(\det H_2) = \det H$.  In Section~\ref{sec:runtime} we give a
complexity analysis that shows that this even partitioning of $m$
but output dependent splitting of $\det H$ corresponds to a recursion
tree that is root-heavy.

The details needed for the case where $M$ is rectangular but of
full column rank requires a different  precision measure and is
described in Section~\ref{sec:general}. In Section~\ref{sec:examples}
we give a number of examples solved by our work. This includes
solving a version of the multivariable Chinese remainder problem and
efficient computation of the Hermite form for a number of interesting
cases.  The paper  ends with a conclusion and topics for future
research.

\subsection{Cost model}  \mylabel{ssec:cost}

The number of bits in the binary representation of an integer $a$ is
$$
\length(a) := \left \{ \begin{array}{ll} 1 & \hbox{if $a=0$} \\
                                  1 +  \lfloor \log_2 |a| \rfloor
    & \hbox{otherwise}
\end{array} \right . .
$$
We will use the fact that $|a| < 2^{\length(a)}$.  Now let $A \in \Z^{n
\times m}$. We define $\length(A) := \length(||A||)$, the bitlength
of the largest entry in absolute value.  Cost estimates will often
be expressed in terms of a bound $D$ for the sum of the bitlengths
of the columns of $A$.  Note that if $A$ is the zero matrix then
we can take $D=m$. If $A$ has entries in each column of magnitude
less than the corresponding diagonal entries of a nonsingular
diagonal matrix $S \in \Z_{\geq 0}^{m \times m}$ then we can take
$D = m + \log_2 \det S$.

Our cost estimates are given in terms of a fixed $2 < \omega \leq
3$ such that two $n \times n$ matrices can be multiplied together
in $O(n^{\omega})$ operations from a commutative ring using only
the operations $\{+, - \times \}$.
The standard algorithm for matrix multiplication  has $\omega=3$,
while the current best known asymptotic upper bound for $\omega$
allows $\omega < 2.37286$~\cite{AlmanWilliams2021}.

Following~\cite[Section~8.3]{vonzurGathenGerhard}, we give cost
estimates using a nondecreasing function  $\M(d) : \Real_{\geq 0}
\rightarrow \Real_{\geq 0}$ together with an algorithm that can
multiply two integers with magnitude bounded by $a\geq 1$ using at
most $\M(\log a)$ bit operations.  We assume $\M(d) = 1$ for  all
$0\leq d \leq 1$.  We make the following assumptions in order to
simplify cost estimates in the analysis of our algorithms.
\begin{itemize}
\item  $\M(d_1) + \M(d_2) \leq \M(d_1 + d_2)$ for all $d_1,d_2 \geq 1$ (superlinearity).
\item $\M(d_1d_2) \leq \M(d_1)\M(d_2)$ for all $d_1,d_2 \geq 0$ (submultiplicativity).
\item $\M(d) \in O(d^{\omega -1 -\epsilon_0})$ for some $0 < \epsilon_0 < \omega-2$. 
\end{itemize}
The last assumption states that if fast matrix multiplication techniques are used, 
then fast integer multiplication should be used as well.
We use $\B(d)$ to bound the cost of integer gcd-related computations
such as the extended euclidean algorithm.  We can always take $\B(d)
= O(\M(d) \log d)$.  The assumptions stated above for $\M$ also apply to $\B$.

\section{Preliminaries}\mylabel{sec:prelim}

In this section we define notation and basic concepts, and recall
from the literature some computational subroutines that we need.

\subsection{Basic notation and concepts}

We identify the residue class ring $\Z/(s)$ with the set
$\{0,1,\ldots,s-1\}$.  Let $S \in \Z_{\geq 0}^{m \times m}$ be diagonal
and $A \in \Z^{\ast \times m}$.  By $\colmod(A,S)$ we mean the
matrix obtained from $A$ by replacing the entries in each column
with their residues modulo the corresponding diagonal entries in $S$.
If $A=\colmod(A,S)$ then we say that \emph{$A$ is reduced
column-modulo $S$}.  The notations $\rowmod(\ast,S)$ and \emph{reduced
row-modulo $S$} are analogous, but for rows.

Let $A$ be a full column rank integer matrix.  The set ${\mathcal
L}(A)$  of all $\Z$-linear combinations of rows of $A$ forms a
lattice.  By \emph{basis of ${\mathcal L}(A)$} we mean a nonsingular
integer matrix $P$ with ${\mathcal L}(P)= {\mathcal L}(A)$.  We
will often simply say \emph{basis of $A$} to mean basis of ${\mathcal
L}(A)$.  By \emph{Hermite basis of $A$} we mean the basis of $A$
that is in Hermite form.  Note that if $A$ is square then the Hermite
basis of $A$ and the Hermite form of $A$ are the same.  Otherwise,
the Hermite basis of $A$ is simply the submatrix of the Hermite
form of $A$ consisting of the nonzero rows.

Let $H \in \Z^{n \times n}$ be a Hermite basis.  We say that \emph{$H$
is index $(k,m)$} if it has the shape
\begin{equation} \mylabel{eq:shape}
H = \left [ \begin{array}{ccc} I_k & \ast  &  \\
 & \bar{H} &  \\
 & &  I_{n-k-m} \end{array} \right ] \in \Z^{n \times n},
\end{equation}
where $\bar{H}$ has $m$ columns, and where integers $k$ and $m$
satisfy $0 \leq k \leq k+m \leq n$.  Note that every Hermite basis
of dimension $n$ is index $(0,n)$ since no additional structure is
imposed.

Let $M$ and $F$ be integer matrices with the same column dimension.  If
\begin{equation} \mylabel{eq:inMF} 
\left[ \begin{array}{c} M \\ F \end{array}
\right] \end{equation}
has full column rank, then we say that \emph{$M$ and $F$ are coprime}
if the only common integer matrix right factor is unimodular.  A
necessary and sufficient condition for $M$ and $F$ to be coprime
is that the Hermite basis of ${\mathcal L}(M) + {\mathcal L}(F)$,
which is the Hermite basis of~(\ref{eq:inMF}), is equal to $I$.

\subsection{Smith massagers} \mylabel{ssec:sm}

As in~\cite{BirmpilisLabahnStorjohann23b} we view the Hermite form
$H$ of a nonsingular matrix $M$ as the unique minimal denominator
of its rational inverse $M^{-1}$.  The key tool used to work
efficiently in this setting is the notion of a Smith massager,
introduced in~\cite{BirmpilisLabahnStorjohann20,
BirmpilisLabahnStorjohann21} for computing the Smith form and
multiplier matrices for a nonsingular matrix.

\begin{definition}[\protect{\cite[Definition~1]{BirmpilisLabahnStorjohann21}}]
 \mylabel{def:sm}
Let $M\in\Znn$ be nonsingular with Smith form $S$. 
The pair of matrices $(S, F)$ is a  \emph{Smith massager} for $M$ if 
$F \in \Znn$ and
\begin{itemize}
\item[(i)]
$MF \equiv 0 \bmod S$, and
\item[(ii)] there exists a matrix $W \in\Znn$ such that
$W F \equiv I_n \bmod S.$
\end{itemize}
\end{definition}
Note that if $(S,F)$ is a Smith massager for a matrix $M$, then we
can replace $F$ with $\colmod(F,S)$.  In this case we say that $(S,
F)$ is {\em reduced}.

\begin{remark}
Note that condition (ii) in Definition~\ref{def:sm} is equivalent to 
$S$ and $F$ being coprime.
\end{remark}
\begin{remark} A Smith massager was
originally defined to be the matrix $F$ alone. For convenience,
since the algorithm we use to compute a Smith massager will
also compute $S$, we define the Smith massager to be the  pair
$(S,F)$.  
\end{remark}
The key feature of a Smith massager that we exploit here is the following.
\begin{lemma} \label{lem:key}
Let $M \in \Z^{n \times n}$ be nonsingular with Smith form $S$.
Then any  Smith massager $(S, F)$ for $M$ has the property that
$FS^{-1}$ has minimal denominator $M$.
\end{lemma}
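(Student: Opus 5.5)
We need to show that $M$ is a minimal denominator of $FS^{-1}$. Concretely, $MFS^{-1}$ is integral, and every nonsingular integer $P$ with $PFS^{-1}$ integral satisfies $P = QM$ for some integer $Q$; equivalently, $\mathcal L(M) = \{p \in \Z^{1\times n} \mid pFS^{-1} \in \Z^{1\times n}\}$. Here ``minimal denominator'' is understood as in \cite[Section~3.1]{BirmpilisLabahnStorjohann23b}: determinant minimal in absolute value, unique up to left unimodular multiplication. The plan is to prove this lattice equality and then compare determinants.

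\textbf{Step 1 (integrality).} Condition (i) of Definition~\ref{def:sm}, $MF \equiv 0 \bmod S$, says exactly that $MF = QS$ for some integer $Q$. Hence $MFS^{-1} = Q$ is integral, and $\mathcal L(M) \subseteq \Rel(S,F)$.

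\textbf{Step 2 (determinant of the relation lattice).} Let $P$ be any basis of $\Rel(S,F)$, i.e.\ a nonsingular integer matrix with $PF = RS$ for integer $R$. By condition (ii), i.e.\ the remark that $S$ and $F$ are coprime, there exist integer $W, Z$ with $WF + ZS = I$. I would show that $|\det P| \ge \det S$. Multiplying $WF + ZS = I$ on the left by $P$ gives
\[
P = PWF + PZS = \bigl(PW\,F S^{-1} + PZ\bigr) S .
\]
Since $PFS^{-1} = R$ is integral, one wants $P\,W F S^{-1}$ to be integral as well; but $P$ and $W$ do not commute, so this route does not directly work. Instead I would argue via the fact that the Hermite basis $H$ of $\Rel(S,F)$ satisfies $\det H = \det S$ when $S$ and $F$ are coprime. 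This fact is stated in the introduction, and I would prove it as follows. The lattice $\Rel(S,F)$ is the kernel of the group map
\[
\phi : \Z^{1\times n} \to \Z^{1\times n}/\mathcal L(S), \qquad p \mapsto pF \bmod S .
\]
Coprimality ($WF \equiv I \bmod S$) makes $\phi$ surjective: any target $u$ is hit by $uW$, since $(uW)F \equiv u \bmod S$. Therefore the index $[\Z^{1\times n} : \Rel(S,F)]$ equals $|\Z^{1\times n}/\mathcal L(S)| = \det S$, and so $|\det P| = \det S$ for every basis $P$.

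\textbf{Step 3 (conclusion).} By Step 1, $\mathcal L(M) \subseteq \Rel(S,F)$, and $|\det M| = \det S$ because $S$ is the Smith form of $M$. Two nested full-rank lattices with the same index must be equal, so $\mathcal L(M) = \Rel(S,F)$. Hence $M$ is a basis of the denominator lattice, and therefore it is a minimal denominator of $FS^{-1}$.

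The main obstacle is Step 2, specifically turning condition (ii) into the index count. The surjectivity argument via $uW$ handles it cleanly, so the remaining steps are routine.
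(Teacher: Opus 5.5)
Your proof is correct, but it takes a different route from the paper. The paper's proof is a one-line citation of \cite[Theorem~4]{BirmpilisLabahnStorjohann21}. That theorem asserts that $\{v \mid vM^{-1}\in\Z^{1\times n}\}$, which is just $\mathcal L(M)$, coincides with $\{v \mid vFS^{-1}\in\Z^{1\times n}\}$, and minimality of $M$ follows at once.

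You prove that lattice identity from scratch:
\begin{itemize}
\item condition~(i) gives $\mathcal L(M)\subseteq\Rel(S,F)$;
\item condition~(ii) makes $p\mapsto pF \bmod \mathcal L(S)$ surjective, so $[\Z^{1\times n}:\Rel(S,F)]=\det S$;
\item $|\det M|=\det S$ then forces the two nested lattices to coincide.
\end{itemize}

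The citation keeps the paper short. Your argument is self-contained and elementary, and it shows that the Smith-form hypothesis is used only through $|\det M|=\det S$. It also gives more as a byproduct. The isomorphism $\Z^{1\times n}/\Rel(S,F)\cong\Z^{1\times n}/\mathcal L(S)$ from your surjectivity step works verbatim for a rectangular $F$ coprime to $S$. By uniqueness of invariant factors it directly yields the content of Theorem~\ref{thm:S}, which the paper proves later by a Hermite-form argument through Lemma~\ref{rem:alt}. The abandoned attempt at the start of Step~2 contributes nothing and can be deleted.
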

\begin{proof} 
This follows from \cite[Theorem~4]{BirmpilisLabahnStorjohann21}
where it is  shown that the lattices $$
\{ v \in \Z^{1 \times n} \mid vM^{-1} \in \Z^{1 \times n} \} \mbox{
and  }  \{ v \in \Z^{1\times n} \mid vFS^{-1} \in \Z^{1 \times n}
\}
$$
are identical. 
\end{proof}
As currently defined, a Smith massager $(S,F)$ requires $F$ to be
square.  However, the initial columns of $F$ corresponding to trivial
diagonal elements do not play a role.  In particular,  if $(S,F)$
is reduced and $S$ has $n-m$ trivial diagonals, then we can decompose
$S$ and $F$ as $S = \diag(I_{n-m},\hat{S})$ and $F = \left [
\begin{array}{c|c}\boldz & \hat{F}\end{array} \right ]$.  The
lattices
$$
\{ v \in \Z^{1 \times n} \mid vF S^{-1} \in \Z^{1 \times n} \}  \mbox{ and }
\{ v \in \Z^{1 \times n} \mid v \hat{F} \hat{S}^{-1} \in \Z^{1 \times m} \} 
$$
are identical and hence have the same Hermite basis.  

\begin{example}\mylabel{ex:1}
Consider the matrix $$M = \left[ \begin{array}{ccc}  1 & 2 & 3 \\
4 & 5 & 6 \\ 7 & 8 & 1 \end{array} \right]$$ with Smith form $\diag(1,
1, 24)$.  A reduced Smith massager $(S,F)$ for $M$ is given by
$$S = \left [ \begin{array}{c}24 \end{array} \right ] \mbox{~and~}
F = \left [ \begin{array}{c}
19\\ 10\\3 \end{array} \right ].$$  
Indeed, the conditions of Definition~\ref{def:sm} are satisfied since 
$$
\left[ \begin{array}{ccc}  1 & 2 & 3 \\ 4 & 5 & 6 \\ 7 & 8 & 1
\end{array} \right] \left[ \begin{array}{c} 19 \\ 10 \\3 \end{array}
\right] =  \left[ \begin{array}{c} 48 \\ 144 \\216 \end{array} \right]
\equiv \boldz \bmod 24
\mbox{~~and~~}  
{\mathcal L}(\left  [ \begin{array}{c} 24\end{array} \right ])  +
{\mathcal L}\left(  \left [ \begin{array}{c} 19 \\ 10 \\3
\end{array} \right ] \right) = 
{\mathcal L}(\left [ \begin{array}{c} 1\end{array} \right ]).
$$
The Hermite basis of the lattice $\{ v \in \Z^{1 \times 3} \mid 
v F \equiv 0 \bmod 24 \}$ is
$$
H  = \left[ \begin{array}{ccc} 1& 2& 3\\ & 3 & 6 \\ & & 8 \end{array} \right] ,
$$
which is equal to the Hermite form of $M$.
\end{example}
A reduced Smith massager $(S,F)$ for a Hermite basis $H$ does not
require more space to store than $H$ itself in the worst case.  In
particular, if $H$ has dimension $m$ then it can be represented
using $O(m \log \det H)$ bits by storing only the nontrivial
columns~\cite[Lemma~13]{BirmpilisLabahnStorjohann23b}.  Since $\det
S=\det H$ the same bound holds for representing $(S,F)$.

\begin{theorem}[\cite{BirmpilisLabahnStorjohann20,BirmpilisLabahnStorjohann21}]
\mylabel{thm:costSM} There exists a randomized algorithm
\texttt{SmithMassager}$[\epsilon](M)$ that takes as input a nonsingular
$M \in \Z^{m \times m}$, and returns as output a reduced Smith
massager $(S,F)$ for $M$ with cost $O(m^\omega\, \B(D/m + \log
m)(\log m)^2\log(1/\epsilon))$ bit operations, where $D$ is a bound
for the sum of the bitlengths of the columns of $M$.  The algorithm
is probabilistic of the Las Vegas type: it either returns a correct
result, or reports \textsc{fail} with probability at most $\epsilon$.
\end{theorem}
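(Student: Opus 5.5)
The statement is Theorem~\ref{thm:costSM}, which is cited from \cite{BirmpilisLabahnStorjohann20,BirmpilisLabahnStorjohann21}. So the plan is to assemble it from those works rather than to reprove the Smith massager algorithm from scratch.

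The first step uses \cite{BirmpilisLabahnStorjohann20}. There a Las Vegas algorithm computes the Smith form $S$ of a nonsingular $M\in\Z^{m\times m}$. Along the way it produces an integer matrix $F$ with $MF\equiv 0 \bmod S$. It also certifies that $F$ is coprime to $S$, i.e., condition~(ii) of Definition~\ref{def:sm}. The cost is $O(m^\omega\,\B(D/m+\log m)(\log m)^2)$ bit operations, and the failure probability is at most $1/2$. I would check two points in that analysis. First, the input-size parameter is the column-bitlength sum $D$, not $m\log||M||$. The cited analysis achieves this through partial linearization and the average-bitlength bounds for lifting/solving with column-skewed inputs \cite{BirmpilisLabahnStorjohann19}. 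Second, every randomized step is verifiable. An incorrect candidate $S$ or $F$ is detected by testing $MF\equiv 0\bmod S$ with a $\colmod$ multiplication. We also test that $\det S=|\det M|$, or equivalently that the Hermite basis of $\mathcal L(S)+\mathcal L(F)$ is $I$, via the witness $W$ with $WF\equiv I\bmod S$ that \cite{BirmpilisLabahnStorjohann21} constructs. This gives the Las Vegas property.

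The second step makes the output reduced. We replace $F$ by $\colmod(F,S)$. This preserves both conditions of Definition~\ref{def:sm}, since they are stated modulo $S$ columnwise. The cost is dominated by the previous step, because $F$ has column bitlengths bounded by $\log_2$ of the diagonal entries of $S$, and hence total size $O(m+\log\det S)=O(D)$ by Hadamard.

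The third step reduces the error probability to $\epsilon$. We run the base algorithm, with failure probability at most $1/2$, independently $\lceil\log_2(1/\epsilon)\rceil$ times. We return the first non-\textsc{fail} result, which is correct by the Las Vegas guarantee. The probability that all runs fail is at most $\epsilon$, which gives the factor $\log(1/\epsilon)$ in the cost.

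The main obstacle is the first step: confirming that the cited complexity truly depends on $D/m$ rather than $\log||M||$, and that the verification step fits within the same budget. If the cited bound were only stated with $\log||M||$, I would first apply the standard column-scaling/partial-linearization reduction. This turns $M$ into a matrix of dimension $O(m)$ with entries of bitlength $O(D/m+\log m)$ and the same nontrivial Smith invariants. I would then map the massager back by a $\colmod$ multiplication.
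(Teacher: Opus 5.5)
Your proposal is correct and is essentially the paper's proof: take the Las Vegas Smith massager algorithm with failure probability $1/2$, repeat it $\lceil\log_2(1/\epsilon)\rceil$ times, and use partial linearization to replace $\log||M||$ by the average column bitlength $D/m$. Two differences are worth noting: the paper treats your ``fallback'' as the actual route, since \cite[Theorem~19]{BirmpilisLabahnStorjohann21} is stated with $\B(\log||M||+\log m)$ and the $D/m$ form comes from \cite[Theorem~36]{BirmpilisLabahnStorjohann21}; and two of your side remarks are slightly off, though harmless because the Las Vegas guarantee is inherited from the citation — checking $\det S=|\det M|$ is not equivalent to coprimality of $S$ and $F$ (a certificate needs both, together with $MF\equiv 0 \bmod S$), and Hadamard gives $\log_2\det S\le D+(m/2)\log_2 m$ rather than $O(D)$.
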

\begin{proof}
The statement of the theorem is slightly modified
from~\cite[Theorem~19]{BirmpilisLabahnStorjohann21}, which was
specialized to the case $\epsilon = 1/2$, and which gave a cost
estimate with the factor $\B(\log ||M|| + \log m)$.  To obtain the
result in the current theorem we can iterate that algorithm
$\log_2(1/\epsilon)$ times to reduce the chance of returning
\textsc{fail} to $\epsilon$, and use partial
linearization~\cite[Theorem~36]{BirmpilisLabahnStorjohann21} to
replace the  maximum column bitlength $\log ||M||$ with the average
column bitlength $D/m$.
\end{proof}

\subsection{Computation of some  Hermite forms using modular arithmetic}\mylabel{ssec:how}

It is well known that one can compute the Hermite form of a nonsingular
matrix $A \in \Z^{n\times n}$ by working modulo $|\det A|$ to control
intermediate expression swell (see, for example,~\cite{DomichKannanTrotter,
Iliopoulos89:1, HafnerMcCurley, StorjohannLabahnISSAC96}).  In some
cases one can work modulo a divisor $s\in \Z_{>0}$ of $\det A$ by
passing over the Howell form~\cite{Howell}, a natural generalization
to $\Z/(s)$ of the Hermite form over $\Z$.  The following result
applies more generally to a full column rank input matrix.

\begin{lemma}[{\cite[Part~2 of Lemma~7]{BiasseFiekerHofmann17}}]
\mylabel{lem:fieker} Let $A \in \Z^{n \times m}$ and $s \in \Z_{>0}$
be such that $sI  \subseteq {\mathcal L}(A)$. Then the canonical lifting
of the Howell form of $A$ over $\Z/(s^2)$ yields the Hermite form
of $A$ over $\Z$.
\end{lemma}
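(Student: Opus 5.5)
Since $sI \subseteq \mathcal L(A)$, we have $s\Z^{1\times m}\subseteq \mathcal L(A)$, so $\mathcal L(A)$ is a full rank lattice containing $s\Z^m$. The plan is to work entirely with the correspondence between such lattices and submodules of $(\Z/(s^2))^{1\times m}$ (or of $(\Z/(s))^{1\times m}$), and to compare two canonical objects: the Hermite basis $H$ of $\mathcal L(A)$ and the Howell form $R$ of the image $\bar A$ of $A$ over $\Z/(s^2)$. First, because $s^2\Z^m\subseteq s\Z^m\subseteq \mathcal L(A)$, the lattice $\mathcal L(A)$ is exactly the full preimage of the submodule $\bar{\mathcal L} := \mathcal L(A)/s^2\Z^m$ of $(\Z/(s^2))^m$. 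This submodule is the row span of $\bar A$ over $\Z/(s^2)$. Hence the row span over $\Z/(s^2)$ of $R$ equals $\bar{\mathcal L}$, and the preimage of the span of $R$ is $\mathcal L(A)$. In particular, $\mathcal L(A)=\mathcal L\left(\left[\begin{smallmatrix} \tilde R \\ s^2 I\end{smallmatrix}\right]\right)$, where $\tilde R$ is the canonical lift of $R$ (entries in $\{0,\dots,s^2-1\}$).

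Next I will show that the nonzero rows of $\tilde R$ already form the Hermite basis $H$. The diagonal entries $h_j$ of $H$ all divide $s$. This holds because $se_j\in\mathcal L(A)$ and the Hermite basis is triangular, which forces $h_j\mid s$. All entries of $H$ are nonnegative and smaller than $h_j\le s<s^2$, so the reduction $\bar H$ of $H$ modulo $s^2$ is a faithful copy of $H$. The task is then to check that $\bar H$ satisfies the Howell form axioms over $\Z/(s^2)$. These are: echelon shape, leading entries that are positive divisors of $s^2$, entries above a leading entry reduced modulo it, and the Howell property. The first three follow from the Hermite conditions together with $h_j\mid s\mid s^2$ and $h_j\ne 0 \bmod s^2$ (since $h_j\le s<s^2$). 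Note that $\bar H$ has exactly $m$ nonzero rows and no zero pivots, because every $h_j$ is nonzero mod $s^2$. Once $\bar H$ is known to be a Howell form spanning $\bar{\mathcal L}$, the uniqueness of the Howell form gives $R=\bar H$, and the canonical lifting returns $H$.

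The main obstacle is the Howell property. It requires that for each $k$, every vector of $\bar{\mathcal L}$ whose first $k$ entries vanish mod $s^2$ lies in the span of the rows of $\bar H$ with pivot beyond column $k$. My approach is to take such a $\bar v$ and lift it to $v\in\mathcal L(A)$ whose first $k$ coordinates are multiples of $s^2$. Since $s^2e_j = (s^2/h_j)\cdot h_je_j$ and $s^2 e_j\in s\Z^m\subseteq\mathcal L(A)$, I can subtract integer multiples of $s^2e_1,\dots,s^2e_k$, which is possible inside $\mathcal L(A)$. This produces $v'\in\mathcal L(A)$ with first $k$ coordinates zero and $v'\equiv v \pmod{s^2}$. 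Because $H$ is triangular, $v'$ is an integer combination of rows $k+1,\dots,m$ of $H$ alone, and reducing modulo $s^2$ gives the property. The key point to verify carefully is that the containment $s\Z^m\subseteq\mathcal L(A)$, and not merely $s^2\Z^m$, is what guarantees the pivots are nonzero mod $s^2$. This is exactly why the statement requires working modulo $s^2$ rather than modulo $s$: modulo $s$, a diagonal entry $h_j=s$ would vanish.
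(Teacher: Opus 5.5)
The paper gives no proof of this lemma. It quotes it from Biasse, Fieker and Hofmann and uses it only as a black box in Theorem~\ref{thm:hermcomp}, so there is no argument in the paper to compare yours with. What matters is that your proposal is a correct, self-contained proof.

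The two facts doing the work are right:
\begin{itemize}
\item Triangularity of $H$ together with $se_j\in\mathcal L(A)$ forces $h_j\mid s$. So every entry of $H$ lies in $[0,s]$, and reduction modulo $s^2$ neither kills a pivot nor changes an entry.
\item Every integer lift of an element of $\bar{\mathcal L}$ lies in $\mathcal L(A)$, and $s^2e_i=s\cdot(se_i)\in\mathcal L(A)$. So you can clear the first $k$ coordinates exactly over $\Z$, and triangularity then shows only rows $k+1,\dots,m$ of $H$ are used. That is precisely the Howell property.
\end{itemize}
Uniqueness of the Howell form then identifies $H \bmod s^2$ with the Howell form of $A$ over $\Z/(s^2)$, and the canonical lift returns $H$; zero rows match the zero rows of the Hermite form of $A$.

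Verifying that $H \bmod s^2$ satisfies the Howell axioms, rather than starting from the Howell form $R$, is the efficient direction. Starting from $R$, you would have to show separately that its lift has a pivot in every column and generates $\mathcal L(A)$ without the extra rows $s^2I$. Your closing remark also correctly explains why the modulus is $s^2$ and not $s$.

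Two small corrections:
\begin{itemize}
\item The inequality $h_j\le s<s^2$ needs $s\ge 2$. For $s=1$ the ring $\Z/(1)$ is zero, its Howell form is the zero matrix, and the statement as written is literally false, since the Hermite basis is $I$. The hypothesis $sI\subseteq\mathcal L(A)$ passes to every multiple of $s$, so either assume $s\ge 2$ or replace $s$ by $2s$ in that case.
\item The clause $s^2e_j=(s^2/h_j)\cdot h_je_j$ is a red herring, because $h_je_j$ need not lie in $\mathcal L(A)$. For the Hermite basis with rows $(2,1)$ and $(0,2)$ one may take $s=4$, yet $(2,0)\notin\mathcal L(H)$. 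The justification you actually need, and also give, is $s^2e_j\in s\Z^{1\times m}\subseteq\mathcal L(A)$.
\end{itemize}
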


\begin{theorem} \mylabel{thm:hermcomp} 
Let $A \in \Z^{n \times m}$ and $s \in \Z_{>0}$ be such that $sI
\subseteq {\mathcal L}(A)$.  If $\log ||A|| \in O(\log s)$,
then the Hermite basis $H$ of $A$ can be
computed in $O(nm^{\omega-1}\,\B(\log s))$ bit
operations.
\end{theorem}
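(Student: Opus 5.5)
Our plan is to reduce to Lemma~\ref{lem:fieker} and a fast Howell form computation over $\Z/(N)$ with $N=s^2$. First reduce all entries of $A$ modulo $N$; this does not change the lattice generated together with $sI$. Since $sI\subseteq{\mathcal L}(A)$, we also have $NI\subseteq{\mathcal L}(A)$, so ${\mathcal L}(A)={\mathcal L}(A)+N\Z^{1\times m}$ and nothing is lost. The entries now have bitlength $O(\log s)$, and the hypothesis $\log||A||\in O(\log s)$ means the initial reduction costs only $O(nm\,\M(\log s))$.

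Next, compute the Howell form of the reduced $n\times m$ matrix over the principal ideal ring $\Z/(N)$. We use the known algorithm of Storjohann (thesis / Storjohann--Mulders) for the Howell form. For an $n\times m$ matrix with $n\ge m$, it uses $O(nm^{\omega-1})$ basic operations from $\{+,-,\times\}$ together with gcd-type operations (extended gcd, stabilization/annihilator computations) over $\Z/(N)$. If $n<m$, we first augment $A$ with rows of $sI$ and keep the same bound. Each ring operation on residues modulo $N$ costs $O(\B(\log N))=O(\B(2\log s))$ bit operations. By submultiplicativity and superlinearity this is $O(\B(\log s))$, giving the total $O(nm^{\omega-1}\B(\log s))$.

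Finally, apply Lemma~\ref{lem:fieker}. The canonical lifting of the Howell form over $\Z/(s^2)$ to $\Z$ gives the Hermite form of $A$. The lifting takes residues in $\{0,\ldots,N-1\}$ and, for a zero pivot, replaces the diagonal by $N$ where appropriate. Since $A$ has full column rank, we take its first $m$ nonzero rows to obtain the Hermite basis $H$. This costs $O(m^2\log s)$, which is negligible.

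The main obstacle is the middle step. We must justify that a Howell form over $\Z/(N)$ of a rectangular $n\times m$ matrix really costs $O(nm^{\omega-1})$ ring operations, rather than $O(n^\omega)$ or involving a $\log n$ factor, and that the gcd-like primitives over $\Z/(N)$ each cost $\B(\log N)$. Our first approach would be to process $A$ in $\lceil n/m\rceil$ blocks of $m$ rows. At each stage we stack the current $m\times m$ Howell basis with the next block and compute the Howell form of this $2m\times m$ matrix in $O(m^\omega)$ ring operations using the square-case algorithm. Summing over the blocks gives $O(nm^{\omega-1})$.
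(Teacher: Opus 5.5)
Your proposal is correct and follows the paper's proof. Like the paper, you invoke Lemma~\ref{lem:fieker} to read off $H$ from the Howell form of $A$ over $\Z/(s^2)$, and you compute that Howell form by the Storjohann--Mulders algorithm in $O(nm^{\omega-1})$ basic operations modulo $s^2$, each costing $O(\B(\log s))$; your block-by-block stacking of the current Howell basis with the next $m$ rows is a valid way to justify the rectangular cost bound. (The case $n<m$ that you set aside cannot occur, since $sI\subseteq{\mathcal L}(A)$ already forces $A$ to have full column rank.)
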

\begin{proof}
By Lemma~\ref{lem:fieker} 
the Hermite basis $H$ of $A$ coincides with the first $m$ rows of
the Howell form of $A$ over $\Z/(s^2)$, which can be computed 
in the allotted time~\cite[Section~4]{StorjohannMulders98:1}.
\end{proof}

\begin{example} \mylabel{ex:howell}
Let 
$$M_1 := \left[ \begin{array}{c} A \\ S \end{array} \right]
\mbox{~~and~~} M_2 :=  \left[ \begin{array}{ccc} T & & I \\ B &
 I & \\ S & & \end{array} \right]$$ be integer matrices
with 
$S$ a nonsingular Smith form. Let $s$ be the largest invariant factor of $S$.
Then  $s I = s S^{-1} S = S^\ast S$, with $S^\ast$ an integer
matrix and so $sI \in {\mathcal L}(M_1)$. Similarly
$$
s I =  s {M_2}^{-1} M_2 = \left[ \begin{array}{ccr}  & &  S^\ast \\  &
s I &- B S^\ast    \\ s I  & & - T S^\ast \end{array} \right]
\left[ \begin{array}{ccc} T & & I \\ B & I & \\ S & & \end{array}
\right]
$$
and so $s I \in {\mathcal L}(M_2)$. Thus we can find the Hermite
forms of both $M_1$ and $M_2$ by computing their Howell forms over
$\Z/(s^2)$.  Computing the Hermite forms of matrices with these
shapes are key steps in the Hermite basis algorithm presented in
Section~\ref{sec:alg}.
\end{example}

\section{Bases of integer relations: simplifications and properties } \mylabel{sec:relations}

In this section we show how a given integer relations lattice
$\Rel(M,F)$ can be transformed to a new description that is in some
way simpler, for example with $M$ in Hermite form, or with $M$ and
$F$ being coprime.  These transformations allow us to control the
dimension of the matrices arising during computations along 
with the bitlength of their entries.

\subsection{Integer relations lattices}

Given $M \in \Z^{\ell  \times  m}$ having full column rank and $F
\in \Z^{n \times m}$, the set of integer vectors
\begin{equation} \mylabel{eq:defil}
\Rel(M,F) := \left  \{ p \in \Z^{1 \times n} \mid pF = \boldz \bmod M \right \}
\end{equation}
forms a lattice, which we call the {\em integer relations lattice}
for $F$ with {\em modulus} $M$.  Here the notation $A = \boldz \bmod
M$ stands for ``$A= Q M$ for some integer matrix $Q$,''  that is,
the rows of $A$ are in ${\mathcal L}(M)$.  We have adopted the
notation in~(\ref{eq:defil}) from~\cite{NeigerVu17} where it was
introduced in the context of polynomial matrices.


\begin{remark} The definition of $\Rel(M,F)$ extends naturally to the case when
$M$ and $F$ are over the reals.  For example, $P$ being
a minimal denominator of a rational matrix $R$ is equivalent to 
$P$ being a basis for $\Rel(I,R)$.  In this paper, however, we
stipulate that $M$ and $F$ be integral.
\end{remark}

Lemma~\ref{rem:alt} gives alternate description of $\Rel(M,F)$ that
is useful for Hermite basis computation.

\begin{lemma} \mylabel{rem:alt}  
$\Rel(M,F)$ corresponds to the full column rank input matrix
\begin{equation} \mylabel{eq:alt1}
\left [ \begin{array}{c|c} M &  \\
F & I \end{array} \right ] 
\mbox{~~that has Hermite basis~~}
\left [ \begin{array}{c|c} T & \ast \\
 & H 
 \end{array} \right ] 
\end{equation}
where ${\mathcal L}(T) = {\mathcal L}(M) + {\mathcal L}(F)$ and ${\mathcal L}(H) =\Rel(M,F)$.  
\end{lemma}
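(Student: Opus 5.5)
Let $A$ denote the matrix in~(\ref{eq:alt1}), of dimension $(\ell+n)\times(m+n)$. The plan is to check first that $A$ has full column rank, then to read off the block shape of its Hermite basis, and finally to identify the two diagonal blocks through the lattice ${\mathcal L}(A)$.

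\textbf{Full column rank.} If $Ax=\boldz$ with $x=[x_1;x_2]$, then the first block row gives $Mx_1=\boldz$. Since $M$ has full column rank, $x_1=\boldz$. The second block row then gives $Fx_1+x_2=x_2=\boldz$. Hence $A$ has full column rank and its Hermite basis $B$ is a nonsingular $(m+n)\times(m+n)$ upper triangular matrix.

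\textbf{Block shape of $B$.} Write
$$B=\left[\begin{array}{c|c} T & X\\ \hline \boldz & H\end{array}\right],$$
where $T$ is $m\times m$, $H$ is $n\times n$, and both are in Hermite form.

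\textbf{Identifying $T$.} Take the projection $\pi:{\mathcal L}(A)\to\Z^{1\times m}$ onto the first $m$ coordinates. Applied to the rows of $A$, it gives the rows of $M$ and of $F$, so $\pi({\mathcal L}(A))={\mathcal L}(M)+{\mathcal L}(F)$. Applied to the rows of $B$, it gives the rows of $T$ together with zero rows, so the same image equals ${\mathcal L}(T)$. This proves ${\mathcal L}(T)={\mathcal L}(M)+{\mathcal L}(F)$.

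\textbf{Identifying $H$.} Let $K$ be the kernel of $\pi$ restricted to ${\mathcal L}(A)$. Because $B$ is upper triangular with $T$ nonsingular, a combination $uB$ with $u=[u_1,u_2]$ has first part $u_1T$. This vanishes only when $u_1=\boldz$, so
$$K=\{[\boldz,\,u_2H]\}\cong{\mathcal L}(H).$$
On the other side, an element of ${\mathcal L}(A)$ has the form $[qM+pF,\;p]$ with $q$ and $p$ integer vectors. It lies in $K$ exactly when $pF=-qM$, that is, when $pF=\boldz\bmod M$. Hence the second components of $K$ form precisely $\Rel(M,F)$, and comparing the two descriptions gives ${\mathcal L}(H)=\Rel(M,F)$.

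\textbf{Main obstacle.} The delicate point is the kernel description of ${\mathcal L}(H)$. It relies on the triangular structure and the nonsingularity of $T$ to guarantee that every element of $K$ comes from the lower block rows of $B$ alone. Once that is established, the remaining steps are direct verifications.
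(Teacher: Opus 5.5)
Your proof is correct and follows essentially the same route as the paper. The paper also identifies $T$ from the fact that the leading columns of the Hermite basis depend only on the leading columns of the input, which your projection argument makes precise. It identifies $H$ by comparing combinations $[\,qM+pF \mid p\,]$ of the input rows that vanish in the first block with the rows of $[\,\boldz \mid H\,]$, proving the two inclusions separately. Your kernel formulation handles both inclusions at once. It also makes explicit a step the paper leaves implicit in the reverse inclusion: a lattice element with zero first block lies in ${\mathcal L}([\,\boldz \mid H\,])$ because $T$ is nonsingular.
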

\begin{proof}
That ${\mathcal L}(T) = {\mathcal L}(M) + {\mathcal L}(F)$ follows
from the fact that the leading columns of the Hermite basis depend
only the corresponding leading columns of the input matrix.

To establish that ${\mathcal L}(H) =\Rel(M,F)$ we will
show that ${\mathcal L}(H) \subseteq \Rel(M,F)$
and $\Rel(M,F) \subseteq {\mathcal L}(H)$.
From~(\ref{eq:alt1}) we have that $\left [ \begin{array}{c|c}
\phantom{\boldz} & H \end{array} \right ] = -Q  \left [\begin{array}{c|c}
M & \phantom{\boldz} \end{array} \right] + H \left [ \begin{array}{c|c}
F & I \end{array} \right ]$ for some integer matrix $Q$, and thus
$HF=QM$.  This shows that ${\mathcal L}(H) \subseteq \Rel(M,F)$.
Now let $\bar{H}$ be the Hermite basis of $\Rel(M,F)$.  
Then $\bar{H}F = \bar{Q}M$ for some integer matrix $\bar{Q}$ and 
it follows that $\left [ \begin{array}{c|c} 
\phantom{\boldz} & \bar{H} \end{array} \right ] = -\bar{Q}  \left
[\begin{array}{c|c} M & \phantom{\boldz} \end{array} \right] +
\bar{H} \left [ \begin{array}{c|c} F & I \end{array} \right ]$.
This shows that ${\mathcal L}(\bar{H})$, which is equal to $\Rel(M,F)$, is a subset 
of ${\mathcal L}(H)$.  
\end{proof}

\subsection{Transformations to simplify the inputs of $\Rel(M,F)$} \mylabel{ssec:trans}

In this subsection we describe a number of transformations of an
input $\Rel(M,F)$.  All are based on the following general lemma,
which follows from the definition of a relations lattice.  We remark
that in part~\ref{fact1}, $R$ is allowed to be over $\Q$, provided
that $MR$ and $FR$ remain integral.

\begin{lemma} \mylabel{lem:denequiv} $\Rel(M,F)$ is equal to 
\begin{enumerate}
\item \mylabel{fact3} $\Rel(\bar{M},F)$ for any $\bar{M}$
with ${\mathcal L}(\bar{M}) = {\mathcal L}(M)$.
\item \mylabel{fact4} $\Rel(M,F+QM)$ for any $Q$ over $\Z$.
\item \mylabel{fact1} $\Rel(MR,FR)$ for any nonsingular matrix $R$.
\item \mylabel{factext} $\Rel \Biggl ( \left [ \begin{array}{c|c} 
M' & \\
 & M \end{array} \right ], \left [ \begin{array}{c|c} \boldz & F \end{array} \right ] \Biggr )$
for any full column rank $M'$.
\item $\Rel \Biggl ( M_1 , 
\left [ \begin{array}{c} F \\ M_2 \end{array} \right ]  \Biggr )
\left [ \begin{array}{c} I \\ \boldz \end{array} \right ]$, if $M
=
\left [ \begin{array}{c} M_1 \\ M_2 \end{array} \right]
$
    with $M_1$ having full column rank. \mylabel{factTruncate}
\end{enumerate}
\end{lemma}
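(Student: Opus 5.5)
Each part reduces to the membership test defining $\Rel$: $p\in\Rel(M,F)$ iff $pF\in{\mathcal L}(M)$. So the plan is to take an arbitrary $p\in\Z^{1\times n}$ and show, part by part, that this condition is equivalent to the corresponding condition for the transformed pair. Only part~\ref{fact1} and part~\ref{factTruncate} need any argument beyond restating definitions.

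\emph{Parts~\ref{fact3} and~\ref{fact4}.} Part~\ref{fact3} is immediate, since $pF=\boldz\bmod M$ depends only on ${\mathcal L}(M)$. For part~\ref{fact4}, $p(F+QM)=pF+(pQ)M$ with $pQ$ integral, so $p(F+QM)\in{\mathcal L}(M)$ iff $pF\in{\mathcal L}(M)$.

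\emph{Part~\ref{fact1}.} Because $R$ is nonsingular, right multiplication by $R$ is a bijection on row vectors, and it maps ${\mathcal L}(M)$ onto ${\mathcal L}(MR)$. Hence $pF=qM$ for some integer $q$ iff $pFR=qMR$ for the same $q$. This holds equally when $R$ is rational, provided $MR$ and $FR$ are integral.

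\emph{Part~\ref{factext}.} A row $p$ satisfies $p[\boldz\mid F]=[q'\mid q]\,\diag(M',M)$ iff $q'M'=\boldz$ and $pF=qM$. Since $M'$ has full column rank, $q'M'=\boldz$ can always be met by $q'=\boldz$. So the condition reduces exactly to $pF\in{\mathcal L}(M)$.

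\emph{Part~\ref{factTruncate}.} Here I read the right-hand side as the set of $p\in\Z^{1\times n}$ such that $[p\mid\boldz]$ lies in $\Rel\bigl(M_1,\left[\begin{smallmatrix}F\\M_2\end{smallmatrix}\right]\bigr)$. The displayed factor $\left[\begin{smallmatrix}I\\\boldz\end{smallmatrix}\right]$ encodes this restriction to the first $n$ coordinates. I would prove the slightly stronger fact that $p\in\Rel(M,F)$ iff $[p\mid r]\in\Rel(M_1,\left[\begin{smallmatrix}F\\M_2\end{smallmatrix}\right])$ for some integer $r$. This follows because $pF=q_1M_1+q_2M_2$ holds iff $pF+(-q_2)M_2=q_1M_1$, so the witness is $r=-q_2$. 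To get the version with $r=\boldz$ from this, I would intersect with the coordinate subspace $r=\boldz$.

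\emph{Main obstacle.} This is the only delicate point: the set with $r=\boldz$ is $\Rel(M_1,F)$, which is generally smaller than $\Rel(M,F)$. So the intended statement must mean the projection onto the first $n$ coordinates, not the restriction $r=\boldz$. In the paper's language, the intended reading is the following. Take the Hermite basis of $\Rel\bigl(M_1,\left[\begin{smallmatrix}F\\M_2\end{smallmatrix}\right]\bigr)$ with the extra columns ordered first, so that it is block upper triangular. Its trailing block, extracted via the $\left[\begin{smallmatrix}I\\\boldz\end{smallmatrix}\right]$ selection, gives $\Rel(M,F)$, in the same way that Lemma~\ref{rem:alt} extracts $H$. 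I would phrase the proof as this projection equality and justify it with the existence of the witness $r$ above. Full column rank of $M_1$ guarantees that the lattice is full rank, so its Hermite basis is well defined.
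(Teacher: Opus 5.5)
Your argument is the paper's own. The paper also treats parts 1–4 as read off the definition, and for part~5 it uses exactly your witness: from $pF=q_1M_1+q_2M_2$ it passes to $[p\mid -q_2]\in\Rel\bigl(M_1,\left[\begin{smallmatrix}F\\ M_2\end{smallmatrix}\right]\bigr)$, and back. The proof stands, but two pieces of your interpretive discussion of part~5 are wrong and should be removed.

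First, there is no restriction reading to reject. Right multiplication by $\left[\begin{smallmatrix} I\\ \boldz\end{smallmatrix}\right]$ sends $[p\mid r]$ to $p$, so $\Rel(\cdot)\left[\begin{smallmatrix} I\\ \boldz\end{smallmatrix}\right]$ literally denotes the projection. Your ``slightly stronger fact'' is therefore not a strengthening; it is precisely the statement, and it is what the paper proves. The sentence about intersecting with $r=\boldz$ would instead establish the false identity $\Rel(M,F)=\Rel(M_1,F)$.

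Second, your Hermite-basis gloss is backwards. Order the $r$-coordinates first and write the Hermite basis as $\left[\begin{smallmatrix} H_{11} & H_{12}\\ & H_{22}\end{smallmatrix}\right]$. Then $H_{22}$ generates exactly the lattice vectors with $r=\boldz$, i.e.\ the restriction $\Rel(M_1,F)$ you had just ruled out. This is also what the trailing block $H$ in Lemma~\ref{rem:alt} is: the vectors whose leading coordinates vanish.

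A concrete case: take $M_1=[2]$, $M_2=[1]$, $F=[1]$. Then $\Rel(M,F)=\Z$. But $\{[r\mid p] : p+r\in 2\Z\}$ has Hermite basis $\left[\begin{smallmatrix} 1&1\\ &2\end{smallmatrix}\right]$, whose trailing block gives $2\Z$.

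The projection is generated by the leading diagonal block when the $p$-coordinates come first. Equivalently, in your ordering it is generated by all rows of $\left[\begin{smallmatrix} H_{12}\\ H_{22}\end{smallmatrix}\right]$. This matches the paper's remark after the proof, which recovers $\Rel(M,F)$ from the leading columns.

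Since your actual proof uses only the witness argument, it remains valid. The gloss as written, however, identifies the wrong lattice.
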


\begin{proof}
As the first four parts follow directly from definitions we only
include a proof for the last identity. For this case note first
that for any $p \in \Rel(M, F)$ there is a $q = \left[ \begin{array}{c|c}
q_1 & q_2 \end{array} \right]$ such that
    \begin{align*}
        pF &= qM = \left[ \begin{array}{c|c} q_1 & q_2 \end{array} \right]
                    \left[ \begin{array}{c} M_1 \\ M_2 \end{array} \right], \mbox{~~that is,~~} 
                   \left[ \begin{array}{c|c} p & -q_2 \end{array} \right] 
                   \left[ \begin{array}{c} F \\ M_2 \end{array} \right] = q_1M_1.
         \end{align*}
         Thus 
$$ \left[ \begin{array}{c|c} p & -q_2 \end{array} \right] 
                \in \Rel \Biggl ( M_1 , \left [ \begin{array}{c} F \\ M_2 \end{array} \right ]  \Biggr ), 
     \mbox{~~that is,~~} 
    p \in \Rel \Biggl ( M_1 , \left [ \begin{array}{c} F \\ M_2 \end{array} \right ]  \Biggr ) 
              \left [ \begin{array}{c} I \\ \boldz \end{array} \right ].
$$
A similar argument shows that  $p \in  \Rel \Biggl ( M_1 , \left [
\begin{array}{c} F \\ M_2 \end{array} \right ]  \Biggr ) \left [
\begin{array}{c} I \\ \boldz \end{array} \right ]$ also  implies $p \in \Rel(M, F)$.
\end{proof}
We remark that Lemma~\ref{rem:alt} gives an alternate  way to prove the above results. For example,
for  Lemma~\ref{lem:denequiv}.\ref{factTruncate} the corresponding
input matrices from~(\ref{eq:alt1}) are
$$
\Rel (M,F)\sim \left [ \begin{array}{cc} M_1 & \\
M_2 & \\
 F & I \end{array} \right ] \mbox{~~~~and~~~~}
\Rel \Biggl ( M_1 , 
\left [ \begin{array}{c} F \\ M_2 \end{array} \right ]  \Biggr )
\sim
\left [ \begin{array}{cc||c} M_1  & &  \\
F & I &\\ 
M_2  & & I  \end{array} \right ].
$$
Lemma~\ref{lem:denequiv}.\ref{factTruncate} then follows by noting
that the Hermite form of the first input matrix is equal to the
leading columns of the Hermite form of second input matrix.

\subsubsection{Transforming the modulus to Hermite and Smith  forms}

Suppose $T$ is the Hermite basis of the modulus $M$ in $\Rel(M,F)$.
Then compared to $M$, which may have more rows than columns, $T$
is square and nonsingular and has the benefit of being a canonical
presentation of ${\mathcal L}(M)$.  Lemma~\ref{lem:compress} shows
that we can replace $M$ by $T$, and that $F$ can be replaced by its
\emph{remainder $\bar{F}$ with respect to $T$}, the unique matrix
that satisfies $\bar{F} = F \bmod T$ and is reduced column-modulo
the diagonal entries of $T$.  The new inputs $T$ and $\bar{F}$  have
controlled size, with  $T \in \Z^{m \times m}$ and $\bar{F} \in
\Z^{n \times m}$ requiring only $O(m\log \det T)$ and $O(n \log
\det T)$ bits to represent, respectively
(see~\cite[Section~3.3]{BirmpilisLabahnStorjohann23b}).

\begin{lemma} \mylabel{lem:compress} 
The following hold:
\begin{enumerate}
\item \mylabel{compp1}
$\Rel(M,F) = \Rel(T,F)$, where $T$ is the Hermite basis of $M$.
\item \mylabel{compp2}
If $T$ is a Hermite basis, then
 $\Rel(T,F)= \Rel(T,\bar{F})$, where
$$
\left [ \begin{array}{c|c} I & F \\\hline &  T
\end{array}\right]
\mbox{~~has Hermite form~~}
\left [ \begin{array}{c|c} I & \bar{F} \\\hline & T
\end{array}\right].
$$
\end{enumerate}
\end{lemma}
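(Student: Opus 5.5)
Both parts reduce to Lemma~\ref{lem:denequiv}, so the plan is to show that each replacement is one of the transformations listed there.

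For part~\ref{compp1}, $T$ is the Hermite basis of $M$, so ${\mathcal L}(T)={\mathcal L}(M)$ by definition. Lemma~\ref{lem:denequiv}.\ref{fact3} then gives $\Rel(M,F)=\Rel(T,F)$ at once. The underlying reason is that the condition $pF=\boldz \bmod M$ only asks that the rows of $pF$ lie in ${\mathcal L}(M)$, and this depends only on the lattice.

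For part~\ref{compp2}, the task is to show that $\bar F = F + QT$ for some integer matrix $Q$; Lemma~\ref{lem:denequiv}.\ref{fact4} then gives $\Rel(T,F)=\Rel(T,\bar F)$.
\begin{itemize}
\item Write $A=\left[\begin{smallmatrix} I & F\\ & T\end{smallmatrix}\right]$. Adding $Q$ times the lower block row to the upper block row is a unimodular row operation, since the multiplier is $\left[\begin{smallmatrix} I & Q\\ & I\end{smallmatrix}\right]$. It sends $A$ to $\left[\begin{smallmatrix} I & F+QT\\ & T\end{smallmatrix}\right]$.
\item This matrix is upper triangular with nonnegative diagonal, because $T$ is a Hermite basis. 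It is in Hermite form exactly when the off-diagonal entries in each column $j$ of the second block lie in $[0,t_j)$, where $t_j$ is the diagonal entry of $T$.
\item So it suffices to exhibit one integer $Q$ for which $F+QT$ is reduced column-modulo the diagonal of $T$.
\item By uniqueness of the Hermite form, $F+QT$ is then the matrix $\bar F$ of the statement.
\end{itemize}

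I expect this reduction step to be the main obstacle, because $T$ is triangular and not diagonal, so the columns cannot be reduced independently. I would construct $Q$ column by column, from left to right.
\begin{itemize}
\item Suppose columns $1,\dots,j-1$ of the current $F$ are already reduced.
\item For each row $i$ of $F$, subtract $\lfloor F_{ij}/t_j\rfloor$ times row $j$ of $T$ from row $i$ of $F$. Because $T$ is upper triangular, row $j$ of $T$ has zeros in columns $1,\dots,j-1$.
\item So this step leaves the earlier reduced columns unchanged and makes column $j$ lie in $[0,t_j)$.
\end{itemize}
Collecting the multipliers gives an integer $Q$ with $F+QT$ reduced, which is the matrix needed above. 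This completes part~\ref{compp2}.
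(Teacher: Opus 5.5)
Your proof is correct and follows the paper's route: part~1 is Lemma~\ref{lem:denequiv}.\ref{fact3}, and part~2 reduces to $\bar F = F+QT$ followed by Lemma~\ref{lem:denequiv}.\ref{fact4}. The one difference is that you build $Q$ by left-to-right column reduction and then invoke uniqueness of the Hermite form. The paper instead notes that, since the Hermite form is stated to have the shape $\left[\begin{smallmatrix} I & \bar F\\ & T\end{smallmatrix}\right]$, the unique unimodular transformation must be $\left[\begin{smallmatrix} I & Q\\ & I\end{smallmatrix}\right]$. So the step you flagged as the main obstacle is not needed, although your construction does additionally show that the Hermite form has that shape.
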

\begin{proof}
Part~\ref{compp1} follows directly  from Lemma~\ref{lem:denequiv}.\ref{fact3}.
For part~\ref{compp2}, note that the unique transformation to Hermite form has the shape
$$
\left [ \begin{array}{c|c} I & Q \\\hline
 & I \end{array} \right ] \left [ \begin{array}{c|c} 
 I & F \\\hline
 & T \end{array} \right ] =
\left [ \begin{array}{c|c}  I & \bar{F} \\\hline
& T \end{array} \right ],
$$
for some integer matrix $Q$.
Then $\bar{F} = F + Q T$ and 
Lemma~\ref{lem:denequiv}.\ref{fact4} gives that $\Rel(T,F) = \Rel(T,\bar{F})$.
\end{proof}
It is also possible to replace the modulus $M$ 
with its Smith form. In this case, however, we also need to modify $F$ appropriately. 
\begin{lemma} 
\mylabel{lem:diag} If $M$ is nonsingular, then
$\Rel(M,F) = \Rel(S,FW)$
where $(S,W)$ is a Smith massager for $M$.
\end{lemma}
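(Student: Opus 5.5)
We want to show $\Rel(M,F)=\Rel(S,FW)$ where $(S,W)$ is a Smith massager for $M$, i.e. $MW\equiv 0 \bmod S$ and $W$ is coprime to $S$. The plan is to pass through a rational change of basis on the modulus, using Lemma~\ref{lem:denequiv}.\ref{fact1} with the rational matrix $R=WS^{-1}$, and then identify the resulting modulus lattice via Lemma~\ref{lem:key}.

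First, $R=WS^{-1}$ is nonsingular. By condition (ii) there is $V$ with $VW\equiv I \bmod S$, so $VW=I+QS$ and $VWS^{-1}=S^{-1}+Q$. A cleaner route is Lemma~\ref{lem:key}: the lattice $\{v : vWS^{-1}\in\Z^{1\times n}\}$ equals ${\mathcal L}(M)$, which has full rank. If $W$ were singular, then a nonzero rational $v$ with $vW=0$, scaled to an integer vector, would lie in this lattice. Every integer multiple of it would too, but it could not be in ${\mathcal L}(M)$ with the required multiples in general. I will settle nonsingularity more carefully instead: $WS^{-1}$ has a nonsingular minimal denominator $M$, so $MWS^{-1}$ is integral. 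Taking determinants is not immediate. Instead I argue as follows. The set $\{v\in\Q^{1\times n}: vW=0\}$ consists of rational vectors $v$ with $vWS^{-1}=0\in\Z$, so all integer multiples of such $v$ lie in ${\mathcal L}(M)$. But then $\Z^{1\times n}\cap \Q v$ lies in the lattice, which is fine and gives no contradiction. So I will use the determinant instead: ${\mathcal L}(M)$ is the minimal denominator lattice of $WS^{-1}$, and by the proof of Lemma~\ref{lem:key} (Theorem~4 of the cited work) one also has $|\det M|=\det S$. A minimal denominator $P$ of a rational matrix $X$ has $|\det P|$ equal to the product of the denominators of the Smith form of $X$, and that product is at most $\det S$, with equality only if $WS^{-1}$ has full rank. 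I take nonsingularity of $W$ as the point requiring care, and I expect it to follow from the coprimality $VW\equiv I \bmod S$ together with $\det S=|\det M|$.

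Given nonsingular $R=WS^{-1}$, $MR=MWS^{-1}$ is integral by (i) and $FR=FWS^{-1}$ is rational. Rather than relying on the rational extension, I prove the statement directly. Take $p$ with $pF=qM$ for an integer row $q$. Then $pFW=qMW=q(MWS^{-1})S$, and $MWS^{-1}$ is integral, so $pFW\equiv 0\bmod S$. Conversely, suppose $pFW=uS$ for an integer $u$. Then $pFWS^{-1}=u$ is integral. Writing $x=pFM^{-1}$, we have $pF=xM$ and $xMWS^{-1}=u$ integral. Lemma~\ref{lem:key} characterizes integer $v$ with $vWS^{-1}$ integral as exactly $v\in{\mathcal L}(M)$; written with $v=xM$, this says that $xM$ integral and $xMWS^{-1}$ integral force $x$ to be integral. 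Here $xM=pF$ is integral, so $x$ is an integer row, hence $pF\equiv 0\bmod M$.

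The main obstacle is this last step: applying Lemma~\ref{lem:key} in the form that, for integer $v$, integrality of $vWS^{-1}$ holds if and only if $vM^{-1}$ is integral. This is exactly the identity of lattices quoted in its proof, applied with $v=pF$. With that identity the converse is immediate, and the separate nonsingularity discussion becomes unnecessary.
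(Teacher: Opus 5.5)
Your final argument is correct and is essentially the paper's proof. The paper applies the lattice identity from Theorem~4 of the cited work (the one quoted in the proof of Lemma~\ref{lem:key}) with $v=pF$ to get both inclusions at once; you check the forward inclusion directly from condition~(i) and use the identity only for the converse.

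You were right to abandon the detour through Lemma~\ref{lem:denequiv}.\ref{fact1} with $R=WS^{-1}$, because your expectation that $W$ is nonsingular is false in general. A reduced Smith massager has zero columns at the trivial invariant factors: for the matrix of Example~\ref{ex:1}, $S=\diag(1,1,24)$ is paired with a $W$ having two zero columns.
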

\begin{proof}
From~\cite[Theorem~4]{BirmpilisLabahnStorjohann21}  we have that
\begin{equation}\mylabel{eq:pF} 
\{v \in \Z^{1 \times m} \mid v = \boldz \bmod M\} = 
\{v \in \Z^{1 \times m} \mid vW = \boldz \bmod S\}.
\end{equation}
Let $n$ be the row dimension of $F$.
The following equation recalls the definition of $\Rel(M,F)$, and
then applies (\ref{eq:pF}) with $v=pF$.
\begin{eqnarray*}
\Rel(M,F) & = & \{  p \in  \Z^{1 \times n} \mid pF = \boldz \bmod M \} \\
 &= & \{  p \in  \Z^{1 \times n} \mid pFW = \boldz \bmod S \} \\
 &= & \Rel(S,FW). \qedhere
\end{eqnarray*}
\end{proof}

For a rectangular modulus we have the following extension of Lemma~\ref{lem:diag}.

\begin{lemma}
\mylabel{thm:diag} If $M_1$ is nonsingular, then
$$\Rel\left ( \left [ 
\begin{array}{c} M_1 \\ M_2 \end{array} \right ],F\right ) = 
\Rel \left ( 
\left [ \begin{array}{c} S \\ \colmod(M_2W,S) \end{array} \right ],\colmod(FW,S)\right ),$$
where $(S,W)$ is a Smith massager for $M_1$.
\end{lemma}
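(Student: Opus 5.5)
We prove the identity by chaining parts of Lemma~\ref{lem:denequiv} with Lemma~\ref{lem:diag}. First apply Lemma~\ref{lem:denequiv}.\ref{factTruncate} with $M = \left[\begin{array}{c} M_1 \\ M_2\end{array}\right]$, which is allowed because $M_1$ is nonsingular and so has full column rank. This gives
$$\Rel\left(\left[\begin{array}{c} M_1 \\ M_2\end{array}\right],F\right) = \Rel\left(M_1,\left[\begin{array}{c} F \\ M_2\end{array}\right]\right)\left[\begin{array}{c} I \\ \boldz\end{array}\right].$$
Lemma~\ref{lem:diag} now applies to the nonsingular modulus $M_1$ with Smith massager $(S,W)$, and it turns the right hand side into
$$\Rel\left(S,\left[\begin{array}{c} FW \\ M_2W\end{array}\right]\right)\left[\begin{array}{c} I \\ \boldz\end{array}\right].$$

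Next we reduce column-modulo $S$. Since $S$ is diagonal, there are integer matrices $Q_1,Q_2$ with $\colmod(FW,S) = FW + Q_1S$ and $\colmod(M_2W,S) = M_2W+Q_2S$. Lemma~\ref{lem:denequiv}.\ref{fact4} therefore lets us replace $FW$ and $M_2W$ by these reduced matrices without changing the relations lattice. Write $\bar F = \colmod(FW,S)$ and $\bar M_2 = \colmod(M_2W,S)$.

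Finally we apply Lemma~\ref{lem:denequiv}.\ref{factTruncate} in reverse, with modulus $\left[\begin{array}{c} S \\ \bar M_2\end{array}\right]$, where $S$ is nonsingular and hence of full column rank. This gives
$$\Rel\left(S,\left[\begin{array}{c} \bar F \\ \bar M_2\end{array}\right]\right)\left[\begin{array}{c} I \\ \boldz\end{array}\right] = \Rel\left(\left[\begin{array}{c} S \\ \bar M_2\end{array}\right],\bar F\right),$$
which is the claimed statement.

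One bookkeeping point needs care. Lemma~\ref{lem:diag} is stated for a nonsingular modulus with arbitrary $F$, so it must be applied with the stacked matrix $\left[\begin{array}{c} F \\ M_2\end{array}\right]$ playing the role of $F$. Its proof puts no restriction on the row dimension of $F$, so this is fine. In the same way, the equalities from Lemma~\ref{lem:denequiv}.\ref{fact4} hold at the level of the full lattice, before we multiply by $\left[\begin{array}{c} I\\ \boldz\end{array}\right]$, so the projection is preserved.

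Alternatively, one could argue directly from the equivalence~(\ref{eq:pF}). We have $pF = q_1M_1 + q_2M_2$ exactly when $pF - q_2M_2 = \boldz \bmod M_1$, which holds exactly when $(pF - q_2M_2)W = \boldz \bmod S$. Reducing modulo $S$ column-wise then gives the same conclusion. The only subtle step is the projection argument, and Lemma~\ref{lem:denequiv}.\ref{factTruncate} handles it in both directions.
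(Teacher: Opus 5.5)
Your proof is correct and follows the paper's argument: forward truncation via Lemma~\ref{lem:denequiv}.\ref{factTruncate}, then Lemma~\ref{lem:diag}, then reverse truncation. The only difference is that you reduce column-modulo $S$ inside the stacked second argument before undoing the truncation, so you need only Lemma~\ref{lem:denequiv}.\ref{fact4}, whereas the paper reduces afterwards and so invokes both parts~\ref{fact3} and~\ref{fact4}.
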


\begin{proof}
We have
\begin{eqnarray}
\Rel \left ( \left [ 
\begin{array}{c} M_1 \\ M_2 \end{array} \right ],F \right ) & =  &
\Rel \left (  M_1 , \left [ \begin{array}{c} F \\ M_2 \end{array} \right ]\right ) 
   \left [ \begin{array}{c} I \\ \boldz \end{array} \right ] \mylabel{tF1} \\ 
 & = & 
\Rel \left (  S ,  \left [ \begin{array}{c} FW \\ M_2 W \end{array} \right ]\right ) 
   \left [ \begin{array}{c} I \\ \boldz \end{array} \right ] \mylabel{tF2} \\
& = & \Rel \left ( \left [ \begin{array}{c} S \\  M_2 W 
\end{array} \right ],FW \right ) \mylabel{tF3}.
\end{eqnarray}
Line~(\ref{tF1}) follows by applying Lemma~\ref{lem:denequiv}.\ref{factTruncate}
in the forward direction, (\ref{tF2}) follows from Lemma~\ref{lem:diag},
and~(\ref{tF3}) follows by applying Lemma~\ref{lem:denequiv}.\ref{factTruncate} 
in the reverse direction.
By Lemma~\ref{lem:denequiv}.\ref{fact3} and \ref{lem:denequiv}.\ref{fact4}, 
we can replace the matrices $M_2W$ and $FW$ in (\ref{tF3}) with
$\colmod(M_2W,S)$ and $\colmod(FW,S)$, respectively.
\end{proof}

Lemma~\ref{lem:sreduce} shows that we can reduce the column dimension of the
arguments of $\Rel(S,F)$  by removing leading columns
corresponding to trivial invariant factors in $S$. Conversely, we could
start with $\Rel(\hat{S},\hat{F})$ and implicitly add some trivial invariant
factors to $\hat{S}$ and corresponding zero columns to $\hat{F}$
to adjust the column dimension upward.
\begin{lemma} \mylabel{lem:sreduce} 
If $S = \diag(I,\hat{S})$ and $F = \left [ \begin{array}{c|c} \ast & \hat{F} \end{array} \right ]$
then $\Rel(S,F)$ is equal to $\Rel(\hat{S},\hat{F})$.
\end{lemma}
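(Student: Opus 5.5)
The plan is to argue directly from the definition of $\Rel$, possibly combined with Lemma~\ref{lem:denequiv}. Write $S=\diag(I_k,\hat{S})$ and $F=\left[\begin{array}{c|c} F_1 & \hat{F}\end{array}\right]$, where $F_1$ has $k$ columns. A row $p$ lies in $\Rel(S,F)$ exactly when $pF = qS$ for some integer row vector $q$. Split $q=\left[\begin{array}{c|c} q_1 & q_2\end{array}\right]$ conformally with the blocks of $S$. Since $S$ is block diagonal, the condition decouples into two column blocks:
$$ pF_1 = q_1 I_k \quad\mbox{and}\quad p\hat{F} = q_2\hat{S}. $$

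The first equation imposes no constraint. Given any $p$, the vector $pF_1$ is an integer vector, so we can always take $q_1 = pF_1$. Hence $p\in\Rel(S,F)$ if and only if there is an integer $q_2$ with $p\hat{F}=q_2\hat{S}$, which is exactly the statement $p\in\Rel(\hat{S},\hat{F})$. This shows the two sets of rows coincide, so the lattices are equal.

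Alternatively, the same conclusion follows from Lemma~\ref{lem:denequiv}. By Lemma~\ref{lem:denequiv}.\ref{fact4}, adding $QS$ to $F$ with $Q=\left[\begin{array}{c|c} -F_1 & \boldz\end{array}\right]$ replaces $F$ by $\left[\begin{array}{c|c}\boldz & \hat{F}\end{array}\right]$ without changing $\Rel(S,F)$. Then Lemma~\ref{lem:denequiv}.\ref{factext} with $M'=I_k$ gives $\Rel(\hat{S},\hat{F})$ directly.

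There is no real obstacle here. The only point needing care is that $\hat{S}$ still has full column rank. This holds because $S$ is nonsingular, so $\Rel(\hat{S},\hat{F})$ is well defined in the sense of~(\ref{eq:defil}).
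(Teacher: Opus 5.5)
Your proposal is correct. Your second argument is exactly the paper's proof: apply Lemma~\ref{lem:denequiv}.\ref{fact4} with $Q=\left[\begin{array}{c|c} -F_1 & \boldz\end{array}\right]$ to zero out the leading columns, then Lemma~\ref{lem:denequiv}.\ref{factext} with $M'=I_k$; your first argument (splitting $pF=qS$ blockwise and noting the $I_k$ block imposes no constraint) just inlines the definitional reasoning behind those two parts.
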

\begin{proof} 
Using Lemma~\ref{lem:denequiv}.\ref{fact4} and~\ref{lem:denequiv}.\ref{factext} in succession gives
$\Rel(S,F)= \Rel(S,\left [ \begin{array}{c|c} \boldz & \hat{F} \end{array} \right ]) = \Rel(\hat{S},\hat{F}).$ 
\end{proof}

\subsubsection{Removing common right divisors}

A relations lattice $\Rel(M,F)$ can be simplified by removing any
common right matrix divisors.  We say that \emph{$\Rel(M,F)$ has
coprime inputs} when $T$ in the following lemma is $I$.

\begin{lemma} \mylabel{lem:reduce}
Let $T$ be the Hermite basis of ${\mathcal L}(M) + {\mathcal L}(F)$.
Then $\Rel(M,F)$ is equal to
$\Rel(MT^{-1}, FT^{-1})$, with the latter having coprime inputs.
\end{lemma}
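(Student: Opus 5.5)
The plan is to split the statement into two claims and dispatch both with results already in hand. The first claim is the equality $\Rel(M,F)=\Rel(MT^{-1},FT^{-1})$. The second is that the new pair has coprime inputs. The ambient hypothesis is that $M$ has full column rank, so the matrix stacking $M$ over $F$ also has full column rank, and $T$ is a square nonsingular $m\times m$ Hermite basis.

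For the equality, I first check integrality. Since ${\mathcal L}(M)\subseteq{\mathcal L}(T)$, every row of $M$ is an integer combination of the rows of $T$. Hence $M=Q_1T$ for some integer $Q_1$, and likewise $F=Q_2T$. So $MT^{-1}=Q_1$ and $FT^{-1}=Q_2$ are integral. I then invoke Lemma~\ref{lem:denequiv}.\ref{fact1} with the rational nonsingular matrix $R=T^{-1}$, which the remark before that lemma explicitly allows because $MR$ and $FR$ are integral. As a sanity check without the rational extension, I can argue directly. For $p\in\Z^{1\times n}$, the condition $pF=qM$ with $q$ integral is equivalent to $pFT^{-1}=qMT^{-1}$, since $T$ is invertible. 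So $p\in\Rel(M,F)$ if and only if $p\in\Rel(MT^{-1},FT^{-1})$.

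For coprimality, I need to show that the Hermite basis of ${\mathcal L}(Q_1)+{\mathcal L}(Q_2)$ is $I$, which by definition means ${\mathcal L}(Q_1)+{\mathcal L}(Q_2)=\Z^{1\times m}$. Since $T$ is a basis of ${\mathcal L}(M)+{\mathcal L}(F)$, its rows are integer combinations of rows of $M$ and $F$. This gives $T=AM+BF$ for integer $A,B$. Right-multiplying by $T^{-1}$ gives $I=AQ_1+BQ_2$, so every unit vector lies in ${\mathcal L}(Q_1)+{\mathcal L}(Q_2)$. Full column rank of the stacked matrix is immediate, since it contains $I$ in its lattice. The Hermite basis is therefore $I$, and by the paper's definition the inputs are coprime.

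The only delicate point is the integrality of $MT^{-1}$ and $FT^{-1}$, which is needed for Lemma~\ref{lem:denequiv}.\ref{fact1} to apply. This is handled by the lattice containment above, so I expect no real obstacle. The rest is bookkeeping with the identities $M=Q_1T$, $F=Q_2T$ and $T=AM+BF$.
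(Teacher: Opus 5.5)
Your proposal is correct and follows the paper's own proof. The equality comes from Lemma~\ref{lem:denequiv}.\ref{fact1} with the rational $R=T^{-1}$, and coprimality comes from the fact that right-multiplying by the inverse of a basis of ${\mathcal L}(M)+{\mathcal L}(F)$ yields integer matrices whose rows generate $\Z^{1\times m}$. The paper asserts that fact without proof, and you make it explicit via $T=AM+BF$, which gives $I=AQ_1+BQ_2$.
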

\begin{proof}
Postmultiplying any full column rank integer matrix by the inverse of a basis for its row lattice
yields an integer matrix whose rows generate~$I$.  Thus $\Rel(MT^{-1},FT^{-1})$ has  inputs which are coprime.
In addition, from Lemma~\ref{lem:denequiv}.\ref{fact1} we have that  $\Rel(M,F)=\Rel(MT^{-1},FT^{-1})$.
\end{proof}
An issue with using  Lemma~\ref{lem:reduce} directly is that the
bitlength of entries in $MT^{-1}$ and $FT^{-1}$ may be large.\footnote{\cite[Example~10]{BirmpilisLabahnStorjohann19}
gives an example of an $m \times m$ Hermite form $T$ with $\log_2
||T|| = 1$ and $\log ||T^{-1}||= \Theta(m)$.}
The following theorem gives an alternative method to construct coprime inputs.
\begin{theorem}\mylabel{thm:bas}
Let $T$  be the Hermite basis of
${\mathcal L}(M) + {\mathcal L}(F)$.
Then
\begin{equation}\mylabel{thm:form}
\left [ \begin{array}{ccc} T & & I \\
F & I &  \\
M & &  \end{array}\right ]
\mbox{~~has Hermite basis with shape~~}
\left [\begin{array}{ccc} T & & \ast \\
  & I & C\\
  & &  K \end{array}\right ]
\end{equation}
and $\Rel(M,F)$ is equal to $\Rel(K,C)$, with the latter
having coprime inputs.
\end{theorem}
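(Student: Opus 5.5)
The plan is to read off the shape column block by column block, using the fact (already used in the proof of Lemma~\ref{rem:alt}) that the leading columns of a Hermite basis depend only on the corresponding leading columns of the input. Write $A$ for the input matrix in~(\ref{thm:form}), with column blocks of widths $m$, $n$, $m$.

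\emph{Shape.} The first column block of $A$ is the stack of $T$, $F$ and $M$. Its lattice is ${\mathcal L}(T)+{\mathcal L}(F)+{\mathcal L}(M)={\mathcal L}(T)$, so the leading diagonal block of the Hermite basis is $T$. For the first two column blocks, the rows of $[F\ I]$ together with $[T\ \boldz]$ and $[M\ \boldz]$ generate a lattice containing $[\boldz\ I]$. Indeed, write $F=QT$ with $Q$ integral, which is possible since ${\mathcal L}(F)\subseteq{\mathcal L}(T)$. Then $[F\ I]-Q[T\ \boldz]=[\boldz\ I]$. Hence the lattice of the first two column blocks is ${\mathcal L}(\diag(T,I))$. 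Because this basis is already in Hermite form, the second diagonal block is $I$ and the $(1,2)$ block is zero. Call the remaining blocks $\ast$, $C$ and $K$. $K$ is nonsingular, since $A$ has full column rank: its $T$ and $I$ blocks are nonsingular.

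\emph{$\Rel(M,F)=\Rel(K,C)$.} The idea is to express both sides through kernels of $A$. A vector $p\in\Z^{1\times n}$ satisfies $p\in\Rel(K,C)$ iff $p\,[I\ C]+u\,[\boldz\ K]$ has zero in the last column block for some integer $u$. So $p\in\Rel(K,C)$ iff the lattice ${\mathcal L}(A)$ contains a vector of the form $[\boldz\ p\ \boldz]$. To see this, subtract integer multiples of the first row block of the Hermite basis; the leading block is forced to vanish because the remaining rows have zero there. In the other direction, a vector $[\boldz\ p\ \boldz]$ in ${\mathcal L}(A)$ has the form $a[T\ \boldz\ I]+b[F\ I\ \boldz]+c[M\ \boldz\ \boldz]$. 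Looking at the last block gives $a=\boldz$. Looking at the middle block gives $b=p$. The first block then gives $pF+cM=\boldz$, that is, $p\in\Rel(M,F)$. Conversely, every $p\in\Rel(M,F)$ produces such a vector. This proves the equality. The argument is the same bookkeeping as in Lemma~\ref{rem:alt}, with the uniqueness of Hermite row reduction replaced by the explicit description of ${\mathcal L}(A)$.

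\emph{Coprimality.} This is the step I expect to be the main obstacle. I would use a determinant count. By Lemma~\ref{rem:alt} applied to $\Rel(K,C)$, and using $\Rel(K,C)=\Rel(M,F)$, the Hermite basis $H$ of this lattice satisfies $\det H=\det T_{KC}\cdot\det$ of an auxiliary factor. I do not want to rely on that. The cleaner route is to show directly that ${\mathcal L}(K)+{\mathcal L}(C)=\Z^{1\times m}$, which is the same as ${\mathcal L}([I\ C];[\boldz\ K])\ni[\boldz\ e]$ for every $e$. I would project ${\mathcal L}(A)$ onto its last two column blocks after eliminating the first block, and show this projection is all of $\Z^{n+m}$. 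For this it suffices that $[\boldz\ \boldz\ e]$ lies in ${\mathcal L}(A)+{\mathcal L}([T\ \boldz\ \boldz])$ for every $e$. Now $[T\ \boldz\ I]-[T\ \boldz\ \boldz]=[\boldz\ \boldz\ I]$ does this, since $[T\ \boldz\ \boldz]$ is itself in ${\mathcal L}(A)$ (its first block generates ${\mathcal L}(T)$ inside the first block column). Here some care is needed. $[T\ \boldz\ \boldz]$ need not itself lie in ${\mathcal L}(A)$. Only $[T\ \boldz\ \ast]$ does. So I would instead take the quotient by the first row block of the Hermite basis: the submodule of vectors with zero first block equals ${\mathcal L}\left(\left[\begin{array}{cc} I & C\\ & K\end{array}\right]\right)$. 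It therefore suffices to check that the image of ${\mathcal L}(A)$ in $\Z^{n+m}$, obtained by forgetting the first block, is everything. That image contains $[\boldz\ I]$ from the rows $[T\ \boldz\ I]$ and $[\boldz\ I]$ coming from $F$. Comparing indices, $|\det K|=[\Z^{n+m}:{\mathcal L}(\diag(I,K))]$, and the full image has index $1$. An index comparison via determinants, $\det$ of the Hermite basis $=\det T\cdot\det K$, against the lattice of $A$ then forces $K$ and $C$ to be coprime. Making this index argument rigorous is the step I would write most carefully.
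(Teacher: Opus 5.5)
Your proofs of the shape and of $\Rel(M,F)=\Rel(K,C)$ are correct, and they take a different route from the paper. The paper left-multiplies by the unimodular matrix with blocks $-FT^{-1}$ and $-MT^{-1}$, then invokes Lemmas~\ref{lem:compress} and~\ref{lem:reduce}. You instead identify both lattices directly with $\{p : [\boldz\ p\ \boldz]\in{\mathcal L}(A)\}$, which is self-contained and avoids rational matrices.

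The coprimality part, however, has a genuine gap.

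\emph{Your reformulation is wrong.} Requiring $[\boldz\ e]\in{\mathcal L}\left(\left[\begin{array}{cc} I & C\\ \boldz & K\end{array}\right]\right)$ for all $e$ forces the coefficient of $[I\ C]$ to vanish, so it says $K$ is unimodular. That fails already for $M=[2]$, $F=[1]$: there $T=[1]$, $C=[1]$, $K=[2]$, so $K$ and $C$ are coprime but $K$ is not unimodular. Coprimality only says that the \emph{projection} of that lattice onto its last block is $\Z^{1\times m}$.

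\emph{The facts you end up with cannot imply coprimality.} These are that the image of ${\mathcal L}(A)$ after deleting the first block is $\Z^{1\times(n+m)}$, and that the Hermite basis has determinant $\det T\cdot\det K$. Your whole argument uses only ${\mathcal L}(M),{\mathcal L}(F)\subseteq{\mathcal L}(T)$, never the reverse inclusion, and without it the conclusion is false. Take $M=[2]$, $F=[0]$, but put $[1]$ in place of $T$. The input matrix then has Hermite basis $\left[\begin{array}{ccc} 1 & 0 & 1\\ & 1 & 0\\ & & 2\end{array}\right]$, so $C=[0]$ and $K=[2]$. In this example the shape holds, $\Rel(K,C)=\Rel(M,F)=\Z$, the projection is surjective and the determinant identity holds. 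Yet ${\mathcal L}(K)+{\mathcal L}(C)=2\Z$.

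The missing step is to use ${\mathcal L}(T)\subseteq{\mathcal L}(F)+{\mathcal L}(M)$; after that, your own part-2 bookkeeping finishes the proof. Write $T=Q_1F+Q_2M$ with $Q_1,Q_2$ integral. For any $e\in\Z^{1\times m}$,
\[ e\,[T\ \ \boldz\ \ I]-eQ_1\,[F\ \ I\ \ \boldz]-eQ_2\,[M\ \ \boldz\ \ \boldz]=[\boldz\ \ -eQ_1\ \ e]\in{\mathcal L}(A). \]
Its first block is zero, so, as in your part~2, it equals $b\,[\boldz\ I\ C]+c\,[\boldz\ \boldz\ K]$ for integral $b,c$. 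This gives $b=-eQ_1$ and $e=-eQ_1C+cK\in{\mathcal L}(C)+{\mathcal L}(K)$. In the paper this role is played by Lemma~\ref{lem:reduce}, which needs ${\mathcal L}(T)$ to \emph{equal} ${\mathcal L}(M)+{\mathcal L}(F)$.

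Your index route can also be made to work, but only as follows. Compute $[\Z^{1\times(2m+n)}:{\mathcal L}(A)]=[\Z^{1\times m}:{\mathcal L}(M)]$. Combine this with Lemma~\ref{rem:alt} applied to both $(M,F)$ and $(K,C)$ to get $\det K=\det H$. That step again depends essentially on $\det T=[\Z^{1\times m}:{\mathcal L}(M)+{\mathcal L}(F)]$.
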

\begin{proof}  Since $FT^{-1}$ and $M T ^{-1}$ are integral
the following transformation is unimodular:
\begin{equation} \mylabel{eq:hshape}
\left [ \begin{array}{ccc} I & &  \\
- FT^{-1} & I &  \\
- M T^{-1}  & & I  \end{array}\right ] \left [ \begin{array}{ccc} T & & I \\
F & I &  \\
M & &  \end{array}\right ] = \left [\begin{array}{ccc} T & & I \\
  & I & - F T^{-1} \\
  & &  - M T^{-1}  \end{array}\right ]. 
\end{equation}
The Hermite basis of the matrix on the right hand side of~(\ref{eq:hshape}) has
the shape shown in~(\ref{thm:form}).
By Lemma~\ref{lem:compress} we have $\Rel(K,C) = \Rel(-MT^{-1},-FT^{-1})$, which 
is equal to $\Rel(MT^{-1},FT^{-1})$. The result now follows from Lemma~\ref{lem:reduce}.
\end{proof}

\subsection{Additional properties of bases of a lattice of  integer relations}  \mylabel{ssec:prop}

Ensuring that $\Rel(S,F)$ has coprime inputs links the structure
of the Hermite basis $H$ of $\Rel(S,F)$ with the modulus $S$. 
Theorem~\ref{thm:S} allows us to conclude that $\det H = \det S$. 
Corollary~\ref{cor:atmostm} notes that the number of nontrivial invariant
factors of $S$ is bounded by the number of nontrivial columns of~$H$.

\begin{theorem} \mylabel{thm:S} Assume the inputs to $\Rel(S,F)$
are coprime with $S$ a nonsingular Smith form.  Then, up to trivial
invariant factors, the Smith form of any basis for $\Rel(S,F)$ is
equal to $S$.
\end{theorem}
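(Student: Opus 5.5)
We must show: if $S$ is a nonsingular Smith form and $S$, $F$ are coprime, then any basis $P$ of $\Rel(S,F)$ has Smith form equal to $S$ up to trivial invariant factors. Since all bases of $\Rel(S,F)$ are unimodularly equivalent, it is enough to treat one basis. The idea is to identify the quotient group $\Z^{1\times n}/\Rel(S,F)$ with $\Z^{1\times m}/{\mathcal L}(S)$. The invariant factors of a basis of a full rank lattice $L\subseteq\Z^{1\times n}$ are exactly the nontrivial cyclic factors in the invariant-factor decomposition of the finite abelian group $\Z^{1\times n}/L$. Those of ${\mathcal L}(S)$ are the diagonal entries of $S$. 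So the statement reduces to showing these two groups are isomorphic.

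Let $S=\diag(s_1,\ldots,s_m)$ and $F\in\Z^{n\times m}$. Define the group homomorphism
$$\phi:\Z^{1\times n}\rightarrow \Z^{1\times m}/{\mathcal L}(S), \qquad p\mapsto pF \bmod S.$$

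\textbf{Kernel.} By definition of the relations lattice, $\ker\phi=\Rel(S,F)$. In particular $\Rel(S,F)$ has finite index, because it contains $(\prod_i s_i)\Z^{1\times n}$. Hence any basis of it is nonsingular.

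\textbf{Surjectivity.} This is the step that uses coprimality. By Definition~\ref{def:sm}(ii), or directly from the characterization that the Hermite basis of $\left[\begin{smallmatrix} S\\ F\end{smallmatrix}\right]$ is $I$, we have ${\mathcal L}(S)+{\mathcal L}(F)=\Z^{1\times m}$. Thus every $e\in\Z^{1\times m}$ can be written $e=qS+pF$ for some integer rows $q,p$, so $\phi(p)=e \bmod S$. Equivalently, there is an integer $W$ with $WF\equiv I\bmod S$, and the rows of $W$ give preimages of the unit vectors.

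\textbf{Conclusion.} The first isomorphism theorem now gives
$$\Z^{1\times n}/\Rel(S,F)\cong \Z^{1\times m}/{\mathcal L}(S)\cong \bigoplus_i \Z/(s_i).$$
Take any basis $P$ of $\Rel(S,F)$ with Smith form $UPV=\diag(d_1,\ldots,d_n)$. Then $\Z^{1\times n}/{\mathcal L}(P)\cong\bigoplus_j\Z/(d_j)$, since unimodular $U$ and $V$ do not change the quotient up to isomorphism. By uniqueness of the invariant-factor decomposition of finite abelian groups, with the divisibility chains $d_1\mid\cdots\mid d_n$ and $s_1\mid\cdots\mid s_m$, the nontrivial $d_j$ coincide with the nontrivial $s_i$. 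In particular $|\det P|=\det S$.

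The only point needing care is surjectivity, which is exactly where coprimality enters. Everything else is standard structure theory of finitely generated abelian groups.
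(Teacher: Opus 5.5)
Your proof is correct, but it takes a different route from the paper's.

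The paper stays in matrix language. It applies Lemma~\ref{rem:alt} with $M=S$, where coprimality forces the block $T$ to be $I$. So $\left[\begin{smallmatrix} S & \\ F & I\end{smallmatrix}\right]$ has Hermite form $\left[\begin{smallmatrix} I & \ast \\ & H\end{smallmatrix}\right]$. It then notes that column operations reduce the first matrix to $\diag(I,S)$ and the second to $\diag(I,H)$. These are therefore equivalent, and $H$ has Smith form $S$ up to trivial factors.

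You instead build the homomorphism $p\mapsto pF \bmod S$ and identify its kernel as $\Rel(S,F)$. You get surjectivity from $\mathcal{L}(S)+\mathcal{L}(F)=\Z^{1\times m}$, then conclude with the first isomorphism theorem and uniqueness of invariant factors of finite abelian groups.

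The two arguments encode the same isomorphism. The paper computes the cokernel of the augmented $(m+n)\times(m+n)$ matrix twice, once from its column-reduced form and once from its Hermite form.

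Your version shows exactly where coprimality enters: only in surjectivity. It also generalizes immediately. Without coprimality, the same map gives $\Z^{1\times n}/\Rel(S,F)\cong \mathcal{L}(T)/\mathcal{L}(S)$, where $T$ is the Hermite basis of $\mathcal{L}(S)+\mathcal{L}(F)$. This is consistent with Lemma~\ref{lem:reduce}.

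The paper's version is shorter once Lemma~\ref{rem:alt} is available. It needs only the fact that left-equivalent matrices share a Smith form, with no appeal to the structure theorem for finite abelian groups.
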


\begin{proof}
Since all bases for $\Rel(S,F)$ are left equivalent, it will
suffice to prove the result for the Hermite basis $H$ of $\Rel(S,F)$.
The result follows from Lemma~\ref{rem:alt} which observes that
\begin{equation} \mylabel{eq:sh}
\left [ \begin{array}{c|c} S & \\\hline F & I \end{array} \right ]
\mbox{~~has Hermite form~~}
\left [ \begin{array}{c|c} I & \ast \\\hline &  H  \end{array} \right ].
\end{equation}
In particular, the matrices in~(\ref{eq:sh}) are equivalent
to $\diag(I,S)$ and $\diag(I,H)$, respectively.
\end{proof}

\begin{corollary} \mylabel{cor:atmostm}
Assume the inputs to $\Rel(S,F)$ are coprime with $S$ a nonsingular
Smith form.  If the Hermite basis $H$ of $\Rel(S,F)$ is index $(\ast,
m)$, then $S$ has at most $m$ nontrivial invariant factors.
\end{corollary}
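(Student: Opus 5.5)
The plan is to apply Theorem~\ref{thm:S} to $H$ itself and then compare the number of nontrivial invariant factors on the two sides. Since the inputs to $\Rel(S,F)$ are coprime and $S$ is a nonsingular Smith form, Theorem~\ref{thm:S} says that the Smith form of $H$ equals $S$ up to trivial invariant factors. In particular, the number of nontrivial invariant factors of $S$ equals the number of nontrivial invariant factors of $H$. It therefore suffices to show that a Hermite basis of index $(k,m)$ has at most $m$ nontrivial invariant factors.

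For that, use the shape~(\ref{eq:shape}). The block structure of $H$ lets us clear the $\ast$ block by unimodular column operations, using the identity columns $I_k$. Concretely, $H$ is block upper triangular with diagonal blocks $I_k$, $\bar{H}$ and $I_{n-k-m}$. The off-diagonal block $\ast$ lies in the first $k$ rows and can be eliminated by postmultiplying by a unimodular matrix of the form $\left[\begin{array}{cc} I_k & -\ast \\ & I \end{array}\right]$, suitably embedded. This shows that $H$ is equivalent to $\diag(I_k,\bar{H},I_{n-k-m})$.

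The Smith form of this block-diagonal matrix is, up to trivial factors, the Smith form of the $m\times m$ matrix $\bar{H}$. Note that $\bar{H}$ is $m\times m$ because $H$ is upper triangular, so its diagonal block is square. An $m\times m$ matrix has at most $m$ invariant factors, so $H$ has at most $m$ nontrivial invariant factors, and hence so does $S$.

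There is no real obstacle. The only point needing care is confirming that $\bar{H}$ in~(\ref{eq:shape}) is square of dimension $m$ and that the block operations are unimodular, both of which are immediate.
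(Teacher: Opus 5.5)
Your proposal is correct and matches the paper's proof: you clear the $\ast$ block using the leading identity columns to get right equivalence with $\diag(I_k,\bar{H},I_{n-k-m})$, whose Smith form is $\diag(I_{n-m},\bar{S})$, and Theorem~\ref{thm:S} then transfers the count of nontrivial invariant factors to $S$. One small wording point: $\bar{H}$ is $m\times m$ because counting rows in~(\ref{eq:shape}) gives $n-k-(n-k-m)=m$, not because $H$ is upper triangular.
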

\begin{proof}
If $H$ is index $(k, m)$ it can be written using a block decomposition
as shown in~(\ref{eq:shape}).  Given its structure, $H$ is right
equivalent to the block diagonal matrix $\diag(I_k, \bar{H},
I_{n-m-k})$, which has Smith form equal to $\diag(I_{n-m}, \bar{S})$,
where $\bar{S}$ is the Smith form of the $m \times m$ matrix
$\bar{H}$.  The result now follows from Theorem~\ref{thm:S}.
\end{proof}

\section{The Hermite basis algorithm} \mylabel{sec:alg}

In this section we present a recursive method to compute the Hermite
basis $H$ for an $\Rel(S,F)$ with coprime inputs and modulus $S$ a
nonsingular Smith form.  The precondition of coprime inputs implies
that $\det S = \det H$, which will be important for the cost analysis.
We remark that if $(S,F)$ is a Smith massager for a nonsingular
matrix $A$, then the algorithm here directly results in a recursive
algorithm to compute the Hermite form of $A$.

The key idea of our approach is to recursively split the problem
into two subproblems according to the desired output basis $H$.
Because a Hermite basis $H$ is upper triangular, we can choose a
factorization $H=H_2H_1$ such that the product $H_2H_1$ avoids any
computation. For example, when $H$ is $n \times n$ and $n=n_1+n_2$
is a given partition of $n$ we have
\begin{equation} \mylabel{eq:partn}
H = \left [ \begin{matrix} \bar{H}_1 & H_{12} \\
 & \bar{H}_2 \end{matrix} \right] =
{\left [ \begin{matrix} I_{n_1} & H_{12} \\
 & \bar{H}_2 \end{matrix} \right ]}
{
\left [ \begin{matrix} \bar{H}_1 & \\
 & I_{n_2} \end{matrix} \right ]} = H_2 H_1 .
 \end{equation}
At the top level of the recursion
the algorithm will choose the partitioning $n=\lfloor n/2 \rfloor
+ \lceil n/2 \rceil$, so that each subproblem recovers about half
the columns of the Hermite form.  The two recursive subproblems are
to compute first $H_1$ and  then $H_2$.
This divide and conquer approach is based on the following theorem,
which holds generally for any factorization $H_2H_1$ of a basis.
\begin{theorem} \mylabel{thm:rec2} 
If $H_2$ and $H_1$ are nonsingular integer matrices with
$H_2H_1$ a basis for $\Rel(M,F)$, then
\begin{enumerate}
\item \mylabel{Tfact2} $H_1$ is a basis for
\begin{equation} \mylabel{eq:my}
\Rel \left ( \left [ \begin{array}{c} M \\
H_1 F \end{array} \right ] , F \right ).
\end{equation}
\item \mylabel{Tfact1} $H_2$ is a basis for $\Rel(M,H_1F)$.
\end{enumerate}
\end{theorem}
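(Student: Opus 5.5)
Both parts follow from a direct comparison of lattices, using only the definition of $\Rel$ and the fact that a nonsingular $P$ is a basis of a lattice $\Lambda \subseteq \Z^{1\times n}$ exactly when ${\mathcal L}(P)=\Lambda$. Throughout, write $H = H_2H_1$, so ${\mathcal L}(H) = \Rel(M,F)$.

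\emph{Part~\ref{Tfact1}.} I will show ${\mathcal L}(H_2) = \Rel(M,H_1F)$ by double inclusion. If $q \in {\mathcal L}(H_2)$, say $q = uH_2$ with $u$ integral, then $qH_1 = uH \in \Rel(M,F)$. Hence $qH_1F = \boldz \bmod M$, so $q \in \Rel(M,H_1F)$. Conversely, if $q \in \Rel(M,H_1F)$, then $qH_1$ is an integer row with $(qH_1)F = \boldz \bmod M$. So $qH_1 \in \Rel(M,F) = {\mathcal L}(H_2H_1)$, that is, $qH_1 = uH_2H_1$ for some integral $u$. Since $H_1$ is nonsingular we can cancel it to get $q = uH_2 \in {\mathcal L}(H_2)$. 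The only care needed is that $q$ is integral, which holds by definition.

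\emph{Part~\ref{Tfact2}.} Let $\Lambda$ denote the lattice in~(\ref{eq:my}), so $\Lambda = \{ p \mid pF \in {\mathcal L}(M) + {\mathcal L}(H_1F)\}$. For ${\mathcal L}(H_1) \subseteq \Lambda$: each row of $H_1$ times $F$ is a row of $H_1F$, hence lies in the modulus lattice. For the reverse inclusion, take $p \in \Lambda$ and write $pF = qM + rH_1F$ with $q,r$ integral. Then $p' := p - rH_1$ satisfies $p'F = qM$, so $p' \in \Rel(M,F) = {\mathcal L}(H_2H_1) \subseteq {\mathcal L}(H_1)$. Therefore $p = p' + rH_1 \in {\mathcal L}(H_1)$.

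The key observation, and the only mildly nontrivial step, is ${\mathcal L}(H_2H_1)\subseteq{\mathcal L}(H_1)$. This holds because $H_2$ is an integer matrix, so every integer combination of rows of $H_2H_1$ is an integer combination of rows of $H_1$. We also need the modulus $\left[\begin{smallmatrix} M\\ H_1F\end{smallmatrix}\right]$ to have full column rank, which is inherited from $M$. With these two points settled, both parts reduce to routine lattice manipulations and I expect no real obstacle.
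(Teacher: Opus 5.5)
Your proof is correct and follows essentially the same route as the paper. In part~1 you use the same decomposition $pF = qM + rH_1F$, giving $p = (gH_2 + r)H_1$, and in part~2 you pull a relation back through the nonsingular $H_1$. The only difference is that you argue element-wise by double inclusion, whereas the paper phrases the reverse inclusion as determinant divisibility ($\det H_1 \mid \det T_1$ and $\det H_2 \mid \det T_2$) for an arbitrary basis $T_i$ of the target lattice.
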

\begin{proof}  Our proof is based on the fact that any two bases
(as nonsingular matrices) are left multiples of each other via a
unimodular matrix, or equivalently,  have the same minimal determinant.
We will implicitly use the fact that $H_1$ and $H_2$ are nonsingular.

First consider part~\ref{Tfact2}.  From the definition of an integer
relations lattice we have that ${\mathcal L}(H_1)$ is subset
of~(\ref{eq:my}). Thus, to show that $H_1$ is a basis for (\ref{eq:my}),
it will suffice to show that $\det H_1 \mid \det T_1$, where $T_1$
is any basis for (\ref{eq:my}).  For any such $T_1$ there exist
integer matrices $Q_1$ and $Q_2$ such that

\begin{equation} \mylabel{eq:i5} 
T_1F = \left [ \begin{array}{cc} Q_1 & Q_2 \end{array} \right ] 
\left [ \begin{array}{c} M \\
H_1 F \end{array} \right ].
\end{equation} 
Rewriting (\ref{eq:i5}) gives
\begin{equation} \mylabel{eq:i6}
(T_1 - Q_2 H_1)F = Q_1 M.
\end{equation}
Since $H_2H_1$ is a basis for
$\Rel(M,F)$, from (\ref{eq:i6}) we 
have ${\mathcal L}(T_1 - Q_2H_1) \subseteq {\mathcal
L}(H_2H_1)$, so there exists a matrix $G$ such that $T_1 - Q_2H_1
= G(H_2H_1)$.  Solving for $T_1$ gives $T_1 = (GH_2+Q_2)H_1$, and hence $\det H_1 \mid  \det T_1$.

Now consider part~\ref{Tfact1}.  Because $H_2H_1$ is a basis for
$\Rel(M,F)$, we have ${\mathcal L}(H_2) \subseteq \Rel(M,H_1F)$.
Thus, to show that $H_2$ is a basis for $\Rel(M,H_1F)$, it will
suffice to show that $\det H_2 \mid \det T_2$, where $T_2$ is
a basis of $\Rel(M,H_1F)$.   For such a basis $T_2$ we have
${\mathcal L}(T_2H_1) \subseteq \Rel(M,F)$.
Because $H_2H_1$ is a basis for $\Rel(M,F)$, we have
$\det H_2H_1 \mid  \det T_2 H_1$, from which it follows
that $\det H_2 \mid \det T_2$.
\end{proof}

\subsection{High level overview of the recursion}

\medskip
Our goal is to compute the Hermite basis $H$ of an input $\Rel(S,F)$.
Our divide and conquer algorithm 
factors $H = H_2 H_1$ as in~(\ref{eq:partn}) and
then has two parts. Part~1 constructs
from $\Rel(S,F)$ a subproblem $\Rel(S_1,F_1)$ that has basis $H_1$.
Part~2 then uses the result of part~1 to construct a subproblem
$\Rel(S_2,F_2)$ that has basis $H_2$.
From a cost point of view, the main challenge is to construct the
two subproblems so that they satisfy the preconditions mentioned
at the beginning of this section, that is, the inputs should be
coprime and the modulus should be in Smith form.  

\subsubsection{Recursion part 1: Determining $H_1$} 
\medskip
Recall that, expressed in terms of matrices, the Hermite basis $H$ of
$\Rel(S,F)$ is is the nonsingular Hermite form with minimal determinant that satisfies
\begin{equation} \mylabel{eq:defH} H F = \boldz \bmod S. \end{equation}
Decompose $F$ as
$$
F = \left [ \begin{array}{c} \bar{F}  \\
A \end{array} \right ],
$$
where $A$ is the last $n_2$ rows of $F$.
We can then rewrite (\ref{eq:defH}) as
\begin{equation} \mylabel{eq:h21f}
H_2 H_1 F =  {\left [ \begin{array}{cc} I_{n_1} & H_{12} \\
 & \bar{H}_2 \end{array} \right ]}
{\left [ \begin{array}{cc} \bar{H}_1 & \\
 & I_{n_2} \end{array} \right ]}
{\left [ \begin{array}{c} \bar{F} \\
A \end{array} \right ]} = \boldz \bmod  S.
\end{equation}
Part~1 is based on the following theorem.
\begin{theorem} \mylabel{thm:phase1} 
$H_1$ is the Hermite basis of 
\begin{equation} 
\Rel\left (\left [ \begin{array}{c} S \\ A \end{array} \right ], F \right ).
\mylabel{eq:first}
\end{equation}
Furthermore, if the inputs to  $\Rel(S,F)$ are coprime, then the inputs for (\ref{eq:first}) are also coprime.
\end{theorem}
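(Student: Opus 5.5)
Apply Theorem~\ref{thm:rec2}.\ref{Tfact2} to the factorization $H=H_2H_1$ of~(\ref{eq:partn}). This gives that $H_1$ is a basis for
$$\Rel\left(\left[\begin{array}{c} S\\ H_1F\end{array}\right],F\right).$$
It then suffices to show two things: this lattice equals $\Rel\left(\left[\begin{array}{c} S\\ A\end{array}\right],F\right)$, and $H_1$, being in Hermite form, is therefore its Hermite basis. Uniqueness of the Hermite basis of a lattice gives the second point immediately once the first is established, since $H_1=\diag(\bar H_1,I_{n_2})$ is in Hermite form whenever $\bar H_1$ is.

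For the lattice equality I will compare moduli via Lemma~\ref{lem:denequiv}.\ref{fact3}. The goal is
$$\mathcal L\left(\left[\begin{array}{c} S\\ H_1F\end{array}\right]\right)=\mathcal L\left(\left[\begin{array}{c} S\\ A\end{array}\right]\right).$$
Write $H_1F=\left[\begin{array}{c}\bar H_1\bar F\\ A\end{array}\right]$. The inclusion $\supseteq$ is clear, since $A$ consists of rows of $H_1F$. For $\subseteq$ it remains to show that the rows of $\bar H_1\bar F$ lie in $\mathcal L(S)+\mathcal L(A)$. The first $n_1$ rows of $HF\equiv \boldz \bmod S$ read $\bar H_1\bar F+H_{12}A=QS$ for some integer $Q$. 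Hence $\bar H_1\bar F=QS-H_{12}A$, which gives the inclusion. This is the key computation, and it uses the fact that $H_2$ has an identity block in its top-left corner.

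For coprimality, I must show that the Hermite basis of $\mathcal L(S)+\mathcal L(A)+\mathcal L(F)$ is $I$. Since $A$ is a submatrix of $F$, this lattice equals $\mathcal L(S)+\mathcal L(F)$. That lattice is all of $\Z^{1\times m}$ by the coprimality of $\Rel(S,F)$, and the full column rank condition holds because $S$ is nonsingular.

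I expect no serious obstacle. The only subtle point is applying Theorem~\ref{thm:rec2} with the correct modulus, $M=S$, and checking that the $H_1F$ appearing in the modulus reduces to $A$ modulo $\mathcal L(S)$ plus $\mathcal L(A)$ through the top block rows of~(\ref{eq:h21f}).
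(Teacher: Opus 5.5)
Your proof is correct and follows the paper's route. Both apply Theorem~\ref{thm:rec2}.\ref{Tfact2} with modulus $S$, reduce via Lemma~\ref{lem:denequiv}.\ref{fact3} to equality of the modulus lattices, and use the first block rows of $HF=\boldz \bmod S$; the paper packages this last step as the unimodular transformation~(\ref{eq:th}) where you prove two inclusions, and your observation $\mathcal L(A)\subseteq\mathcal L(F)$ spells out the coprimality claim the paper calls obvious.
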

\begin{proof}
Theorem~\ref{thm:rec2}.\ref{Tfact2} states that $H_1$ is a basis of 
\begin{equation} \mylabel{eq:same}
\Rel \left (\left [ \begin{array}{c} S \\ H_1 F \end{array} \right ],F \right ).
\end{equation}
Since Hermite forms are canonical, $H_1$ is 
the Hermite basis of~(\ref{eq:same}).

In order to show that the relations lattice~(\ref{eq:same}) equals
that of~(\ref{eq:first}), Lemma~\ref{lem:denequiv}.\ref{fact3}
implies that it will suffice to show that 
\begin{equation} \mylabel{eq:show}
{\mathcal L}\left (\left [ \begin{array}{c} S \\ H_1 F 
 \end{array} \right ] \right )=
{\mathcal L}\left (\left [ \begin{array}{c} S \\ 
A \end{array} \right ] \right ).
\end{equation}
Replacing $H_1F$ in the left hand side of~(\ref{eq:show}) using
$$
H_1 F = 
\left [ \begin{array}{cc} \bar{H}_1 & \\
 & I_{n_2} \end{array} \right ]
{\left [ \begin{array}{c} \bar{F} \\
A \end{array} \right ]} = \left [ \begin{array}{c} \bar{H}_1 \bar{F} \\
A \end{array} \right ],
$$
and adding a block of $n_1$ zero rows to the matrix on the right
hand side of~(\ref{eq:show}), we obtain the equation
\begin{equation} \mylabel{eq:show2}
{\mathcal L}\left (\left [ \begin{array}{c} S \\ \bar{H}_1 \bar{F} \\
A  \end{array} \right ] \right )=
{\mathcal L}\left (\left [ \begin{array}{c} S \\ \boldz \\ 
A \end{array} \right ] \right ),
\end{equation}
which is equivalent to~(\ref{eq:show}).
To show~(\ref{eq:show2}), note that~(\ref{eq:h21f}) 
gives the existence of an integer matrix $Q$ such that
\begin{equation} \mylabel{eq:th}
\left [ \begin{array}{ccc} 
I &  & \\
 Q & I_{n_1} & H_{12} \\
  & & I_{n_2} \end{array} \right ]
\left [ \begin{array}{c}  S \\ \bar{H}_1\bar{F} \\
A \end{array} \right ] = 
\left [ \begin{array}{c} S \\ \boldz \\
A \end{array} \right ] .
\end{equation}
Since the transformation in (\ref{eq:th}) is unimodular, it follows that
(\ref{eq:show2}) holds.

The ``furthermore'' claim in the statement of the theorem is obvious.
\end{proof}

\medskip

Starting with the input~(\ref{eq:first}), part~1 now computes
\begin{equation}\mylabel{process1}
\Rel \left  (  \left [ \begin{array}{c} S \\ A \end{array} \right ], F \right ) \rightarrow \Rel (T,F) \rightarrow \Rel(S_1,F_1).
\end{equation}
The first transformation replaces the modulus in the initial relations
lattice in~(\ref{process1}) with its Hermite basis $T$, while the
second transformation replaces $T$ with its Smith form $S_1$.  By
Theorem~\ref{thm:phase1}, the initial relations lattice in~(\ref{process1})
has coprime inputs, thus $\Rel(S_1,F_1)$ also has coprime inputs,
and by Theorem~\ref{thm:S}, we have $\det S_1= \det H_1$.

\medskip

\subsubsection{Recursion part 2: Determining $H_2$} 

\medskip
The second part of our recursion is
based on  the following theorem.  There the matrix $T$ is the Hermite
basis appearing in~(\ref{process1}), which has already been computed in part~1.
\begin{theorem} \mylabel{thm:phase2}
The following hold:
\begin{enumerate}
\item \mylabel{thm:phase2p1}
$H_2$ is the Hermite basis of $\Rel(S,H_1F)$.
\item \mylabel{thm:phase2p2}
The Hermite basis of
$$
\left [ \begin{array}{c} S \\ H_1 F \end{array}\right ]
$$
is equal to $T$.
\end{enumerate}
\end{theorem}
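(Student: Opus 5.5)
Part~\ref{thm:phase2p1} will follow from Theorem~\ref{thm:rec2}.\ref{Tfact1}, and part~\ref{thm:phase2p2} from the proof of Theorem~\ref{thm:phase1} together with the definition of $T$ in~(\ref{process1}).

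\emph{Part~\ref{thm:phase2p1}.} By hypothesis $H = H_2H_1$ is the Hermite basis of $\Rel(S,F)$. Both factors in~(\ref{eq:partn}) are nonsingular, since their diagonal blocks are nonsingular. Theorem~\ref{thm:rec2}.\ref{Tfact1} therefore applies directly and shows that $H_2$ is a basis for $\Rel(S,H_1F)$. The matrix $H_2 = \left[\begin{smallmatrix} I_{n_1} & H_{12} \\ & \bar{H}_2 \end{smallmatrix}\right]$ is itself in Hermite form. This uses two facts about $H$: its off-diagonal entries are nonnegative and reduced modulo the diagonal entry of their column, and the columns of $H_{12}$ are columns of $H$ whose diagonal entries lie in $\bar{H}_2$. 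Since the Hermite basis of a lattice is unique, $H_2$ is the Hermite basis of $\Rel(S,H_1F)$.

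\emph{Part~\ref{thm:phase2p2}.} By construction in~(\ref{process1}), $T$ is the Hermite basis of $\left[\begin{smallmatrix} S \\ A \end{smallmatrix}\right]$. In the proof of Theorem~\ref{thm:phase1} we established the lattice equality~(\ref{eq:show}),
\[
{\mathcal L}\left(\left[\begin{array}{c} S \\ H_1F \end{array}\right]\right) = {\mathcal L}\left(\left[\begin{array}{c} S \\ A \end{array}\right]\right).
\]
That argument used the unimodular transformation~(\ref{eq:th}), whose existence comes from $H_2H_1F = \boldz \bmod S$. Two matrices that generate the same lattice have the same Hermite basis, so the Hermite basis of $\left[\begin{smallmatrix} S \\ H_1F \end{smallmatrix}\right]$ is $T$.

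Neither step should be a real obstacle, because both reduce to results already proved. The only point needing care is in part~\ref{thm:phase2p1}: one must confirm that $H_2$ as defined in~(\ref{eq:partn}) genuinely satisfies the Hermite form conditions, so that uniqueness upgrades ``a basis'' to ``the Hermite basis''. This check uses that $H_{12}$ and $\bar{H}_2$ are taken from the Hermite form $H$ itself.
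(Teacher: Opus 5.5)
Your proposal is correct and matches the paper's proof. Part~1 is Theorem~\ref{thm:rec2}.\ref{Tfact1} combined with uniqueness of the Hermite form, and part~2 is the lattice equality~(\ref{eq:show}) from the proof of Theorem~\ref{thm:phase1}; your explicit check that $H_2$ itself satisfies the Hermite form conditions is correct and simply spells out a detail the paper leaves implicit.
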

\begin{proof}
Part~\ref{thm:phase2p1} follows from Theorem~\ref{thm:rec2}.\ref{Tfact1}
combined with the fact that Hermite forms are canonical.
Part~\ref{thm:phase2p2} follows as a corollary of the proof of
Theorem~\ref{thm:phase1}, which shows~(\ref{eq:show}).
\end{proof}

The initial step of part~2  is to compute $B = \colmod(H_1F,S)$.
Then Theorem~\ref{thm:phase2}.\ref{thm:phase2p1} states
that $H_2$ is the Hermite basis of $\Rel(S,B)$. However, 
$S$ and $B$ may not be coprime and so we will still need to do
\begin{equation}\mylabel{process2}
\Rel(S,B) \rightarrow \Rel(K,C) \rightarrow \Rel(S_2,F_2).
\end{equation}
The first transformation is to a $\Rel(K,C)$ having coprime inputs, and
is based Theorem~\ref{thm:bas} which states that
$$
\left [ \begin{array}{ccc} T & & I \\ B & I & \\
S & & \end{array} \right ] \mbox{~~has Hermite basis~~}
\left [ \begin{array}{ccc} T & & \ast \\ & I & C \\ & & K \end{array} \right ],
$$
where $T$ is the Hermite basis of Theorem~\ref{thm:phase2}.\ref{thm:phase2p2}
which has already been computed in part~1.
The second transformation replaces $K$ with its Smith form
$S_2$. Since $(K,C)$ are coprime so are $(S_2, F_2)$. 
As was the case with part~1, 
having $\Rel(S_2,F_2)$ with coprime inputs 
means that $\det S_2 = \det H_2$.

\medskip
\subsubsection{Base case} 
\medskip

The base case for the recursion occurs when two conditions hold:
the inputs $(S,F)$ have column dimension one, and 
the Hermite basis of $\Rel(S,F)$ is known to have at most one nontrivial column.
Example~\ref{ex:1} shows that the latter condition 
does not follow from the former.

\begin{lemma} \mylabel{lem:tbasecase} 
Let $\Rel(sI_1 ,F)$ have coprime inputs with $s \in \Z_{>0}$ and
$F=\colmod(F, sI_1)$.  If the Hermite basis $H$ of $\Rel(sI_1,F)$
is known to be index $(k,1)$, then $H$ can be computed in $O(k \,
\M(\log s) + \B(\log s))$ bit operations, or, more simply, by
$O(n\,\B(\log s))$ bit operations. Here $n$ is the row dimension
of $F$.
\end{lemma}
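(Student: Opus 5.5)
Since $H$ is index $(k,1)$, it has the shape
\[
H=\left[\begin{array}{ccc} I_k & h & \\ & d & \\ & & I_{n-k-1}\end{array}\right],
\]
with a single nontrivial column $k+1$, whose entries above the diagonal form a vector $h\in\Z^{k}$ and whose diagonal entry is $d$. The plan is to read off $d$ and $h$ directly from the column vector $F=[f_1,\ldots,f_n]^T$ and the modulus $s$. First we pin down $d$. The inputs are coprime, so Theorem~\ref{thm:S} gives $\det H = s$, and therefore $d=s$.

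Next we show that rows $k+2,\ldots,n$ of $F$ vanish. Each of these rows of $H$ is a unit vector $e_i$, and $e_i\in\Rel(sI_1,F)$ means $f_i\equiv 0 \bmod s$. Because $F$ is reduced column-modulo $s$, this forces $f_i=0$. The same argument applied to the row $e_{k+1}\,d=s\,e_{k+1}$ gives no constraint, which is consistent.

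Then we determine $h$. Row $i\le k$ of $H$ is $e_i+h_ie_{k+1}$, so it must satisfy $f_i+h_if_{k+1}\equiv 0\bmod s$ with $0\le h_i<s$. Coprimality means $\gcd(s,f_1,\ldots,f_n)=1$. Since $f_{k+2},\ldots,f_n$ are zero, the Hermite shape forces every relation of the form $\sum_{i\le k} c_ie_i+c_{k+1}e_{k+1}$ to be generated by the rows of $H$. From this one shows that $f_{k+1}$ is a unit modulo $s$: if $g=\gcd(f_{k+1},s)>1$, then $(s/g)e_{k+1}$ would be a relation, which contradicts the diagonal entry being $s$. Hence we compute $u=f_{k+1}^{-1}\bmod s$ with one extended gcd, at cost $O(\B(\log s))$. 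Then $h_i=\operatorname{rem}(-f_iu,s)$ for $i=1,\ldots,k$, and these values are unique, being the reduced entries satisfying the congruence.

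For the cost, there is one gcd/inversion at $\B(\log s)$, and then $k$ modular multiplications at $O(\M(\log s))$ each, because every $f_i$ is already reduced below $s$. This gives $O(k\,\M(\log s)+\B(\log s))$, and since $k\le n$ and $\M\le\B$ it is also $O(n\,\B(\log s))$. The only delicate step is the justification that $f_{k+1}$ is invertible modulo $s$. If the direct argument above is awkward to phrase, the alternative is to note that $H$ being a basis with $e_{k+1}$-coefficient exactly $s$ in $\mathcal{L}(H)\cap \Z e_{k+1}$ is equivalent to $\gcd(f_{k+1},s)=1$ once the other $f_i$ are accounted for.
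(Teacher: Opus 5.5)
Your proposal is correct and follows essentially the same route as the paper. Like the paper, you use Theorem~\ref{thm:S} to get the diagonal entry $s$, note that the trailing entries of $F$ vanish, and recover the off-diagonal entries from $h_i f_{k+1} \equiv -f_i \bmod s$ using one modular inversion and $k$ modular multiplications. You also justify two points the paper leaves implicit: that $f_{k+2},\ldots,f_n$ are zero, and that $f_{k+1}$ is a unit modulo $s$ (via the relation $(s/g)e_{k+1}$), which is what makes the paper's ``unique vector'' claim hold.
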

\begin{proof}
It follows from Theorem~\ref{thm:S} that $\det H = \det sI_1 = s$.
Since $H$ is index $(k,1)$, the matrices $H$ and $F$ in the identity
$HF = \boldz \bmod S$ can be decomposed conformally as
\begin{equation*} \mylabel{eq:basesat}
\stackrel{\textstyle H}{\left [ \begin{array}{c|c|c} I_k & C & \\\hline
& s & \\\hline
 & & I \end{array} \right ]}
\stackrel{\textstyle F}{\left [ \begin{array}{c} F_{1\ldots k,1} \\\hline
 F_{k+1,1} \\\hline 
\boldz \end{array} \right ]} =
\boldz \bmod s.
\end{equation*}
Then $C$ is the unique vector  in $[0,s)^{k \times 1}$ such that
$CF_{k+1,1} = - F_{1\ldots k,1} \bmod s$.  Computing $C$ has the
stated cost.
\end{proof}

\begin{example} \mylabel{examp:alg}
Let $M \in \Z^{3 \times 3}$ be the matrix from Example~\ref{ex:1},
with a Smith massager $(S,F)  = ( \left [ \begin{array}{c}
24\end{array} \right ],  \left [\begin{array}{ccc} 19 & 10 &
3\end{array}\right]^T)$. 
Letting $n_1 = 2$ and $n_2 =1$,
Theorem~\ref{thm:phase1} implies that $H_1$ is the Hermite basis of
$\Rel(\left [ \begin{array}{cc} 24 &3 \end{array} \right ]^T, F)$,
which part~1 transforms to $\Rel(\left [ \begin{array}{c} 3\end{array} \right ], F)$ since
$\left [ \begin{array}{c} 3\end{array} \right ]$
is the Hermite basis of $\left [ \begin{array}{cc}24 & 3\end{array}
\right ]^T$.  The Hermite basis of $\Rel(\left [ \begin{array}{c}
3\end{array} \right ], F)$ is
$$H_1 = \left[
\begin{array}{ccc} 1 & 2 &   \\   & 3 &   \\   &  & 1 \end{array}
\right]. $$ 
For the second part of the recursion we have $B = \colmod(H_1 F , S) =  
\left [ \begin{array}{ccc}  15 & 6 & 3 \end{array} \right ]^T$.
Theorem~\ref{thm:phase2} implies that $H_2$ is the Hermite basis of
$\Rel(S,B) = \Rel(\left [ \begin{array}{c} 24 \end{array} \right ], \left [ \begin{array}{ccc} 15 &6 & 3\end{array} \right ]^T)$,
which does not have coprime inputs. Part~2 transforms 
$\Rel(S,B) \rightarrow
\Rel(\left [ \begin{array}{c} 8 \end{array} \right ], \left [ \begin{array}{ccc} 5 &2 & 1\end{array} \right ]^T)$,
with coprime inputs, and with an index $(3,1)$ Hermite basis $$H_2 = \left[ \begin{array}{ccc}
1 &   &  3\\   & 1 &6  \\   &  & 8 \end{array} \right]. $$ Combining $H_1$ and $H_2$
verifies that  $H = H_2 H_1$ as given earlier in Example~\ref{ex:1}.
\end{example}

\subsection{Algorithm \texttt{HermiteBasis}}

\medskip
Algorithm \texttt{HermiteBasis} is shown in Figure~\ref{fig:hermbas}.
For an input $(S,F,k,m)$, the precondition of the algorithm requires
that the Hermite basis $H$ of $\Rel(S,F)$ is index $(k,m)$.  For
any partition $m = m_1+ m_2$ we can factor $H=H_2H_1$ into an index
$(k,m_1)$ Hermite form $H_1$ and an index $(k+m_1,m_2)$ Hermite
form $H_2$ as follows:
\begin{eqnarray} 
 \overbrace{\left [ \begin{array}{cccc} I_k & \ast & \ast  &  \\
    & \bar{H}_1 & \ast & \\
    &  & \bar{H}_2 &  \\
& & & I_{n-k-m}
\end{array} \right ]}^{\textstyle H} = \overbrace{\left [ \begin{array}{cccc} I_k &      & \ast  &  \\
    &  I_{m_1} &  \ast   & \\
    &  & \bar{H}_2 &  \\
& & & I_{n-k-m}
\end{array} \right ]}^{\textstyle H_2}
 \overbrace{\left [ \begin{array}{cccc} I_k & \ast &       &  \\
    & \bar{H}_1 &      & \\
    &  & I_{m_2} &  \\
& & & I_{n-k-m}
\end{array} \right ]}^{\textstyle H_1} \\
 .\mylabel{eq:partm}
\end{eqnarray}
Note that the partition $m=m_1+m_2$ in~(\ref{eq:partm}) corresponds
to the partition $n=n_1+n_2$ in~(\ref{eq:partn}), with $n_1=k+m_1$
and $n_2 = n-(k+m_1)$.

In addition to the dimension parameter $m$, which the algorithm
reduces by a factor of two for the subproblems, the cost
also depends on the precision parameter $\det S$. 
As explained in the previous subsection, the algorithm ensures that
the inputs to each subproblem are coprime. Although not guaranteeing
an even split of the precision, the fact that the inputs are coprime
does achieve that $\det S = (\det S_1)(\det S_2)$.

\begin{figure}
\noindent
\texttt{HermiteBasis}$[\epsilon](S,F,k,m)$\\
\textbf{Input:} 
\begin{itemize}
\item $S \in \Z^{m \times m}$, nonsingular and in Smith form.
\item $F \in \Z^{n \times m}$, reduced column-modulo $S$.
\item Nonnegative $k$ with $0 \leq  k \leq n-m$.
\item $\epsilon$, a nonzero probability
\item[] \# Precondition: Inputs to $\Rel(S,F)$ are coprime.\\
        \# Precondition: $\Rel(S,F)$ has an index $(k,m)$ Hermite basis.
\end{itemize}
\textbf{Output:} 
\begin{itemize}
\item The Hermite basis $H$ of $\Rel(S,F)$, or \textsc{fail}.
\end{itemize}
\# Note: If any call to \texttt{SmithMassager} returns \textsc{fail}, then return \textsc{fail}.\\
\If $m=1$ \Then \\
\ind{1} \Return the Hermite basis of $\Rel(S,F)$ \hfill \textit{\# Lemma~\ref{lem:tbasecase}}\\
\Else \\
\ind{1} $m_1,m_2 := \lfloor m/2 \rfloor,\lceil m/2 \rceil$\\[1em]
\ind{1} \# Part 1: Compute $H_1$\\
\ind{1} Let $A$ be the last $n-k-m_1$ rows of $F$.\\
\ind{1} $T := $ the Hermite basis of ${\mathcal L}(A) + {\mathcal L}(S)$ 
\hfill \textit{\# Theorem~\ref{thm:hermbas}.\ref{thm:hb1}}\\
\ind{1} $V_1,S_1 := $ \texttt{SmithMassager}$[\frac{\epsilon}{4}](T)$ \hfill \textit{\# Theorem~\ref{thm:costSM}}\\
\ind{1} $F_1 := \colmod(FV_1,S_1)$ \hfill \textit{\# Theorem~\ref{thm:partial}}\\
\ind{1} Decompose $S_1= \diag(I_{m-m_1},\bar{S}_1)$ and 
$F_1 = \left [ \begin{array}{c|c} \boldz & \bar{F}_1 \end{array} \right ]$ conformally.\\
\ind{1} $H_1 := $
 \texttt{HermiteBasis}$[\frac{\epsilon}{4}](\bar{S}_1,\bar{F}_1,k,m_1)$\\[1em]
\ind{1} \# Part 2: Compute $H_2$\\
\ind{1} $B := \colmod(H_1F,S)$ \hfill \textit{\# Corollary~\ref{cor:partial}}\\
\ind{1} $\left [\begin{array}{ccc} T & & \ast \\
  & I & C\\
  & &  K \end{array}\right ] : = $ the  Hermite basis of $
\left [ \begin{array}{ccc} T & & I \\
B & I &  \\
S & &  \end{array}\right ]$   \hfill \textit{\# Theorem~\ref{thm:hermbas}.\ref{thm:hb2}}\\
\ind{1} $V_2,S_2 := $ \texttt{SmithMassager}$[\frac{\epsilon}{4}](K)$ \hfill \textit{\# Theorem~\ref{thm:costSM}}\\
\ind{1} $F_2 := \colmod(CV_2,S_2)$ \hfill \textit{\# Theorem~\ref{thm:partial}}\\
\ind{1} Decompose $S_2= \diag(I_{m-m_2},\bar{S}_2)$ and 
$F_2 = \left [ \begin{array}{c|c} \boldz & \bar{F}_2 \end{array} \right ]$ conformally.\\
\ind{1} $H_2 := $ 
\texttt{HermiteBasis}$[\frac{\epsilon}{4}](\bar{S}_2,\bar{F}_2,k+m_1,m_2)$\\[1em]
\ind{1} \Return $H_2 H_1$
\caption{Algorithm \texttt{HermiteBasis}}
\mylabel{fig:hermbas}
\end{figure}

\subsection{Correctness of Algorithm~\texttt{HermiteBasis}}

\medskip
We split the proof into two parts.  Our first theorem proves
correctness as per the definition of an algorithm,  that is, that
for any input the return value of the algorithm matches the output
specification.  The second theorem bounds by $\epsilon$ the probability
of \textsc{fail} being returned.  For  these results, we implicitly
assume that the algorithm is called with a sequence $[\epsilon](S,F,k,m)$
that satisfies all the specified input conditions.

\begin{theorem} \mylabel{thm:hbcorr}
Algorithm \texttt{HermiteBasis} is correct. That is,  it either returns the Hermite basis of $\Rel(S,F)$, or reports \textsc{fail}.
\end{theorem}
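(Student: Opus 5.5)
We argue by strong induction on $m$, assuming throughout that no call to \texttt{SmithMassager} returns \textsc{fail}; if one does, the algorithm reports \textsc{fail} and there is nothing to prove. For $m=1$ the preconditions are exactly those of Lemma~\ref{lem:tbasecase}: the inputs are coprime, $F$ is reduced, and the basis is index $(k,1)$. So the base case returns the Hermite basis of $\Rel(S,F)$. For $m>1$ let $H$ be the Hermite basis of $\Rel(S,F)$, which is index $(k,m)$ by hypothesis. Factor $H=H_2H_1$ as in~(\ref{eq:partm}), so $H_1$ is index $(k,m_1)$ and $H_2$ is index $(k+m_1,m_2)$, and this corresponds to the partition~(\ref{eq:partn}) with $n_1=k+m_1$. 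It then suffices to show that the two recursive calls satisfy their preconditions and compute exactly these $H_1$ and $H_2$. The returned product $H_2H_1$ then equals $H$.

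\textbf{Part 1.} Here $A$ is the last $n_2=n-k-m_1$ rows of $F$. By Theorem~\ref{thm:phase1}, $H_1$ is the Hermite basis of $\Rel([S;A],F)$, and these inputs are coprime. Lemma~\ref{lem:compress}.\ref{compp1} replaces the modulus by its Hermite basis $T$. Lemma~\ref{lem:diag}, applied with the Smith massager $(S_1,V_1)$ of $T$, gives $\Rel(T,F)=\Rel(S_1,FV_1)$, and Lemma~\ref{lem:denequiv}.\ref{fact4} allows reducing to $F_1=\colmod(FV_1,S_1)$. Coprimality is preserved by each step: adding multiples of the modulus does not change ${\mathcal L}(M)+{\mathcal L}(F)$, and right multiplication by $V_1$ together with replacing the modulus by one with the same lattice preserves the Hermite basis of $\Rel$. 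By Theorem~\ref{thm:S} the relevant Smith form then has equal determinant. More directly, I would verify coprimality using condition (ii) of Definition~\ref{def:sm}.

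Since $H_1$ is index $(k,m_1)$, Corollary~\ref{cor:atmostm} shows that $S_1$ has at most $m_1$ nontrivial invariant factors. Hence the decomposition $S_1=\diag(I_{m-m_1},\bar S_1)$ is valid, and Lemma~\ref{lem:sreduce} gives $\Rel(S_1,F_1)=\Rel(\bar S_1,\bar F_1)$. The inputs $\bar S_1,\bar F_1$ remain coprime, since the dropped columns correspond to trivial moduli, and $\bar F_1$ is reduced. Also $k\le n-m_1$. So the call on $(\bar S_1,\bar F_1,k,m_1)$ meets all preconditions, and by induction it returns $H_1$.

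\textbf{Part 2.} By Theorem~\ref{thm:phase2}.\ref{thm:phase2p1} and Lemma~\ref{lem:denequiv}.\ref{fact4}, $H_2$ is the Hermite basis of $\Rel(S,B)$ with $B=\colmod(H_1F,S)$. The main point to check here is that the $T$ reused from Part~1 is the correct block. Theorem~\ref{thm:phase2}.\ref{thm:phase2p2} says that $T$ is the Hermite basis of $[S;H_1F]$, and hence also of $[S;B]$, since $B-H_1F\in{\mathcal L}(S)$. So Theorem~\ref{thm:bas} applies and yields $\Rel(S,B)=\Rel(K,C)$ with coprime inputs. The same Smith massager argument as in Part~1, using Lemma~\ref{lem:diag} and Lemma~\ref{lem:denequiv}.\ref{fact4}, converts this to $\Rel(S_2,F_2)$ with coprime inputs. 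Corollary~\ref{cor:atmostm}, applied to the index $(k+m_1,m_2)$ basis $H_2$, justifies truncating to $\bar S_2$ via Lemma~\ref{lem:sreduce}. The condition $k+m_1\le n-m_2$ holds because $k\le n-m$. By induction the second call returns $H_2$, which completes the proof.

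The hardest step is keeping track of coprimality through the Smith massager transformation. I would handle it by showing that $\Rel(T,F)$ coprime implies $\Rel(S_1,FV_1)$ coprime. The argument: since $(S_1,V_1)$ is a Smith massager for $T$, there is $W$ with $WV_1\equiv I \bmod S_1$, so ${\mathcal L}(S_1)+{\mathcal L}(FV_1)$ contains the images under $V_1$ of ${\mathcal L}(T)+{\mathcal L}(F)={\mathcal L}(I)$. This forces the sum to be the full lattice.
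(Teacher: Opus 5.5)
Your proposal is correct and follows the paper's proof: strong induction on $m$, with the chain Theorem~\ref{thm:phase1}, Lemma~\ref{lem:compress}, Lemma~\ref{lem:diag}, Corollary~\ref{cor:atmostm}, Lemma~\ref{lem:sreduce} certifying the first recursive call, and Theorem~\ref{thm:phase2}, Lemma~\ref{lem:denequiv}, Theorem~\ref{thm:bas} plus the same tail certifying the second. Your closing argument that coprimality survives the Smith massager step (the rows of $TV_1$ lie in ${\mathcal L}(S_1)$, and $WV_1\equiv I \bmod S_1$), and your check that the reused $T$ is also the Hermite basis of $[S;B]$, make explicit two points the paper leaves implicit. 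The earlier sentence appealing to Theorem~\ref{thm:S} for coprimality is not needed, and as an argument it would require the converse of that theorem, so you can drop it.
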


\begin{proof}
The return value in the nonrecursive ``\textbf{if}'' case of the algorithm is correct by definition. 
If any of the calls to
\texttt{SmithMassager} returns \textsc{fail}, then the algorithm correctly 
returns \textsc{fail}.
Assume henceforth that we are in  the recursive
``\textbf{else}'' case, and that none of the calls to \texttt{SmithMassager} 
return \textsc{fail}.  As shown in~(\ref{eq:partm}), we decompose
the Hermite basis $H$ of $\Rel(S,F)$ uniquely as $H=H_2H_1$, where
$H_1$ is index $(k,m_1)$ and $H_2$ is index $(k+m_1,m_2)$.  In the
next two paragraphs, we will show that the inputs
$(\bar{S}_1,\bar{F}_1,k,m_1)$ and $(\bar{S}_2,\bar{F}_2,k+m_1,m_2)$
that are constructed in the ``\textbf{else}'' case 
satisfy the preconditions
of the algorithm, and, moreover, that $H_1$ is the Hermite basis
of $\Rel(\bar{S}_1,\bar{F}_1)$ and $H_2$ is the Hermite basis of
$\Rel(\bar{S}_2,\bar{F}_2)$. 
Correctness of the algorithm will then follow from strong induction
on $m$, with base case $m=1$.

Consider the construction of the first subproblem
$(\bar{S}_1,\bar{F}_1,k,m_1)$.  We claim that
\begin{eqnarray}
{\mathcal L}(H_1) & = &
\Rel \left ( \left [ \begin{array}{c} S \\ A \end{array} \right ], F \right ) \mylabel{eq:y0}\\
 &= & 
\Rel (T,F)   \mylabel{eq:y1} \\
 & = & \Rel(S_1,F_1) \mylabel{eq:y2} \\
 & = & \Rel(\bar{S}_1, \bar{F}_1) \mylabel{eq:y3}.
\end{eqnarray}
Line~(\ref{eq:y0}) follows from Theorem~\ref{thm:phase1}, while
line~(\ref{eq:y1}) follows from Lemma~\ref{lem:compress} (transformation
to Hermite modulus).  Line~(\ref{eq:y2}) follows from Lemma~\ref{lem:diag}
(transformation to Smith modulus).  By Theorem~\ref{thm:phase1},
the input on the right hand side of (\ref{eq:y0}) is coprime.  This
allows us to apply Theorem~\ref{thm:S}, which states that,
up to trivial invariant factors, $S_1$ is the Smith form of $H_1$.
By Corollary~\ref{cor:atmostm}, the fact that $H_1$ 
is index $(k,m_1)$ means that it has at most $m_1$ invariant factors.
This shows that that $S_1$ and $F_1$ can be decomposed as 
$\diag(I_{m-m_1},\bar{S}_1)$ and $F_1 = \left [ \begin{array}{c|c} \boldz & \bar{F}_1 \end{array} \right ]$.
Line~(\ref{eq:y3}) now follows from Lemma~\ref{lem:sreduce}.
By construction, $\bar{S}_1 \in \Z^{m_1 \times m_1}$ is nonsingular
and in Smith form, $\bar{F}_1 \in \Z^{n \times m_1}$ is
reduced column-modulo $\bar{S}_1$, and the inputs to $\Rel(\bar{S}_1,\bar{F}_1)$ are coprime.
Finally, from~(\ref{eq:y3}) we have that
$\Rel(\bar{S}_1,\bar{F}_1)$ has an index $(k,m_1)$ Hermite basis, namely $H_1$.  

\medskip

Now consider the construction of the second subproblem
$(\bar{S}_1,\bar{F}_1,k+m_1,m_2)$. We claim that
\begin{eqnarray}
{\mathcal L}(H_2) & = & \Rel(S,H_1F) \mylabel{eq:x0} \\
 & = & \Rel(S,B) \mylabel{eq:x1} \\
 & = & \Rel(K,C) \mylabel{eq:x2} \\
 & = & \Rel(S_2,F_2)\mylabel{eq:x3} \\
 & = & \Rel(\bar{S}_2, \bar{F}_2) \mylabel{eq:x4}
\end{eqnarray}
Line~(\ref{eq:x0}) follows from Theorem~\ref{thm:phase2}.
Since $B=\colmod(H_1F,S)$, line~(\ref{eq:x1}) 
follows from Lemma~\ref{lem:denequiv}.\ref{fact4}.
Line~(\ref{eq:x2}) follows from Theorem~\ref{thm:bas},
which also states that the inputs to $\Rel(K,C)$ are coprime.
Correctness of lines~(\ref{eq:x3}) and (\ref{eq:x4}), and the remainder
of the proof of correctness of this subproblem construction,
is analogous to that for the first subproblem given above.
\end{proof}

\begin{theorem} \mylabel{thm:hbprob}
Algorithm \texttt{HermiteBasis} returns
\textsc{fail} with probability at most~$\epsilon$.
\end{theorem}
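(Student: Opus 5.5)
We argue by strong induction on $m$, the claim being that for every admissible input $[\epsilon](S,F,k,m)$ the probability that \textsc{fail} is returned is at most $\epsilon$. The only randomized subroutine is \texttt{SmithMassager}. Hermite basis computations via Theorem~\ref{thm:hermbas}, the $\colmod$ products, and the base case of Lemma~\ref{lem:tbasecase} are deterministic and never report \textsc{fail}. Hence \textsc{fail} is returned exactly when some call to \texttt{SmithMassager}, at some level of the recursion, reports \textsc{fail}.

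The base case $m=1$ is immediate, since no call to \texttt{SmithMassager} is made and the failure probability is $0\le\epsilon$. For $m>1$, the top level makes four potentially failing calls: \texttt{SmithMassager}$[\epsilon/4](T)$, the recursive call \texttt{HermiteBasis}$[\epsilon/4](\bar S_1,\bar F_1,k,m_1)$, \texttt{SmithMassager}$[\epsilon/4](K)$, and the recursive call \texttt{HermiteBasis}$[\epsilon/4](\bar S_2,\bar F_2,k+m_1,m_2)$.

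By Theorem~\ref{thm:costSM}, each \texttt{SmithMassager} call fails with probability at most $\epsilon/4$. The proof of Theorem~\ref{thm:hbcorr} shows that, whenever the earlier steps did not fail, the inputs passed to each recursive call satisfy all preconditions of the algorithm. Since $m_1,m_2<m$, the induction hypothesis then gives failure probability at most $\epsilon/4$ for each recursive call.

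The one subtlety is that the later calls are only made, and only have well-defined inputs, conditioned on the earlier calls succeeding. Their inputs ($T$, $K$, and the subproblem data) are themselves random only through which valid Smith massager was returned. So I will phrase the bound sequentially. Let $E_i$ be the event that the $i$-th of the four calls is reached and fails. Conditioned on the entire history up to the moment call $i$ is made, that history determines a valid input. Because the bound $\epsilon/4$ holds uniformly over all valid inputs, the bound is via Theorem~\ref{thm:costSM} for the \texttt{SmithMassager} calls and via the induction hypothesis for the recursive calls. Hence $\Pr[E_i]\le\epsilon/4$ after averaging over histories. A union bound then gives $\Pr[\textsc{fail}]\le\sum_{i=1}^4\Pr[E_i]\le\epsilon$.

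The main obstacle is justifying that the induction hypothesis applies to every possible realization of the subproblem inputs. This is exactly why the induction must be over all valid inputs with the given $m$, and why correctness of the constructed preconditions (from Theorem~\ref{thm:hbcorr}) is needed before applying it.
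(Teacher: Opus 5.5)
Your proof is correct and takes the same route as the paper: strong induction on $m$, with the top-level failure probability bounded by the two \texttt{SmithMassager}$[\epsilon/4]$ calls plus the two recursive calls at $\epsilon/4$, i.e.\ $T[\epsilon](m) \le 2(\epsilon/4) + T[\epsilon/4](\lfloor m/2\rfloor) + T[\epsilon/4](\lceil m/2\rceil) \le \epsilon$. Your explicit conditioning on the history, and your requirement that the hypothesis range over all valid inputs and all $\epsilon$, make rigorous what the paper's recurrence leaves implicit.
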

\begin{proof} 
We use strong induction on $m$, with base case $m=1$.
Let $T[\epsilon](m)$ be an upper bound on the 
probability that the algorithm
returns \textsc{fail} with a calling sequence
$[\epsilon](\ast,\ast,\ast,m)$. 
Then $T[\epsilon](1)=0$.   The inductive hypothesis is
$T[\epsilon](\bar{m}) \leq \epsilon$ 
for all $1\leq \bar{m} < m$, some $m > 1$.
We then  obtain
$T[\epsilon](m) \leq 2(\epsilon/4) + T[\epsilon/4](\lfloor m/2 \rfloor)
+T[\epsilon/4](\lceil m/2 \rceil)
  \leq  \epsilon$
by using the inductive hypothesis twice.
\end{proof} 

\section{Modular computation of Hermite forms}\mylabel{sec:hermmodular}

Algorithm \texttt{HermiteBasis} requires computing the Hermite
bases of matrices
\begin{equation} \mylabel{eq:shape1}
M_1 := \left [ \begin{array}{c} A \\
S \end{array} \right ]
\mbox{~~and~~} 
M_2 := \left [ \begin{array}{ccc} T & & I \\
B & I &  \\
S & &  \end{array}\right ],
\end{equation}
where $S \in \Z^{m \times m}$ is a nonsingular Smith form.  Part~1
of of \texttt{HermiteBasis} computes the Hermite basis $T$ of $M_1$.
Part~2 then constructs $B$ to complete the definition of the matrix
$M_2$.  A salient feature of $M_2$ is that $T$ is the Hermite basis
of ${\mathcal L}(B) + {\mathcal L}(S)$.

As pointed out in Example~\ref{ex:howell}, the Hermite bases of
$M_1$ and $M_2$ can be computed using modular arithmetic over
$\Z/(s^2)$ where $s$ is the largest invariant factor in $S$.  In
this section we show that unimodular transformation matrices bringing
$M_1$ and $M_2$ into Hermite form can also be computed over $\Z/(s^2)$.
These transformation matrices are used in the fast 
algorithm develped in Section~\ref{sec:hermbas}.

Our approach is to develop a unified algorithm that can be applied
to both input matrices in~(\ref{eq:shape1}).
To this end, we augment $M_1$ with a square block of zeros 
and then transform it to Hermite form in two steps:
\begin{equation} \mylabel{eq:hbastran}
\left [ \begin{array}{c} \boldz \\
 A \\
S \end{array} \right ] \stackrel{1}{\longrightarrow}
\left [\begin{array}{c} T \\
 A \\
S \end{array} \right ] \stackrel{2}{\longrightarrow}
\left [\begin{array}{c} T \\
\phantom{\boldz} \\
\phantom{\boldz} \end{array} \right ].
\end{equation} 
Once $T$ is computed,
step~2, with $B$ replacing $A$, is used to compute the Hermite form of $M_2$.
Because of the structure of $M_2$, the transformation matrix achieving
step~2 in~(\ref{eq:hbastran}) will reveal the blocks of the Hermite
form of $M_2$ that we need.

We remark that, as usual, $\ast$ is used a placeholder for a scalar/matrix that
exists but does not necessarily need to be computed explicitly. 
As an aid to the reader we do however identify 
the blocks $\ast_1$, $\ast_2$, $\ast_3$ and $\ast_4$ 
as these arise in future constructions.
Our first result gives step~1 in~(\ref{eq:hbastran}).

\begin{theorem} \mylabel{cor:hfcomp}
Let $S \in \Z^{m \times m}$ be a nonsingular Smith form with
largest invariant factor $s$, and let $A \in \Z^{n \times m}$ 
have entries reduced modulo $s$. Then the Hermite basis $T$  of ${\mathcal L}(A) 
+ {\mathcal L}(S)$
together with a matrix $E \in [0,s)^{m \times n}$ such that 
\begin{equation} \mylabel{eq:Ein} 
\left [ \begin{array}{c|c|c} I & E & \ast_1 \\\hline
 & I \\\hline
 & & I \end{array} \right ]  
\left [ \begin{array}{c}  \boldz \\\hline A \\\hline S \end{array} \right ] =
\left [ \begin{array}{c}  T \\\hline A \\\hline S \end{array} \right ] 
\end{equation}
for some $\ast_1 \in \Z^{m \times m}$, can be computed in
$O((n+m)m^{\omega-1}\, \B(\log s))$  bit operations.
\end{theorem}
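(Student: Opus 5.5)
The plan is to compute $T$ and the multiplier $E$ together, using a Howell-form computation over $\Z/(s^2)$ as in Theorem~\ref{thm:hermcomp}. The first step is to show that $T$ can be written as a combination of the rows of $A$ plus an integer multiple of $S$. Since $sI = S^\ast S$ with $S^\ast = sS^{-1}$ integral, we have $sI \subseteq {\mathcal L}(S) \subseteq {\mathcal L}(M_1)$, where $M_1 = [A; S]$. By Lemma~\ref{lem:fieker}, $T$ is therefore the canonical lift of the Howell form of $M_1$ over $\Z/(s^2)$. Because $T$ lies in ${\mathcal L}(A)+{\mathcal L}(S)$, there exist integer matrices $E_0$ and $Q$ with $T = E_0A + QS$. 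Once any such $E_0$ is known, we may reduce it modulo $s$. Indeed, writing $E_0 = E + sR$ gives $sRA = R\,(sI)A = (RAS^\ast)S$, and this can be absorbed into the $S$-part. The result is $E\in[0,s)^{m\times n}$ with $T = EA + \ast_1 S$ for some integer $\ast_1$, which is exactly identity~(\ref{eq:Ein}).

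The second step is to compute $E_0$ modulo $s^2$, or only modulo $s$, with a transformation-producing Howell algorithm. I would run the Howell-form algorithm of~\cite[Section~4]{StorjohannMulders98:1} over $\Z/(s^2)$ on the augmented matrix
$$\left[\begin{array}{c|c} A & I_n \\ S & \boldz \end{array}\right],$$
or equivalently I would record the transformation it applies. The standard version of that algorithm returns a transformation $U$ over $\Z/(s^2)$ together with the Howell form, at a cost of $O((n+m)m^{\omega-1})$ ring operations. The cost has to be reckoned in terms of the column dimension $m$ of $M_1$, not $n+m$. Each ring operation costs $O(\B(\log s))$ bit operations, which gives the claimed bound.

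The main obstacle is to justify that the transformation recovered modulo $s^2$ actually gives an integer relation $T = E_0A + QS$. Knowing only $T \equiv U_{A}A + U_{S}S \pmod{s^2}$ is not enough by itself. The fix uses $s^2 I \subseteq s{\mathcal L}(S)\subseteq {\mathcal L}(S)$: any discrepancy that is a multiple of $s^2$, and in particular one that is a multiple of $s$, lies in ${\mathcal L}(S)$ and can be moved into $\ast_1$. So only the $A$-coefficients matter, and only modulo $s$.

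I expect the subtle point to be avoiding a cost of $n^{\omega}$ from carrying the $n\times n$ identity block. To keep the cost at $O((n+m)m^{\omega-1})$, I would process $A$ in $\lceil n/m\rceil$ row chunks of size $m$. At each stage, the current $m\times m$ triangular basis is stacked with the next chunk and the combined $2m\times m$ block is Howell-reduced. The $m\times n$ multiplier $E$ is updated only on the columns of the new chunk, while the previously accumulated columns are multiplied by an $m\times m$ transform. Each stage costs $O(m^\omega)$ ring operations. The multiplier update costs $O(m^{\omega-1}n)$ ring operations per stage, which is too much if done naively. I would therefore try to keep the multiplier in factored form and expand it at the end by a balanced product tree, or else accept a divide-and-conquer on chunks. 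Either option should stay within the bound $O((n+m)m^{\omega-1}\,\B(\log s))$.
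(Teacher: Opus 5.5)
Your correctness argument is the paper's. You use Lemma~\ref{lem:fieker} to get $T$ from the Howell form over $\Z/(s^2)$. A congruence $T\equiv U_AA+U_SS \pmod{s^2}$ lifts to an integer identity because every vector with entries divisible by $s$ lies in ${\mathcal L}(S)$. The $A$-coefficients may then be reduced modulo $s$.

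The gap is in the cost of producing $E$, which is the real content of the theorem.

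\begin{itemize}
\item Your claim that the ``standard'' Howell algorithm returns a transformation in $O((n+m)m^{\omega-1})$ ring operations holds only for a compact representation. An explicit $(n+m)\times(n+m)$ transform has $\Theta((n+m)^2)$ entries. Running Howell on the augmented matrix $[A\mid I_n;\,S\mid\boldz]$ costs order $(n+m)^{\omega}$, too much when $n\gg m$. You never say what the compact representation is or why $E$ can be read off it cheaply.
\item Your chunked fallback is not settled either. A divide-and-conquer over the $\lceil n/m\rceil$ chunks that recombines as $[P_RE_L\mid E_R]$ costs $O(nm^{\omega-1})$ per level. That gives $O(nm^{\omega-1}\log(n/m))$ in total, which exceeds the stated bound.
\item Your ``balanced product tree'' option is left unspecified. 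What you need is all the suffix products of the stage transforms, not their full product.
\end{itemize}

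The missing idea, which is how the paper argues, is that only the top $m$ rows of the transform matter. The Storjohann--Mulders Howell transform (their Definition~2 and Corollary~1) delivers exactly these rows in factored form if you put $S$ on top. For the input $[S;A]$ it returns $U=\diag(\bar U,I_n)$ and $C=\left[\begin{array}{cc} I_m & \bar C\\ & I_n\end{array}\right]$ with $UC[S;A]\equiv[T;A]\pmod{s^2}$. Hence $T\equiv\bar US+\bar U\bar CA \pmod{s^2}$, and $E:={\rm mod}(\bar U\bar C,s)$ costs one $m\times m$ by $m\times n$ product, i.e.\ $O(nm^{\omega-1})$ ring operations. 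The $S$-coefficient $\bar U$ is simply absorbed into $\ast_1$.

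Your chunked scheme can also be rescued without a log factor. Start from $T_0=S$, and let $W_j$, $X_j$ be the $m\times m$ blocks of the top rows of the stage-$j$ transform, so that $T_j\equiv W_jT_{j-1}+X_jA_j \pmod{s^2}$. Compute the suffix products right to left: $P_k=I$, then $P_{j-1}=P_jW_j$. Set $E={\rm mod}([P_1X_1\mid\cdots\mid P_kX_k],s)$. This uses $O(n/m)$ products of $m\times m$ matrices, hence $O(nm^{\omega-1})$ ring operations overall.
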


\begin{proof}  
As pointed out in Example \ref{ex:howell}, $sI \subseteq {\mathcal
L}(S)$ is in the lattice generated by the rows of~(\ref{eq:Ein}).
As such we can use Theorem~\ref{thm:hermcomp} to compute $T$ in the
allotted time. It remains to compute $E$.  The proof of
Theorem~\ref{thm:hermcomp} used the Howell form algorithm
of~\cite{StorjohannMulders98:1} to compute $T$ over $\Z/(s^2)$.
Making use of  \cite[Definition~2 and Corollary~1]{StorjohannMulders98:1},
we observe that if we apply the Howell form algorithm to the input
matrix~(\ref{eq:Ein}) with the blocks $A$ and $S$ switched it will
compute as a side effect two $(n+m) \times(n+m)$ matrices $U$ and
$C$ over $\Z/(s^2)$ that can be written conformally as, and satisfy,
\begin{equation}\mylabel{eq:ucah}
U C = \stackrel{\textstyle}{\left [\begin{array}{c|c} \bar{U} & \\\hline
 & I_{n} \end{array} \right ]}
 \stackrel{\textstyle }{\left [ \begin{array}{c|c} I_m & \bar{C} \\\hline
 & I_n \end{array} \right ]}
\left [ \begin{array}{c} S \\\hline A \end{array} \right ] = 
\left [ \begin{array}{c} T \\\hline A \end{array} \right ] \bmod s^2.
\end{equation}
It follows from~(\ref{eq:ucah}) that $\bar{U}S + \bar{U} \bar{C} A = T 
\bmod s^2$.  Since $sI \subseteq {\mathcal L}(S)$, we can take $E 
:= {\rm mod}(\bar{U}\bar{C},s)$.  Since $\bar{U}$ is $m \times m$ 
and $\bar{C}$ is $m \times n$, $E$ can be computed in the allotted time.
\end{proof}

Next we show how to construct a unimodular transformation to achieve
step~2 in~(\ref{eq:hbastran}):
$$
\left [ \begin{array}{ccc} I & & \\
 C & I & \ast_3 \\
 K & & \ast_4 \end{array} \right ]
\left [ \begin{array}{c} T \\
A \\
S \end{array} \right ]  = 
\left [ \begin{array}{c} T \\
\phantom{\boldz}  \\
\phantom{\boldz} \end{array} \right ] .
$$
We will actually compute a Hermite basis for a slightly more
general input, one whose additional components  will be of use in
a later computation.

\begin{lemma} \mylabel{lem:keyM} Let $S \in \Z^{m \times m}$ be a nonsingular Smith form,
and let $T \in \Z^{m \times m}$ be a Hermite basis such that
$S,A \subseteq {\mathcal L}(T)$. 
Then 
\begin{equation} \mylabel{eq:L1} 
M = \left [\begin{array}{cccc} I & F & & \\
 & T & & I \\
 & A & I & \\
 & S & & 
\end{array} \right ] \mbox{~has Hermite form with shape~}
H = \left [\begin{array}{cccc} I & G & &Q\\
 & T & & \ast \\
 &   & I & C \\
 &   & &  K
\end{array} \right ],
\end{equation}
for new blocks $G$, $Q$, $C$ and $K$. Moreover
\begin{equation}\mylabel{eq:L3}
\left [ \begin{array}{cc} I & C \\
 & K \end{array} \right ] \mbox{~is the Hermite form of~} 
\left [ \begin{array}{cc} I & -AT^{-1} \\
 & -ST^{-1} \end{array} \right ]
\end{equation}
and
\begin{equation} \mylabel{eq:Hshape}
\bar{U} M = 
\left [\begin{array}{cccc} I & Q & & \ast_2\\
 & I & &   \\
 & C & I & \ast_3 \\
 & K & & \ast_4 
\end{array} \right ]
\left [\begin{array}{cccc} I & F & & \\
 & T & & I  \\
 & A & I & \\
 & S & & 
\end{array} \right ] =
 \left [\begin{array}{cccc} I & G & &Q\\
 & T & & I \\
 &   & I & C \\
 &   & &  K
\end{array} \right ] = \bar{H},
\end{equation}
where $\bar{U}$ is unimodular.
\end{lemma}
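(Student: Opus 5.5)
We will mirror the proof of Theorem~\ref{thm:bas}: first apply an explicit unimodular transformation that exposes the triangular structure, and then finish with a second transformation that only acts on the trailing block. Since ${\mathcal L}(S),{\mathcal L}(A)\subseteq{\mathcal L}(T)$, the matrices $AT^{-1}$ and $ST^{-1}$ are integral. Consider the block lower triangular matrix
$$
U_1=\left[\begin{array}{cccc} I & & & \\ & I & & \\ & -AT^{-1} & I & \\ & -ST^{-1} & & I\end{array}\right],
\qquad
U_1M=\left[\begin{array}{cccc} I & F & & \\ & T & & I \\ & & I & -AT^{-1} \\ & & & -ST^{-1}\end{array}\right].
$$
It is unimodular, and $U_1M$ is block upper triangular. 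The trailing $2\times 2$ block $N:=\left[\begin{array}{cc} I & -AT^{-1}\\ & -ST^{-1}\end{array}\right]$ is nonsingular because $S$ is. Let $W$ be the unimodular matrix with $WN$ in Hermite form. Because the first block column of $N$ is $\left[\begin{array}{c} I\\ \boldz\end{array}\right]$ and that column is already in Hermite shape, $W$ has the form $\left[\begin{array}{cc} I & W_{12}\\ & W_{22}\end{array}\right]$. So $WN=\left[\begin{array}{cc} I & C\\ & K\end{array}\right]$ with $K$ the Hermite form of $-ST^{-1}$, and this is exactly (\ref{eq:L3}).

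Next we apply $\diag(I,I,W)$ and then reduce the rows above the new trailing blocks. The rows $[\,\boldz\;T\;\boldz\;I\,]$ and $[\,I\;F\;\boldz\;\boldz\,]$ have entries in the last two block columns. We subtract suitable integer multiples of the lower rows $[\,\boldz\;\boldz\;I\;C\,]$ and $[\,\boldz\;\boldz\;\boldz\;K\,]$, and also reduce $F$ modulo $T$ using the second block row. This brings off-diagonal entries into range and yields the Hermite form $H$ with shape (\ref{eq:L1}), with blocks $G$, $Q$, $C$, $K$. Unimodularity of all steps gives ${\mathcal L}(H)={\mathcal L}(M)$. Since $M$ is square and nonsingular, uniqueness of the Hermite form identifies $H$, and its third and fourth block rows are those of $WN$. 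One point needs care: the third block column of $M$ is $[\,\boldz;\boldz;I;\boldz\,]$, and we want the Hermite form to keep zeros above the identity there. We get this by doing the reductions with rows having identity pivots, choosing the multipliers so that the $(1,3)$ and $(2,3)$ blocks remain zero, which they do because only the fourth-column entries are changed by $K$-row reductions.

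For (\ref{eq:Hshape}), $\bar H$ is not the Hermite form itself. It is the matrix obtained when the $(2,4)$ block is left as $I$ instead of being reduced. We must exhibit $\bar U$ with the displayed block structure. Write the composite transformation $\diag(I,I,W)\,U_1$ with the partial reduction of the first block row, and multiply out. The $(3,2)$ and $(4,2)$ blocks become $-(W_{11}AT^{-1}+W_{12}ST^{-1})$ and $-W_{22}ST^{-1}$, and these must equal $C$ and $K$. This matching is the main obstacle. To get it, look at the fourth block column of $\bar UM=\bar H$, which forces $\bar U_{\ast,2}$ to equal the last block column of $\bar H$. Precisely, column~4 of $M$ is $e_2$ (the identity in block row~2), so $(\bar UM)_{\ast,4}=\bar U_{\ast,2}$. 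This gives $\bar U_{12}=Q$, $\bar U_{22}=I$, $\bar U_{32}=C$, $\bar U_{42}=K$ immediately. Similarly, column~3 of $M$ is $e_3$, so $\bar U_{\ast,3}$ equals column~3 of $\bar H$, namely $(0,0,I,0)^T$. Column~1 is $e_1$, so $\bar U_{\ast,1}=(I,0,0,0)^T$. This pins down every block of $\bar U$ shown in (\ref{eq:Hshape}), leaving only the $\ast$ column. We therefore define $\bar U:=\bar HM^{-1}$. It is unimodular because $\bar H$ differs from $H$ by the elementary operation adding multiples of rows $3,4$ to row $2$, hence has the same lattice as $M$.
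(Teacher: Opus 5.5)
Your proof is correct. For (\ref{eq:L1}) and (\ref{eq:L3}) it is the paper's argument: the same block lower triangular $U_1$ exposing $N$, Hermite reduction of the trailing block by $W$ (whose first block column is $[I;\boldz]$), and then reduction of block rows 1 and 2 using block row 2 and the rows of $K$.

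The only real difference is in (\ref{eq:Hshape}). The paper obtains $\bar U$ by multiplying out the explicit transformations, namely $U_1$, then the Hermite reductions of the leading and trailing $2\times 2$ block submatrices, then one row operation on block row 1. It passes from $\bar H$ to $H$ at the end. You reach $H$ first, recover $\bar H$ by undoing the reduction of block row 2, and set $\bar U:=\bar HM^{-1}$. You then read block columns 1, 2, 3 of $\bar U$ off the unit block columns of $M$. That route works, but two points need attention.

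\begin{itemize}
\item The step you call the main obstacle is not one. Since $C$ and $K$ are defined as the blocks of $WN$, the $(3,2)$ and $(4,2)$ blocks $-W_{11}AT^{-1}-W_{12}ST^{-1}$ and $-W_{22}ST^{-1}$ of $\diag(I,I,W)\,U_1$ are literally $C$ and $K$. The later operations on block row 1 do not touch them, and this is exactly how the paper reads off (\ref{eq:T4}).
\item Your column argument does not determine the zero $(2,4)$ block of $\bar U$, which lies in the block column you leave as $\ast$. You need one more line. Block row 2 of $\bar UM$ is $[\,\boldz\;\;T+\bar U_{24}S\;\;\boldz\;\;I\,]$, and comparing with $\bar H$ gives $\bar U_{24}S=\boldz$. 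Hence $\bar U_{24}=\boldz$, because $S$ is nonsingular.
\end{itemize}

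A smaller point: passing between $H$ and $\bar H$ uses only multiples of the rows of $K$. Using block row 3 would destroy the zero $(2,3)$ block, so it should not be listed alongside them.
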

\begin{proof} We will begin by establishing the existence of a
unimodular transformation $\bar{U}$ that transforms $M$ to a matrix
$\bar{H}$, with both $\bar{U}$ and $\bar{H}$ having the shape shown
in~(\ref{eq:Hshape}).  Along the way we will establish that~(\ref{eq:L3})
holds.  Finally, we will do one more transformation on $\bar{H}$
to put it into Hermite form, showing that~(\ref{eq:L1}) holds.

Since $S,A \subseteq {\mathcal L}(T)$, the matrices $ST^{-1}$ and
$AT^{-1}$ are integral and so we can  triangularize $M$~as
\begin{equation} \mylabel{eq:T1}
\left [\begin{array}{cccc} I &   & & \\
 & I & &   \\
 & -AT^{-1} & I & \\
 & -ST^{-1} & & I
\end{array} \right ]
\stackrel{\textstyle}{
\left [\begin{array}{cccc} I & F & & \\
 & T & & I \\
 & A & I & \\
 & S & & 
\end{array} \right ]}  =
\left [\begin{array}{cccc} I & F & & \\
 & T & & I \\
 &   & I & -AT^{-1}\\
 &   & & -ST^{-1} 
\end{array} \right ]
\end{equation}
with  the right hand side of~(\ref{eq:T1}) left equivalent to $M$.  The transformations to Hermite
form of the leading and trailing submatrices of the matrix on the right hand side of~(\ref{eq:T1}) 
have the shape
\begin{equation} \mylabel{eq:T3}
\left [ \begin{array}{cc} I & \ast \\
 & I \end{array} \right ] 
\left [ \begin{array}{cc} I & F \\
 & T \end{array} \right ]  
= 
\left [ \begin{array}{cc} I & G \\
 & T \end{array} \right ]
\end{equation}
and
\begin{equation} \mylabel{eq:T2}
\left [ \begin{array}{cc} I & \ast \\
 & \ast \end{array} \right ] 
\left [ \begin{array}{cc} I & -AT^{-1} \\
 & -ST^{-1} \end{array} \right ]  
= 
\left [ \begin{array}{cc} I & C \\
 & K \end{array} \right ].
\end{equation}
Applying the transformations~(\ref{eq:T3}) and~(\ref{eq:T2}) to
equation~(\ref{eq:T1}) we obtain
\begin{equation} \mylabel{eq:T4}
\left [\begin{array}{cccc} I & \ast  & & \\
 & I & &   \\
 &C & I & \ast\\
 & K & & \ast
\end{array} \right ]
\left [\begin{array}{cccc} I & F & & \\
 & T & & I \\
 & A & I & \\
 & S & & 
\end{array} \right ]  =
\left [\begin{array}{cccc} I & G & & \ast \\
 & T & & I \\
 &   & I & C\\
 &   & &  K
\end{array} \right ].
\end{equation}
The final step to transform the right hand side of~(\ref{eq:T4})
to $\bar{H}$ is to add a  multiple of the last block-row (containing
$K$) to the first block-row in order to reduce entries in the top
right block $\ast$.  Performing this transformation on~(\ref{eq:T4})
gives equation~(\ref{eq:Hshape}).  Since all the transformations
performed were unimodular, so is $\bar{U}$.

Finally, notice that $\bar{H}$ in~(\ref{eq:Hshape}) may not be completely
in Hermite form. In particular, if some diagonal
entries of $K$ are equal to one, then we need to add a multiple of the  last
block-row (containing $K$)  to the second  block-row in order to
ensure all entries above the diagonals in the last block-column
are reduced.  Once we do this we have transformed $\bar{H}$ to its Hermite form, 
which has the shape of $H$ shown in~(\ref{eq:L1}).
\end{proof}

Regarding the input matrix $M$ of~(\ref{eq:L1}), the blocks $T$ and $S$ are
always square of dimension $m$, but the number of rows in $F$ and
$A$ can be arbitrary.  In particular, $F$ or $A$ can be tall or
short, and even be the $0 \times m$ matrix.  
In the three cases
where $F$ is empty,  $A$ is empty, and both are empty, the shape
of $M$ simplifies to
$$
 \left [ \begin{array}{ccc} T & & I \\ A & I \\ 
  S & \end{array} \right ], \mbox{~~}
 \left [ \begin{array}{ccc} I & F & \\
 & T & I \\
 & S & \end{array} \right ] \mbox{~~and~~}
\left [\begin{array}{cc} T &  \\
S & I \end{array} \right ].$$
For convenience, we will use a common bound $n$ for the row dimension of both $F$ and $A$.
The dimension of $M$ is then bounded by $2n+2m$.

\begin{theorem}  \mylabel{thm:compblocks} Let $S$, $T$ and $M$ be
as in Lemma~\ref{lem:keyM} and  $s$ be the largest invariant factor
of $S$.  If entries in $A$ and $F$ are reduced modulo $s$, and $n$
is a common bound for the row dimension of $A$ and $F$, then the
the Hermite form $H$ of $M$ can be computed computed in
$O((n+m)m^{\omega-1} \, \B(\log s))$ bit operations.
\end{theorem}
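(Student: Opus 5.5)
The plan is to reduce to Theorem~\ref{thm:hermcomp}. This needs two facts: $sI \subseteq {\mathcal L}(M)$, and the entries of $M$ have bitlength $O(\log s)$. After that, only the dimension count in the cost bound has to be controlled.

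\emph{Step 1: $sI\subseteq{\mathcal L}(M)$.} Lemma~\ref{lem:keyM} shows $M$ is left equivalent to $\bar H$. Since $\bar H$ is block upper triangular with diagonal blocks $I,T,I,K$, we have $\det M=\pm\det T\det K$. I would instead argue directly, as in Example~\ref{ex:howell}, by exhibiting an integer left multiplier $Y$ with $YM=sI$.

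Write $S^\ast=sS^{-1}$, which is integral. Then $S^\ast S=sI$, so $S^\ast$ applied to the last block row gives $[\,0\; sI\;0\;0\,]$. The rows $[\,0\; T\; 0\; I\,]$ times $s$, minus $sT S^{-1}$ times the $S$ row, give $[\,0\;0\;0\;sI\,]$; here $sTS^{-1}=TS^\ast$ is integral. Similarly $s[\,0\;A\;I\;0\,]-AS^\ast[\,0\;S\;0\;0\,]=[\,0\;0\;sI\;0\,]$. Finally $s[\,I\;F\;0\;0\,]-FS^\ast[\,0\;S\;0\;0\,]=[\,sI\;0\;0\;0\,]$. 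Hence $sI\subseteq{\mathcal L}(M)$.

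\emph{Step 2: entry sizes.} The entries of $A$ and $F$ are reduced modulo $s$ by hypothesis, and $S$ has entries at most $s$. For $T$, note that $T$ is a Hermite basis with $S\subseteq{\mathcal L}(T)$. Hence $\det T$ divides $\det S$, and each diagonal entry of $T$ divides $s$, since $s e_i\in{\mathcal L}(T)$. Off-diagonal entries are smaller than their diagonal entries, so $\log\|M\|\le\log s$ and the hypothesis of Theorem~\ref{thm:hermcomp} holds.

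\emph{Step 3: cost.} Applying Theorem~\ref{thm:hermcomp} naively to $M$ gives $O((2n+2m)(2n+m)^{\omega-1}\B(\log s))$, because $M$ has $2n+m$ columns. This is too much when $n\gg m$; I expect this dimension issue to be the main obstacle. I would try first to avoid working with all $2n+m$ columns.

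By Lemma~\ref{lem:keyM}, the only nontrivial blocks of $H$ are $G,Q,C,K$ and the $\ast$ above $K$, all obtainable from the transformation $\bar U$ in~(\ref{eq:Hshape}). The key computation is the $m\times m$ Hermite form producing $K$ and the row reduction giving $C$. Using~(\ref{eq:L3}), $[\,I\;C\,;\;0\;K\,]$ is the Hermite form of $[\,I\;-AT^{-1}\,;\;0\;-ST^{-1}\,]$. So I would compute $K$ as the Hermite basis of the $2m\times m$ matrix obtained from $T$ and $S$ (equivalently the $K$ block for $F,A$ empty), over $\Z/(s^2)$ via Lemma~\ref{lem:fieker}, together with the transformation over $\Z/(s^2)$ in the style of the proof of Theorem~\ref{cor:hfcomp}. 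This costs $O(m^\omega\B(\log s))$.

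The remaining rows are handled by reduction against $K$, using $n\times m$ times $m\times m$ products modulo $s^2$. This gives $C$ as $\rowmod$-type reduction of $-AT^{-1}$ modulo $K$, computed as $A\cdot\ast_3$ terms. Likewise $G$ is the reduction of $F$ modulo $T$ and $Q$ the reduction modulo $K$, each costing $O(nm^{\omega-1}\B(\log s))$ through blocked multiplication. Correctness of working modulo $s^2$ again rests on $sI\subseteq{\mathcal L}(M)$, so all true entries lie in $[0,s)$ and are recovered from their residues.
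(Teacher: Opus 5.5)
Your Steps~1 and~2 are correct. They are the paper's opening remark that $sI\in\mathcal{L}(M)$, with the entry bound made explicit. You also rightly see that applying Theorem~\ref{thm:hermcomp} to all of $M$ is too expensive when $n\gg m$.

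\textbf{The gap is in Step~3, in how you obtain $C$ and $Q$.} Your $2m\times 2m$ computation involves only $T$ and $S$, so it produces $K$ but carries no information about $A$. By~(\ref{eq:Hshape}), $C$ is the unique matrix reduced modulo $K$ with $CT+A+\ast_3S=0$ for some integer $\ast_3$. That is, $C$ is the reduced solution of $CT\equiv -A$ column-modulo $S$; equivalently, by~(\ref{eq:L3}), it is the remainder of $-AT^{-1}$ with respect to $K$.

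The diagonal entries of $T$ are divisors of $s$, so $T$ is in general not invertible modulo $s$. Nothing you compute provides a matrix $W$ for which $C$ (or $-AT^{-1}\bmod s$) comes out of a product $AW$ followed by reduction. The phrase ``computed as $A\cdot\ast_3$ terms'' does not resolve this, because $\ast_3$ is an $n\times m$ block that itself depends on $A$ and is exactly the unknown.

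The same issue affects $Q$. It is not ``the reduction modulo $K$'' of any given matrix; it is the reduced solution of $QT\equiv G-F$ column-modulo $S$. Your cost claims for these row reductions via ``blocked multiplication'' are likewise asserted without an algorithm.

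\textbf{The missing observation is that everything is row-local.} By~(\ref{eq:L3}), each row of $C$ depends only on the corresponding row of $A$, and likewise each row of $G$ and $Q$ depends only on the corresponding row of $F$. So partition $A$ into $k\le\lceil n/m\rceil$ blocks $A_i$ of at most $m$ rows.

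Each matrix $\left[\begin{smallmatrix} T & & I\\ A_i & I & \\ S & & \end{smallmatrix}\right]$ has dimension at most $3m$, and your Steps~1--2 apply to it verbatim. Theorem~\ref{thm:hermcomp} therefore computes its Hermite form $\left[\begin{smallmatrix} T & & \ast\\ & I & C_i\\ & & K\end{smallmatrix}\right]$ in $O(m^{\omega}\B(\log s))$ bit operations, and the blocks $C_i$ stack to give $C$. The analogous matrices $\left[\begin{smallmatrix} I & F_i & \\ & T & I\\ & S & \end{smallmatrix}\right]$, built from row blocks of $F$, give $G$ and $Q$.

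Summing over $O(n/m+1)$ blocks yields $O((n+m)m^{\omega-1}\B(\log s))$. This is the paper's proof, and it needs neither an inverse of $T$ modulo $s$ nor a separate reduction routine.
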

\begin{proof} 
Note first that the Smith form of $M$ is $\diag(I,S)$, and so $sI
\in {\mathcal L}(M)$.  If $n = 0$ then we can directly apply
Theorem~\ref{thm:hermcomp} to compute  the Hermite form of the $2m
\times 2m$ matrix
$$
\left [ \begin{array}{cc} T & I  \\
S &  \end{array} \right ].
$$
Otherwise, if  $A$ is not empty then we can decompose $A$ as
$$
A = \left [ \begin{array}{c} A_1 \\
 A_2 \\
\vdots \\
A_k  \end{array} \right ]
$$
where all the blocks have $m$ rows, except for possibly the last block $A_k$.  
Note that each
$$
\left[ \begin{array}{cccc} T & & I \\
  A_i & I & \\
  S & &   \end{array} \right ] \mbox{~~has Hermite form~~}
\left[ \begin{array}{cccc} T & & \ast \\
      &  & C_i \\
       & &K  \end{array} \right ].
$$
Thus the submatrix $C$ of $H$ can be found using at most
$k \leq \lceil n/m\rceil$ Hermite form computations of matrices with
dimension bounded by $3m$.   This is within the allotted cost.
The submatrices $G$ and $Q$ of the Hermite form can be found 
by using a similar block decomposition of $F$.
\end{proof}

\section{Faster modular computation of Hermite forms} \mylabel{sec:hermbas}

This section is devoted to establishing that the two Hermite basis
computations required by Algorithm~\texttt{HermiteBasis} can be
done in the complexity that we require.
\begin{theorem} \mylabel{thm:hermbas}
Let $S \in \Z^{m \times m}$ be a nonsingular Smith form and  $A \in
\Z^{n \times m}$ be reduced column-modulo $S$. The following
can be computed in $O((n+m)m^{\omega-1}\, \B((\log \det S)/m + \log m))$
bit operations.
\begin{enumerate}
\item \mylabel{thm:hb1} The Hermite basis $T$ of ${\mathcal L}(A) + {\mathcal L}(S)$.
\item \mylabel{thm:hb2}  The blocks $C$ and $K$ of the Hermite form
$$
\left [ \begin{array}{ccc} T & & \ast \\
 & I & C \\
 & & K \end{array} \right ] \mbox{~~of the input matrix~~}
\left [\begin{array}{ccc} T & & I \\ A & I & \\ S & & \end{array} \right ].$$
\end{enumerate}
\end{theorem}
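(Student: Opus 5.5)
We reduce to the modular Theorems~\ref{cor:hfcomp} and~\ref{thm:compblocks}, which already give both computations in $O((n+m)m^{\omega-1}\B(\log s))$, where $s$ is the largest invariant factor of $S$. The difficulty is that $\log s$ can be as large as $\log\det S$, while the target cost is about the average $(\log\det S)/m$. We therefore plan a \emph{slicing} argument. Write $S=\diag(s_1,\ldots,s_m)$ with $s_1\mid s_2\mid\cdots\mid s_m$, and partition the columns into groups according to the size of $\log s_j$. Let $d=\lceil(\log\det S)/m\rceil+\lceil\log m\rceil$. Since $\sum_j\log s_j=\log\det S$, at most $m/2^i$ of the invariant factors can have $\log s_j>2^i d$. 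Group $i$ consists of the columns with $2^{i-1}d<\log s_j\le 2^i d$, so it has at most $m/2^{i-1}$ columns and there are $O(\log m)$ groups, plus one group with $\log s_j\le d$.

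The key structural step is that a Hermite basis with a diagonal-like modulus can be computed column-block by column-block using the factorization idea of Theorem~\ref{thm:rec2} and~(\ref{eq:partn}), but now applied to the \emph{output} $T$. We proceed as follows.

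\textbf{Step 1.} Handle the leading columns (small $s_j$) by working modulo $s_{j_1}$, the largest small invariant factor. The lattice projected onto these columns, modulo the trailing columns, is governed by $\diag(s_1,\ldots,s_{j_1})$. Its Hermite basis is computed by Theorem~\ref{cor:hfcomp} at cost $O((n+m)m^{\omega-1}\B(d))$. Theorem~\ref{cor:hfcomp} also returns the transformation $E$, so we can apply the same row operations to the trailing columns of $A$. We then reduce those trailing columns, column-modulo their own $s_j$, which keeps their size bounded by the trailing invariant factors.

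\textbf{Step 2.} Recursively (or iteratively over the groups) treat the trailing $m/2^{i-1}$ columns with bitlength $2^i d$. The rows that remain relevant are the nonpivot rows produced in Step 1: the at most $n$ rows of the transformed $A$ restricted to the trailing columns, together with the corresponding rows of $S$. Applying Theorem~\ref{cor:hfcomp} to these at dimension $m_i=m/2^{i-1}$ with $\log s\le 2^i d$ costs $O((n+m)m_i^{\omega-1}\B(2^i d))$. Using superlinearity $\B(2^id)\le 2^i\B(d)$ up to constants, and $\omega-1>1$, this becomes $O((n+m)m^{\omega-1}\B(d)\,2^{-i(\omega-2)})$. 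Summed over $i$, this is a geometric series bounded by the claimed cost.

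\textbf{Step 3.} Apply the updates $E$ to the remaining columns. This is a product of the form $E\cdot(\text{trailing block})$ with unbalanced column sizes. For it we use partial linearization, the colmod multiplication of Section~\ref{sec:partial} (Theorem~\ref{thm:partial}), so that its cost depends only on the average column bitlength.

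\textbf{Step 4.} Part~\ref{thm:hb2} is handled in the same sliced way, using Lemma~\ref{lem:keyM}. By~(\ref{eq:L3}), $C$ and $K$ are the Hermite form of $[\,I,\,-AT^{-1};\;0,\,-ST^{-1}]$, and $T$ is block upper triangular by groups. We therefore process the column groups of $M_2$ in the same order. For each group we invoke Theorem~\ref{thm:compblocks} with modulus the group's largest invariant factor, and we propagate the transformation blocks ($Q$, the $\ast_3$, $\ast_4$ parts of~(\ref{eq:Hshape})) to later groups. The $F$ and $G$ blocks of Lemma~\ref{lem:keyM} are exactly what carries the earlier groups' transformations forward; this is why the lemma was stated for that general shape.

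\textbf{Main obstacle.} The hard part is proving that slicing is correct. Working modulo a small $s_{j_1}$ on the leading columns must give the true leading block of $T$, even though the trailing columns carry larger moduli. We must also show that the entries created in the trailing columns stay reducible modulo their own $s_j$, so that the bitlength bounds hold, and that the off-diagonal blocks of $T$ coupling the groups can be recovered without working modulo the global $s$. To do this we would first try to use Lemma~\ref{lem:denequiv}.\ref{fact4} together with the column-decoupled action of diagonal $S$. The aim is a lemma saying that the Hermite basis of $[A;S]$ restricted to the leading columns equals the Hermite basis of $[A_{\mathrm{lead}};S_{\mathrm{lead}}]$, with the trailing update given by $E$.
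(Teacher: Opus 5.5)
\paragraph{Overall architecture.} Your plan matches the paper's architecture. It processes the columns in stages under a dimension$\times$precision invariant, applies Theorems~\ref{cor:hfcomp} and~\ref{thm:compblocks} modulo each block's largest invariant factor, updates the remaining columns with partially linearized $\colmod$ products, and reuses the same run for part~2. Grouping columns by the size of $\log s_j$, instead of halving the remaining columns as in Lemma~\ref{lem:smallsnew}, is a harmless variant: it gives the same bound $m_i\cdot 2^i d\le 2md$.

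\paragraph{The gap: forming the residual problem.} Theorem~\ref{cor:hfcomp} only returns $E$ such that $[\,I\ \ E\ \ \ast_1\,]$ sends $[\boldz;A;S]$ to $[T;A;S]$. The rows of $A$ and $S$ are left untouched. So applying $E$ to the trailing columns gives only the new rows of $T$ there ($E\bar{A}_2$), not rows of a ``transformed $A$'' with zero leading part.

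The elimination must come from Lemma~\ref{lem:keyM} and Theorem~\ref{thm:compblocks}, and it is needed already in part~1. In~(\ref{eq:lemcompa}), $U_2$ does three things:
\begin{itemize}
\item $Q$ reduces the previously computed rows of $T$ modulo the new pivots. This is what the $F,G$ blocks of Lemma~\ref{lem:keyM} are for, in part~1; in part~2 they vanish and $Q=0$, the opposite of your Step~4.
\item $C$ acts on the rows of $\bar{A}_1$.
\item $K$ acts on the rows of $\bar{S}_1$.
\end{itemize}
The trailing columns then become $\bar{F}_2+QE\bar{A}_2$, $E\bar{A}_2$, $\bar{A}_2+CE\bar{A}_2$ and $KE\bar{A}_2$, reduced modulo $\bar{S}_2$.

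In particular, the former rows of the leading part of $S$ must remain in the residual problem. Keeping only $n$ rows plus the trailing rows of $S$ is not enough with these tools. Take $S=\diag(8,1024)$ and $A=[\,3\ \ 1\,]$, so your groups are $\{1\}$ and $\{2\}$. Then $E=3$, $\bar{T}_1=[1]$, $C=5$ and $K=8$. The $A$-row becomes $(0,16)$ and the former $s_1$-row becomes $(0,24)$. Only with the latter does the trailing pivot come out as the correct $8$ (here $T$ has rows $(1,3)$ and $(0,8)$) rather than $16$.

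\paragraph{Your ``main obstacle''.} In the paper this needs no separate slicing lemma. Each stage is an explicit unimodular row transformation~(\ref{eq:appU}) of the whole $(2m+n)\times m$ work matrix. The uncomputed blocks $\ast_1,\ldots,\ast_4$ only multiply rows of $\bar{S}_1$. Reducing the trailing columns column-modulo $\bar{S}_2$ is itself a row operation, because the rows of $\bar{S}_2$ are still present. Hence the final matrix is the Hermite form, and entry sizes stay controlled by the trailing invariant factors throughout. Part~2 is the same run with $\boldz$ replaced by $T$, so $U_1=I$ and $Q=0$. Each stage deposits its $C$ and $K$ into the identity block without disturbing earlier columns.

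\paragraph{Two smaller fixes.}
\begin{itemize}
\item Superlinearity gives $2^i\B(d)\le\B(2^i d)$, which is the opposite of the inequality you use. The decay instead comes from submultiplicativity, $\B(2^id)\le\B(2^i)\B(d)$, together with $\B(2^i)\in O(2^{i(\omega-1-\epsilon_0)})$. This yields a factor $2^{-i\epsilon_0}$, as in~(\ref{eq:t2}).
\item $E\bar{A}_2$ is a wide-times-tall product, so it needs Theorem~\ref{thm:partial2}. Theorem~\ref{thm:partial} is then used for multiplying $[Q;I;C;K]$ by the result.
\end{itemize}
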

Section~\ref{ssec:overview} gives a high level
overview of the algorithm supporting Theorem~\ref{thm:hermbas}.
The key computational steps are detailed in Sections~\ref{ssec:construct} and~\ref{ssec:apply}.
We defer the proof of Theorem~\ref{thm:hermbas} to the end of
Section~\ref{ssec:apply}.  

\subsection{High level overview of the algorithm} \mylabel{ssec:overview} 

Our approach is to initialize a work matrix to be equal to
\begin{equation} \mylabel{eq:shape2}
\left [ \begin{array}{cc} 
\boldz \\\hline
A  \\\hline
S 
\end{array} \right ] \in \Z^{(2m+n)\times m}
\end{equation}
and then apply a sequence of unimodular row transformations to
transform it to its Hermite form in stages for $k=1,2,\ldots,\lceil
\log_2 (m+1)\rceil$. 
Stage~1 transforms the work matrix so that the first $m/2$ columns
are in Hermite form. Stage~2 extends the number of columns in Hermite form
to the first $m/2+m/4$,
stage~3 to the first $m/2+m/4+m/8$, and so on.
At the beginning of stage $k$, at most
$m/2^{k-1}$ columns remain to be put into Hermite form.

\begin{example} \mylabel{ex:hermbas}
Suppose our input matrices have sizes $n=3$ and $m=7$, that is, $A \in \Z^{3 \times 7}$
and $S \in \Z^{7 \times 7}$. Then the
work matrix at the start, and at end of stages~1 and~2, has the following shape:
\begin{equation}  
\setlength{\arraycolsep}{.3\arraycolsep} \renewcommand{\arraystretch}{.7}
\left [ \begin{array}{||cccc|ccc}
  0    &        &        &        &        &         &        \\
       &   0    &        &        &        &         &        \\
       &        &   0    &        &        &         &        \\
       &        &        &   0    &        &         &        \\
       &        &        &        &   0    &         &        \\
       &        &        &        &        &    0    &        \\
       &        &        &        &        &         &       0\\\hline
 \ast  &   \ast &   \ast &   \ast &   \ast &   \ast  & \ast   \\
 \ast  &   \ast &   \ast &   \ast &   \ast &   \ast  & \ast   \\
 \ast  &   \ast &   \ast &   \ast &   \ast &   \ast  & \ast   \\\hline
  s_1  &        &        &        &        &         &        \\
       &   s_2  &        &        &        &         &        \\
       &        &  s_3   &        &        &         &        \\
       &        &        &   s_4  &        &         &        \\
       &        &        &        &   s_5  &         &        \\
       &        &        &        &        &   s_6   &        \\
       &        &        &        &        &         &   s_7  
\end{array} \right ]  \stackrel{1}{\rightarrow}
\left [ \begin{array}{cccc||cc|c}
\underline{\ast} &   \ast &   \ast &   \ast &   \ast &   \ast  &  \ast  \\
       & \underline{\ast}  &  \ast  &  \ast  &  \ast  &  \ast   &  \ast  \\
       &        & \underline{\ast}   & \ast   &  \ast    & \ast    &  \ast  \\
       &        &        & \underline{\ast}   &  \ast &   \ast  &   \ast \\
       &        &        &        &   0    &         &        \\
       &        &        &        &        &   0     &        \\
       &        &        &        &        &         &    0   \\\hline
       &        &        &        &   \ast &   \ast  & \ast   \\
       &        &        &        &   \ast &   \ast  & \ast   \\
       &        &        &        &   \ast &   \ast  & \ast   \\
\phantom{s_1} &        &        &        &   \ast &  \ast   & \ast   \\
       & \phantom{s_2} &        &        &  \ast  &  \ast   & \ast   \\
       &        &\phantom{s_3} &    &  \ast  &  \ast   & \ast   \\
       &        &        & \phantom{s_4}   &   \ast &  \ast   &  \ast  \\\hline
       &        &        &        &   s_5  &         &        \\
       &        &        &        &        &   s_6   &        \\
       &        &        &        &        &         &   s_7  
\end{array} \right ] \stackrel{2}{\rightarrow}
\left [ \begin{array}{cccccc||c|}
\underline{\ast} &   \ast &   \ast &   \ast &   \ast &   \ast  &  \ast  \\
       & \underline{\ast}  &  \ast  &  \ast  &  \ast  &  \ast   &  \ast  \\
       &        & \underline{\ast}   & \ast   &  \ast    & \ast    &  \ast  \\
       &        &        & \underline{\ast}   &  \ast &   \ast  &   \ast \\
       &        &        &        &  \underline{\ast}  & \ast    &  \ast  \\
       &        &        &        &        &  \underline{\ast}  &  \ast  \\
       &        &        &        &        &         &    0   \\\hline
       &        &        &        &        &         & \ast   \\
       &        &        &        &        &         & \ast   \\
       &        &        &        &        &         & \ast   \\
\phantom{s_1} &        &        &        &        &         & \ast   \\
     & \phantom{s_2}&        &        &        &         & \ast   \\
     &        & \phantom{s_3} &        &        &         & \ast   \\
     &        &        &\phantom{s_4} &        &         &  \ast  \\
     &        &        &        & \phantom{s_5}&         & \ast   \\
     &        &        &        &        & \phantom{s_6} & \ast   \\\hline
       &        &        &        &        &         &   s_7  
\end{array} \right ].
\mylabel{eq:exhermbas}
\end{equation}
Note that the final stage $\stackrel{3}{\rightarrow}$, which
transforms the last column containing $s_7$ to correct form, is not
shown in~(\ref{eq:exhermbas}).
Note that each stage in~(\ref{eq:exhermbas}) uses both steps in~(\ref{eq:hbastran}), that is,
computing a square Hermite form, eliminating the entries below it, and reducing the entries above it.
\end{example}

We now specify the transformations of the work matrix at the start of each stage in more detail. 
Let $\bar{m}$ be the number of columns remaining to be put into correct form.
Partition $S=\diag(S_1,\bar{S})$, where $\bar{S}$ has column dimension
$\bar{m}$. 
(For example, in~(\ref{eq:exhermbas}) the second matrix has $\bar{m}=3$ with
$\bar{S} = \diag(s_5,s_6,s_7)$.)
Then the work matrix can be written as
\begin{equation} \mylabel{eq:decomp1}
\setlength{\arraycolsep}{1\arraycolsep} \renewcommand{\arraystretch}{1.2}
\left [ \begin{array}{c||c} T_1 & \bar{F} \\
 & \boldz \\\hline
  & \bar{A}\\\hline
  &  \bar{S} \end{array} \right ],
\end{equation}
where $T_1$ is a Hermite basis with $m - \bar{m}$ rows, $\bar{A}$
has row dimension $n+m-\bar{m}$ and  the matrices $\bar{F}$ and
$\bar{A}$ are reduced column-modulo $\bar{S}$.  The row dimension
of $\bar{F}$ and $\bar{A}$ are thus bounded by $m$ and $n+m$,
respectively, independent of the particular stage.  At the start
of stage~1, $\bar{m}=m$ and~(\ref{eq:decomp1}) coincides with the
input matrix~(\ref{eq:shape2}). At the end of the last stage,
$\bar{m}=0$ and~(\ref{eq:decomp1}) is in Hermite form.

If $\bar{m}>0$, then our
goal at the current stage is 
to transform the first half of 
the last $\bar{m}$ columns of (\ref{eq:decomp1}) to
Hermite form.  To this end,  decompose 
$\bar{S} = \diag(\bar{S}_1,\bar{S}_2)$, 
where $\bar{S}_1$ has dimension 
$\bar{m}_1 := \lceil \bar{m}/2 \rceil$, and $\bar{S}_2$ has dimension 
$\bar{m}_2 := \lfloor \bar{m}/2 \rfloor$.  
Conformally decompose $\bar{F} = \left [ \begin{array}{c|c}
\bar{F}_1 & \bar{F}_2 \end{array} \right ]$ and $\bar{A} = \left
[\begin{array}{c|c} \bar{A}_1 & \bar{A}_2 \end{array} \right ]$.
Then we can refine the decomposition of the work matrix in~(\ref{eq:decomp1}) as
\begin{equation} \mylabel{eq:shapek1}
\setlength{\arraycolsep}{1\arraycolsep} \renewcommand{\arraystretch}{1.2}
\left [ \begin{array}{c||c|c}
 T_1 & \bar{F}_1 & \bar{F}_2 \\
 & \boldz &        \\
 & & \boldz \\\hline
 & \bar{A}_1 & \bar{A}_2 \\\hline
&  \bar{S}_1 & \\\
 & & \bar{S}_2 \end{array} \right ] .
\end{equation}
The current stage now applies a pair of unimodular transformations
to (\ref{eq:shapek1}):
\begin{equation} \mylabel{eq:appU}
\setlength{\arraycolsep}{.5\arraycolsep} \renewcommand{\arraystretch}{1.2}
\left [ \begin{array}{ccc|c|cc}
 ~I~   & Q    &   &     &   \ast_2  &   \\
     &  ~I~  &     &     &          &   \\
     &    &~I~   &     &    &   \\\hline
     &    C     & &  ~I~  &  \ast_3  &   \\\hline
     &     K     &&     &  \ast_4 &   \\
     &      &     &     &    & ~I~
\end{array} \right ]
\left [ \begin{array}{ccc|c|cc}
 ~I~   &      &     &     &    &   \\
     &  ~I~  &     & E &  \ast_1  &   \\
     &      & ~I~  &     &    &   \\\hline
     &      &     &  ~I~  &    &   \\\hline
     &      &     &     &  ~I~ &   \\
     &      &     &     &    & ~I~ 
\end{array} \right ]
\left [ \begin{array}{c||c|c}
 T_1 & \bar{F}_1 & \bar{F}_2 \\
 & \boldz &        \\
 & & \boldz \\\hline
 & \bar{A}_1 & \bar{A}_2 \\\hline
&  \bar{S}_1 & \\\
 & & \bar{S}_2 \end{array} \right ] =
\left [ \begin{array}{c||c|c}
T_1 & \bar{G}_1 & \ast \\
 & \bar{T}_1 & \ast   \\
 & &  \boldz \\\hline
 &    & \ast  \\\hline
&            & \ast \\\
 & & \bar{S}_2 \end{array} \right ].
\end{equation}
The right hand side of~(\ref{eq:appU}) has the shape required for
the next stage.  We remark that the blocks $\ast_1$, $\ast_2$,
$\ast_3$ and $\ast_4$ encode unimodular row operations used to keep
entries reduced modulo the diagonal entries of $\bar{S}_1$.  We
will need to show the existence of these blocks, but do not need to
compute them explicitly since the blocks above $\bar{S}_2$ in the
transformed matrix do not depend on them.  

Before detailing the construction of the transformation, and the computation of the right
hand side of (\ref{eq:appU}), we explain how this incremental
approach of putting only half of the remaining columns into correct
form allows us to obtain an overall cost estimate that depends on
the average bitlength $(\log \det S)/m$ of the invariant factors
of $S$.  We exploit a
\begin{equation} \mylabel{eq:inv} 
\hbox{~dimension~} \times \hbox{~precision~} \leq 2(\log \det S)
\end{equation}
invariant that holds for each stage.
The ``dimension'' in (\ref{eq:inv}) refers to the column dimension
$\bar{m}$ of $\bar{S}=\diag(\bar{S}_1,\bar{S}_2)$, 
and ``precision'' refers to the bitlength of the largest invariant
factor $s$ of $\bar{S}_1$.  In particular, the needed blocks of the
unimodular transformation in (\ref{eq:appU}) will be computed modulo~$s$.
\begin{lemma} \mylabel{lem:smallsnew} 
Let $S \in \Z^{m \times m}$ be a nonsingular Smith form.
Partition 
\begin{eqnarray*}
S & = & \diag(S_1,\bar{S})\\
 & = & \diag(S_1,\diag(\bar{S}_1,\bar{S}_2))
\end{eqnarray*}
with the dimension $\bar{m}$ of $\bar{S}$ arbitrary, but the second
partition splitting $\bar{S}$ evenly, with $\bar{S}_1$ of dimension
$\lceil \bar{m}/2 \rceil$.  Let $s$ be the largest invariant factor
of $\bar{S}_1$.  Then $\log s \leq  2(\log \det S)/\bar{m}$.
\end{lemma}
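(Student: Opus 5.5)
The plan is to use the divisibility chain of the Smith form. Write the diagonal of $S$ as $s_1 \mid s_2 \mid \cdots \mid s_m$, all positive. In the partition $S=\diag(S_1,\diag(\bar{S}_1,\bar{S}_2))$, the block $\bar{S}_1$ occupies positions $m-\bar{m}+1,\ldots,m-\bar{m}+\lceil \bar{m}/2\rceil$. Its largest invariant factor is therefore $s = s_j$ with $j = m-\bar{m}+\lceil \bar{m}/2\rceil$.

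The key step is that every diagonal entry of $\bar{S}_2$ is a multiple of $s$, and hence is at least $s$. The entry $s$ itself is also at least $s$. So the entries $s_j, s_{j+1},\ldots,s_m$ are each $\geq s$. There are
\[
m-j+1 = \bar{m}-\lceil \bar{m}/2\rceil + 1 = \lfloor \bar{m}/2\rfloor + 1
\]
such entries.

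All invariant factors are positive integers, so the remaining entries contribute factors $\geq 1$. This gives
\[
\det S \;\geq\; \prod_{i=j}^{m} s_i \;\geq\; s^{\lfloor \bar{m}/2\rfloor+1}.
\]
Taking logarithms yields $(\lfloor \bar{m}/2\rfloor+1)\log s \leq \log\det S$.

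It remains to check that $\lfloor \bar{m}/2\rfloor + 1 \geq \bar{m}/2$, which is immediate since $\lfloor \bar{m}/2\rfloor \geq (\bar{m}-1)/2$. Then $\log s \leq 2(\log\det S)/\bar{m}$ follows, because $\log s \geq 0$.

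The only point needing care is a degenerate case. If $\bar{m}=0$, then $\bar{S}_1$ is empty and the statement is vacuous, so we assume $\bar{m}\geq 1$. No step is a real obstacle. The main thing is to count the entries from $s$ onward correctly, so that the count is $\lfloor \bar{m}/2\rfloor+1$ and not just $\lfloor \bar{m}/2\rfloor$. The smaller count would fail when $\bar{m}=1$, where $\bar{S}_2$ is empty and only $s$ itself supplies the bound.
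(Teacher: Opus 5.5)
Your proof is correct and matches the paper's argument: it counts the $\lfloor \bar{m}/2\rfloor+1$ invariant factors from $s$ onward (namely $s$ and those of $\bar{S}_2$), which are all multiples of $s$, uses $\lfloor \bar{m}/2\rfloor+1\geq \bar{m}/2$ to get $s^{\bar{m}/2}\leq \det S$, and then takes logarithms. Your explicit treatment of $\bar{m}=0$, and your remark that $s$ itself must be counted (needed when $\bar{m}=1$), add care but do not change the route.
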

\begin{proof} $\bar{S}_2$ has dimension $\bar{m}-\lceil \bar{m}/2
\rceil = \lfloor \bar{m}/2 \rfloor$.  Since $s$ is in $\bar{S}_1$,
and $s$ divides all invariant factors in $\bar{S}_2$, there are at
least $\lfloor \bar{m}/2 \rfloor +1 \geq \bar{m}/2$ invariant factors
in $\bar{S}$ that are multiples of $s$.  Since $\det S$ is the
product of all invariant factors, we have $s^{\bar{m}/2} \leq \det
S$.  Solving for $\log s$ gives the result.
\end{proof}

\subsection{Constructing the unimodular transformations} \mylabel{ssec:construct} 

We now show how to construct the unimodular transformations
in~(\ref{eq:appU}).  The transformations depend only on the columns
containing $\bar{S}_1$.  Also, note that the $\bar{m}_2$ rows
containing $\bar{S}_2$, and the $\bar{m}_2$ rows containing the
block $\boldz$ directly above $\bar{A}_2$, are not modified, so we
can ignore these rows for now.  The transformation we need to compute
is $U_2U_1$ satisfying
\begin{equation} \mylabel{eq:lemcompa}
\renewcommand{\arraystretch}{1.2}
\stackrel{\textstyle U_2}{
\left [ \begin{array}{cc|c|c} 
I&Q  & & \ast_2 \\ 
&I &    &  \\\hline
& C   & I & \ast_3  \\\hline
 & K   &   & \ast_4 \end{array} \right ]}
\stackrel{\textstyle U_1}{
\left [ \begin{array}{cc|c|c} I &  &   \\
 & I & E & \ast_1 \\\hline
  & & I &  \\\hline
    &  &   & I      \end{array} \right ]} 
\left [ \begin{array}{c} \bar{F}_1 \\
 \boldz \\\hline
\bar{A}_1\\\hline
\bar{S}_1
\end{array} \right ]=
\left [ \begin{array}{c} 
\bar{G}_1 \\
\bar{T}_1       \\\hline
\phantom{\boldz} \\\hline
\phantom{\boldz} \end{array} \right ],
\end{equation}
where the input blocks $\bar{F}_1$ and $\bar{A}_1$ are reduced column-modulo $\bar{S}_1$,
and the output blocks in the right hand side are such that
$$
\left [ \begin{array}{cc} I & \bar{G}_1 \\
 & \bar{T}_1 \end{array} \right]
$$
is in Hermite form.  Recall that $\bar{m}= \bar{m}_1 + \bar{m}_2$.
Comparing with (\ref{eq:decomp1}), we see that $\bar{F}_1$, $\bar{A}_1$
and $\bar{S}_1$ have row dimension equal to $m-\bar{m}$, $n+m-\bar{m}$
and $\bar{m}_1$, respectively.  Instead of these exact dimensions,
for cost estimates we will uses the upper bounds $m$,
$n+m$ and $\bar{m}$, respectively.  More generally, at any stage,
the dimensions of the diagonal blocks of $U_1$ and $U_2$ are bounded
by the dimension of the corresponding diagonal blocks in
$\diag(I_{m}, I_{\bar{m}},I_{n+m},I_{\bar{m}})$.
For example, $E$ has dimension bounded by $\bar{m} \times (n+m)$. By using these upper bounds, the only parameters 
that are specialized to the current stage for the purpose of the cost analysis are $\log s$, where $s$ is the largest
invariant factor of $\bar{S}_1$, and $\bar{m}$.

Observe that the first transformation
\begin{equation} \mylabel{eq:lemcompb}
\renewcommand{\arraystretch}{1.2}
\stackrel{\textstyle U_1}{
\left [ \begin{array}{cc|c|c} I &  &   \\
 & I & E & \ast_1 \\\hline
  & & I &  \\\hline
    &  &   & I      \end{array} \right ]} 
\left [ \begin{array}{c} \bar{F}_1 \\
 \boldz \\\hline
\bar{A}_1\\\hline
\bar{S}_1
\end{array} \right ]=
\left [ \begin{array}{c} \bar{F}_1 \\\hline
 \bar{T}_1 \\ \hline
\bar{A}_1\\ \hline
\bar{S}_1
\end{array} \right ]
\end{equation}
simply computes the Hermite form $\bar{T}_1$ along with the matrix
$E$ which puts the matrix into this form.  
The second transformation $U_2$ then continues to
transform the right hand side
of~(\ref{eq:lemcompb}) to the right hand side of~(\ref{eq:lemcompa}).

\begin{lemma}  \mylabel{lem:getU12}
There exists an algorithm that computes matrices $E$, $Q$, $C$ and $K$ creating unimodular matrices $U_1$ and $U_2$ 
satisfying~(\ref{eq:lemcompa}) with cost
$O((n+m)\bar{m}^{\omega-1}\, \B(\log s))$
bit operations, where $s$ is the largest invariant factor of $\bar{S}_1$.   The algorithm computes
in the same time the blocks $\bar{G}_1$ and $\bar{T}_1$ on the right hand side of~(\ref{eq:lemcompa}).
All the computed blocks have entries from $[0,s]$.
\end{lemma}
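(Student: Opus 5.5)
The plan is to reduce everything to the two results of Section~\ref{sec:hermmodular}, applied with $\bar{S}_1$ in the role of $S$. Throughout, $s$ is the largest invariant factor of $\bar{S}_1$ and $\bar{m}_1=\lceil \bar{m}/2\rceil$ is its dimension. The first step is a reduction of the inputs. Since $\bar{F}_1$ and $\bar{A}_1$ are reduced column-modulo $\bar{S}_1$, their entries already lie in $[0,s)$, which is exactly the hypothesis needed in Theorem~\ref{cor:hfcomp} and Theorem~\ref{thm:compblocks}. The row dimensions of $\bar{A}_1$ and $\bar{F}_1$ are bounded by $n+m$ and $m$, so we may use $n+m$ as a common bound in both theorems. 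The column dimension is $\bar{m}_1\le\bar{m}$.

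For the transformation $U_1$, apply Theorem~\ref{cor:hfcomp} to the pair $(\bar{S}_1,\bar{A}_1)$. This gives the Hermite basis $\bar{T}_1$ of ${\mathcal L}(\bar{A}_1)+{\mathcal L}(\bar{S}_1)$, together with $E\in[0,s)^{\bar{m}_1\times(n+m-\bar{m})}$. These satisfy~(\ref{eq:Ein}), which is precisely~(\ref{eq:lemcompb}) restricted to the relevant rows; the rows of $\bar{F}_1$ are untouched, so $U_1$ is the identity there. The cost is $O((n+m)\bar{m}^{\omega-1}\B(\log s))$. Because $\bar{T}_1$ is a Hermite basis whose lattice contains $\bar{S}_1$, its entries lie in $[0,s]$: its diagonal entries divide $s$, since $sI\subseteq{\mathcal L}(\bar{S}_1)\subseteq{\mathcal L}(\bar{T}_1)$, and off-diagonal entries are reduced by the diagonal.

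For the transformation $U_2$, observe that the right-hand side of~(\ref{eq:lemcompb}) is the input matrix $M$ of Lemma~\ref{lem:keyM}, after reordering block rows so that $\bar{F}_1$ sits in the top row, and with $(\bar{S}_1,\bar{T}_1,\bar{A}_1,\bar{F}_1)$ in the roles of $(S,T,A,F)$. The only difference is that the block column containing $I$ next to $T$ is absent. We handle this by augmenting our matrix with that extra identity block column, so that it becomes exactly the $M$ of~(\ref{eq:L1}). The hypothesis $\bar{S}_1,\bar{A}_1\subseteq{\mathcal L}(\bar{T}_1)$ holds by construction.

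Lemma~\ref{lem:keyM} then yields a unimodular $\bar{U}$ of the shape~(\ref{eq:Hshape}). Deleting the auxiliary column, and noting that the $I$ in the $\bar{T}_1$ block row forces the second block row of $\bar{U}$ to be $[\,0\;I\;0\;0\,]$, we see that $\bar{U}$ has exactly the shape of $U_2$ in~(\ref{eq:lemcompa}). It maps our matrix to $[\bar{G}_1;\bar{T}_1;0;0]$, and $\left[\begin{smallmatrix} I&\bar{G}_1\\ &\bar{T}_1\end{smallmatrix}\right]$ is in Hermite form, by~(\ref{eq:T3}). The key point is that the block $Q$ of $U_2$ is the top-right block of the augmented Hermite form, and $C,K$ are its blocks in the identity column. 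Hence Theorem~\ref{thm:compblocks}, applied with common row bound $n+m$ and dimension $\bar{m}_1$, computes $\bar{G}_1$, $Q$, $C$ and $K$ in $O((n+m)\bar{m}^{\omega-1}\B(\log s))$ bit operations.

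The step I expect to be the main obstacle is the bound on entries. The blocks $\ast_2,\ast_3,\ast_4$ need not be computed, so they do not matter. For the computed blocks, $G$, $Q$, $C$ and $K$ are blocks of a Hermite form whose determinant divides $\det\bar{S}_1$, and whose lattice contains $sI$ (the Smith form of $M$ is $\diag(I,\bar{S}_1)$). Therefore every diagonal entry divides $s$, and every off-diagonal entry is reduced modulo the diagonal entry of its column. This puts all entries in $[0,s]$.

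The delicate point is that $\bar{H}$ in~(\ref{eq:Hshape}) is only nearly Hermite, and the final row operations in the proof of Lemma~\ref{lem:keyM} modify the $\ast$ block rather than $Q$, $C$ or $K$. I would check this carefully, so that the $Q$ appearing in both $U_2$ and $H$ is indeed the reduced one.
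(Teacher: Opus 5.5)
Your proposal is correct and follows the paper's proof. Theorem~\ref{cor:hfcomp} gives $E$ and $\bar{T}_1$; then $\bar{G}_1$, $Q$, $C$ and $K$ are read off from the Hermite form of the augmented matrix~(\ref{yy1}) via Lemma~\ref{lem:keyM}, and Theorem~\ref{thm:compblocks} supplies the cost. Your extra checks are right and fill in what the paper leaves implicit: entries lie in $[0,s]$ because $sI$ is in the lattice, and the final reduction in Lemma~\ref{lem:keyM} only changes the block beside $\bar{T}_1$, so $Q$, $C$ and $K$ are the same in $\bar{H}$ and $H$. One wording slip: the augmentation needs the identity blocks beside $\bar{F}_1$ and $\bar{A}_1$ as well as the one beside $\bar{T}_1$, but your ``exactly the $M$ of~(\ref{eq:L1})'' already includes them.
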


\begin{proof}
Theorem~\ref{cor:hfcomp} gives the result for computing $E$ and $\bar{T}_1$.
By Lemma~\ref{lem:keyM}, 
\begin{equation} \mylabel{yy1}
\left  [ \begin{array}{cccc} I & \bar{F}_1 & & \\
 & \bar{T}_1 & & I \\
& \bar{A}_1 & I & \\
 & \bar{S}_1 &  \end{array} \right ]
\mbox{~~has Hermite form with shape~~}
\left [  \begin{array}{cccc}
I & \bar{G}_1 & & Q \\
 & \bar{T}_1  &  & \ast \\
 &  & I & C \\
 & & & K  \end{array} \right ],
\end{equation}
with the blocks $\bar{G}_1$, $Q$, $C$ and $K$ satisfying the requirements in the statement of the lemma.
By Theorem~\ref{thm:compblocks} the Hermite form in~(\ref{yy1}) can be computed in the allotted time.
\end{proof}

\subsection{Applying the unimodular transformations} \mylabel{ssec:apply} 

In order to complete the current stage, we need to find the last $\bar{m}_2$ 
columns of the matrix on the right of (\ref{eq:appU}), which are
given by
\begin{equation} \mylabel{eq:appU2}
\renewcommand{\arraystretch}{1.2}
\stackrel{\textstyle U_2}{
\left [ \begin{array}{cc|c|c} 
I&Q  & & \ast_2 \\ 
&I &    &  \\\hline
& C   & I & \ast_3  \\\hline
 & K     &   & \ast_4 \end{array} \right ]}
\stackrel{\textstyle U_1}{
\left [ \begin{array}{cc|c|c} I &  &   \\
 & I & E & \ast_1 \\\hline
  & & I &  \\\hline
    &  &   & I      \end{array} \right ]} 
\left [ \begin{array}{c} \bar{F}_2 \\
\phantom{\boldz} \\\hline
\bar{A}_2\\\hline
\phantom{\boldz}
\end{array} \right ]=
\left [ \begin{array}{r} 
\bar{F}_2 + QE\bar{A}_2  \\
E\bar{A}_2\\\hline
\bar{A}_2 + CE\bar{A}_2 \\\hline
K  E\bar{A}_2 \end{array} \right ].
\end{equation}
Considering that the diagonal elements in the block $\bar{S}_2$ in
(\ref{eq:appU}) can be used to arbitrarily reduce entries in the
blocks above it (operations which correspond to unimodular row transformations), 
it will suffice to compute the right hand side of
(\ref{eq:appU2}) column-modulo $\bar{S}_2$.

\begin{lemma} \mylabel{lem:getrhs}
Given the blocks $E$, $Q$, $C$ and $K$ from Lemma~\ref{lem:getU12},
the matrix on the right of (\ref{eq:appU2}) can be computed
column-modulo $\bar{S}_2$ in $O((n+m)\bar{m}^{\omega-1} \M((\log
\det S)/\bar{m} + \log \bar{m}))$ bit operations.
\end{lemma}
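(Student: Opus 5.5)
We compute the four blocks of the right-hand side of~(\ref{eq:appU2}) in the order $X := \colmod(E\bar{A}_2,\bar{S}_2)$, then $\colmod(KX,\bar{S}_2)$, $\colmod(\bar{A}_2+CX,\bar{S}_2)$ and $\colmod(\bar{F}_2+QX,\bar{S}_2)$. Reducing the intermediate product $E\bar{A}_2$ column-modulo $\bar{S}_2$ before it is premultiplied by $K$, $C$ and $Q$ is legitimate. The reason is that premultiplication acts on rows, so for any integer $Y$ we have $\colmod(YZ,\bar{S}_2)=\colmod(Y\,\colmod(Z,\bar{S}_2),\bar{S}_2)$. All four tasks are therefore products of the form $\colmod(YZ,\bar{S}_2)$ with the following data:
\begin{itemize}
\item the left factor $Y\in\{E,K,C,Q\}$ has entries in $[0,s]$ by Lemma~\ref{lem:getU12};
\item the right factor $Z$ has $\bar{m}_2$ columns and is reduced column-modulo $\bar{S}_2$, since $\bar{A}_2$ and $\bar{F}_2$ are reduced by the invariant of~(\ref{eq:decomp1}) and $X$ is reduced by construction.
\end{itemize}

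For the sizes, the inner dimensions are bounded by $n+m$ (for $E\bar{A}_2$) and by $\bar{m}$ (for $K$, $C$, $Q$ applied to $X$). The row dimensions of the left factors are bounded by $\bar{m}$, $n+m$, $m$ respectively. Hence each product is, up to transposition, an $O(n+m)\times O(\bar{m})$ by $O(\bar{m})\times \bar{m}_2$ product or an $O(\bar{m})\times O(n+m)$ by $O(n+m)\times\bar{m}_2$ product. Either shape splits into $O((n+m)/\bar{m})$ products of $\bar{m}\times\bar{m}$ blocks, costing $O((n+m)\bar{m}^{\omega-1})$ block operations in total.

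For precision, Lemma~\ref{lem:smallsnew} gives $\log s\le 2(\log\det S)/\bar{m}$, so the left factors have bitlength $O((\log\det S)/\bar{m})$. The right factors have column bitlengths bounded by those of the diagonal entries of $\bar{S}_2$. The sum of these is at most $\bar{m}_2+\log\det S$, so the \emph{average} column bitlength is again $O((\log\det S)/\bar{m}+1)$. Individual columns, however, can have bitlength as large as $\Theta(\log\det S)$.

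The main obstacle is exactly this unbalanced column precision of the right factor. A naive multiplication would cost $\M(\log\det S)$ per entry. To overcome it we use partial linearization, which is the content of Section~\ref{sec:partial} (Theorem~\ref{thm:partial} and Corollary~\ref{cor:partial}). With $d:=\lceil(\log\det S)/\bar{m}\rceil$, we expand each column of $Z$ whose bitlength exceeds $d$ in radix $2^d$ into several columns of bitlength at most $d$. Because the total column bitlength is $O(\bar{m}d)$, this yields $O(\bar{m})$ columns in total. We then multiply $Y$ by the linearized matrix with ordinary matrix multiplication on entries of bitlength $O(d+\log\bar{m})$; the $\log\bar{m}$ term accounts for inner-product growth. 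Finally we recombine the column pieces by shifted additions and reduce each column modulo its diagonal entry of $\bar{S}_2$, at a cost of $O((n+m)\,\M(\log\det S))$ per column group. By superlinearity of $\M$ this is dominated by the main term.

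The same linearization handles the additions in $\bar{A}_2+CX$ and $\bar{F}_2+QX$ trivially. Summing the four products gives the claimed $O((n+m)\bar{m}^{\omega-1}\M((\log\det S)/\bar{m}+\log\bar{m}))$.
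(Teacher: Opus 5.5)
Your proposal is correct and is essentially the paper's proof. Like the paper, you first form $\colmod(E\bar{A}_2,\bar{S}_2)$; the paper cites Theorem~\ref{thm:partial2} for this wide-by-tall product, and your block splitting of the inner dimension reproduces it. You then multiply the stacked $Q,I,C,K$ by that result using partial linearization (Theorem~\ref{thm:partial}), with the precision of the left factor bounded via Lemma~\ref{lem:smallsnew}, and finally add.

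One small correction. The final column reductions cost $O((n+m)\,\M(\log\det S+\bar{m}\log\bar{m}))$ in total, obtained by summing over the columns with superlinearity, not that amount per column. Showing this is dominated by the main term needs submultiplicativity together with $\M(d)\in O(d^{\omega-1-\epsilon_0})$; superlinearity alone does not give it.
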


\begin{proof}
The right hand side of (\ref{eq:appU2}) is equal to
\begin{equation} \mylabel{eq:ff}
\renewcommand{\arraystretch}{1.2}
\left [ \begin{array}{c} 
\bar{F}_2  \\
\phantom{\boldz}\\\hline
\bar{A}_2 \\\hline
\phantom{\boldz}  
\end{array} \right ]  +
\left [ \begin{array}{c} 
 Q  \\
I\\\hline
C \\\hline
K\end{array} \right ] E\bar{A}_2.
\end{equation} 
We will compute~(\ref{eq:ff}) column-modulo $\bar{S}_2$ in three steps.
We remark that, for readability, we have gathered
together all of the results that rely on partial linearization
in the next Section~\ref{sec:partial}. 
As such, we make forward references to Theorems~\ref{thm:partial}
and~\ref{thm:partial2} that establish the cost bounds for the matrix multiplications
column-modulo $\bar{S}_2$ that we require here in steps~1 and~2.

{\bf Step 1:} Compute $B = \colmod(E\bar{A}_2,\bar{S}_2)$. The row
dimension of $E$ and column dimension of $\bar{A}_2$ are bounded
by $\bar{m}$, and the inner dimension of the product $E \bar{A}_2$
is bounded by $n+m$. 
Since we want the result of $E \bar{A}_2$ column-modulo $\bar{S}_2$, we will
appeal to Theorem~\ref{thm:partial2} to do the product.  
In particular, from Lemma~\ref{lem:getU12} we have that 
$E = \rowmod(E,2sI)$, and, moreover, 
since $E$ has at most $\bar{m}$ rows, we have the bound
\begin{eqnarray}
\log \det 2sI  & \leq  & \bar{m}(\log 2s) \nonumber \\
 & =& \bar{m}\log 2 + \bar{m} \log s \nonumber \\
 & \leq &  \bar{m} \log 2 + 2 \log \det S \mylabel{eq:bnds},
\end{eqnarray}
where (\ref{eq:bnds}) follows from Lemma~\ref{lem:smallsnew}.  Let
$D$ be equal to the right hand side of (\ref{eq:bnds}).  By assumption,
$\bar{A}_2 = \colmod(\bar{A}_2,\bar{S}_2)$, and, moreover, since
$\bar{S}_2$ is a submatrix of $S$, we have the bound $\log \det
\bar{S}_2 \leq \log \det S \leq D$.  It now follows from
Theorem~\ref{thm:partial2}  that $B$ can be computed in the allotted
time.

{\bf Step 2:}  Compute
$$\colmod  \left (
\renewcommand{\arraystretch}{1.2}
\left [ \begin{array}{c} 
Q  \\
I \\\hline
C \\\hline
K
\end{array} \right ] B,\bar{S}_2  \right ).
$$
Both matrices in the product have column dimension bounded by
$\bar{m}$.  By Lemma~\ref{lem:getU12}, the first matrix is reduced
column modulo $2sI$, and by construction in step~2, the second
matrix $B$ is reduced column modulo $\bar{S}_2$.  By
Theorem~\ref{thm:partial}, the product can be done in the allotted
time with $D$ equal to the right hand side of (\ref{eq:bnds}).

{\bf Step 3:} Finally, add the result of step~2 to the matrix
containing $\bar{F}_2$ in (\ref{eq:ff}), and reduce the result
column modulo $\bar{S}_2$.  This can be done in the allotted time.
\end{proof}

\paragraph{Proof of Theorem~\ref{thm:hermbas}}
We first give the algorithm to transform the matrix in~(\ref{eq:shape2})
to its Hermite form.
To start stage~1, 
initialize the work matrix to be~(\ref{eq:shape2}),
with all $\bar{m} = m$ columns remaining to be transformed to correct form. 
In general, the work matrix has the shape shown
in~(\ref{eq:decomp1}), with $\bar{S}$ of dimension $\bar{m}$.
Decompose the work matrix as in (\ref{eq:shapek1}), with $\bar{S}_2$
of dimension $\lfloor \bar{m}/2 \rfloor$, and use Lemma~\ref{lem:getU12}
and \ref{lem:getrhs} 
to compute a unimodular transformation of the
work matrix so it has the shape shown on the right of~(\ref{eq:appU}).
This completes one stage.  Each stage 
puts the next $\lceil \bar{m}/2 \rceil$ columns into correct
form.  Because each stage is a unimodular row transformation (over
$\Z$) the resulting matrix at the end of all stages 
is the Hermite form of the starting matrix (\ref{eq:shape2}).

We now bound the cost of a particular stage in terms of $\bar{m}$.
By Lemma~\ref{lem:smallsnew}, the $\log s$ term in the cost estimate
of Lemma~\ref{lem:getU12} can be bounded from above by $2(\log \det
S)/\bar{m}$. By Lemmas~\ref{lem:getU12} and~\ref{lem:getrhs}, 
the cost of one stage is then 
\begin{equation} \mylabel{eq:needbnd} 
O((n+m)\bar{m}^{\omega-1}\,\B((\log \det S)/\bar{m} + \log \bar{m}))
\end{equation}
bit operations.  Let $D = \log \det S + m \log m$.  Then 
\begin{equation} \mylabel{eq:t1}
(\log \det S)/\bar{m} + \log \bar{m} \leq D/\bar{m}
\end{equation}
and
\begin{eqnarray} \bar{m}^{\omega-1} \,\B(D/\bar{m}) & =  &
m^{\omega-1} (\bar{m}/m)^{\omega-1} \, 
\B((m/\bar{m}) (D/m)) \nonumber \\
 & \leq & 
m^{\omega-1} (\bar{m}/m)^{\omega-1} \, \B(m/\bar{m}) 
\B(D/m) \nonumber \\
 &  \in  &  O(m^{\omega-1} \B(D/m)) \times (\bar{m}/m)^{\epsilon_0}, \mylabel{eq:t2}
\end{eqnarray}
where (\ref{eq:t2}) used $\B(m/\bar{m}) \in
O((m/\bar{m})^{\omega-1-\epsilon_0})$.  Each stage decreases $\bar{m}$
by at least half, so the sum of $(\bar{m}/m)^{\epsilon_0}$ over all
stages is a constant.  Substituting (\ref{eq:t1}) and (\ref{eq:t2})
into (\ref{eq:needbnd})  gives the result.

Now consider part~\ref{thm:hb2} of the theorem.
Consider running a simplified version of the algorithm just described
when $T$ is already known: the input
matrix is now as in~(\ref{eq:shape2}) but with the block $\boldz$ replaced by $T$.  
In this case equation~(\ref{eq:lemcompa}) changes as follows:
the computation of $U_1$ is entirely avoided (ie, $U_1$ is the identity)
and the blocks $Q$ and $\ast_2$ in $U_2$ are zero since $\bar{F}_1=\bar{G}_1$.

Considering the shape of $U_2$, if we apply all the unimodular transformations
computed in the simplified algorithm just described to the input matrix
\begin{equation} \mylabel{eq:inxx}
\left [ \begin{array}{ccc}  T & & I_m \\
A & I_n & \\
S & &   \end{array} \right ]
\end{equation}
we will transform it (via unimodular row transformations over $\Z$) to the matrix
\begin{equation} \mylabel{eq:outxx}
\left [ \begin{array}{ccc}  T & & I_m \\
  & I_n & C  \\
  & & K \end{array} \right ]
\end{equation}
which has trailing $(n+m) \times (n+m)$ submatrix in Hermite form.  
\qed

\begin{example} As in Example~\ref{ex:hermbas}, suppose our input matrices 
have sizes $n=3$ and $m=7$, that is, $A \in \Z^{3 \times 7}$
and $S \in \Z^{7 \times 7}$. But now we also have the Hermite basis $T \in \Z^{n \times n}$. 
We illustrate the transform of~(\ref{eq:inxx}) to~(\ref{eq:outxx}), focusing
on the last block-column of~(\ref{eq:inxx}).
The process becomes
\begin{equation}\mylabel{eq:exiter}
\stackrel{1}{\rightarrow} 
\setlength{\arraycolsep}{.55\arraycolsep} \renewcommand{\arraystretch}{.6}
\left [ \begin{array}{cccc|ccc}
  1   &        &        &        &        &         &        \\
       &   1    &        &        &        &         &        \\
       &        &   1    &        &        &         &        \\
       &        &        &   1    &        &         &        \\
       &        &        &        &   1    &         &        \\
       &        &        &        &        &    1    &        \\
       &        &        &        &        &         &       1\\\hline
 \ast  &   \ast &   \ast &   \ast &   \phantom{\ast}&  \phantom{\ast}  & \phantom{\ast}   \\
 \ast  &   \ast &   \ast &   \ast &   \phantom{\ast}&   \phantom{\ast} & \phantom{\ast}   \\
 \ast  &   \ast &   \ast &   \ast &   \phantom{\ast} &   \phantom{\ast}  & \phantom{\ast}   \\\cline{1-4}
 \underline{\ast} &   \ast &   \ast &   \ast &   \phantom{\ast}&    \phantom{\ast}  &  \phantom{\ast}  \\
       & \underline{\ast}  &  \ast  &  \ast  &   \phantom{\ast}&   \phantom{\ast}  &   \phantom{\ast}  \\
       &        & \underline{\ast}   & \ast   &   \phantom{\ast}   &  \phantom{\ast}   &   \phantom{\ast} \\
       &        &        & \underline{\ast}   &  \phantom{\ast}&  \phantom{\ast}  &  \phantom{\ast}  \\\cline{1-4}
       &        &        &        &   \phantom{\ast}  &         &        \\
       &        &        &        &        &   \phantom{\ast}   &        \\
       &        &        &        &        &         &   \phantom{\ast}  
\end{array} \right ]  \stackrel{2}{\rightarrow}
\left [ \begin{array}{cccc|cc|c}
  1   &        &        &        &        &         &        \\
       &   1    &        &        &        &         &        \\
       &        &   1    &        &        &         &        \\
       &        &        &   1    &        &         &        \\
       &        &        &        &   1    &         &        \\
       &        &        &        &        &    1    &        \\
       &        &        &        &        &         &       1\\\hline
 \ast  &   \ast &   \ast &   \ast &   {\ast}&  {\ast}  & \phantom{\ast}   \\
 \ast  &   \ast &   \ast &   \ast &  {\ast}&   {\ast} & \phantom{\ast}   \\
 \ast  &   \ast &   \ast &   \ast &   {\ast} &   {\ast}  & \phantom{\ast}   \\\cline{1-4}
 \underline{\ast} &   \ast &   \ast &   \ast &   {\ast}&    {\ast}  &  \phantom{\ast}  \\
       & \underline{\ast}  &  \ast  &  \ast  &   {\ast}&  {\ast}  &   \phantom{\ast}  \\
       &        & \underline{\ast}   & \ast   &   {\ast}   &  {\ast}   &   \phantom{\ast} \\
       &        &        & \underline{\ast}   &  {\ast}&  {\ast}  &  \phantom{\ast}  \\\cline{1-6}
       &        &        &        &   \underline{\ast}  & \ast        &        \\
       &        &        &        &        &   \underline{\ast}   &        \\ \cline{5-6}
       &        &        &        &        &         &   \phantom{\ast}
\end{array} \right ]  \stackrel{3}{\rightarrow}
\left [ \begin{array}{cccc|cc|c}
  1   &        &        &        &        &         &        \\
       &   1    &        &        &        &         &        \\
       &        &   1    &        &        &         &        \\
       &        &        &   1    &        &         &        \\
       &        &        &        &   1    &         &        \\
       &        &        &        &        &    1    &        \\
       &        &        &        &        &         &       1\\\hline
 \ast  &   \ast &   \ast &   \ast &   {\ast}&  {\ast}  & {\ast}   \\
 \ast  &   \ast &   \ast &   \ast &  {\ast}&   {\ast} & {\ast}   \\
 \ast  &   \ast &   \ast &   \ast &   {\ast} &   {\ast}  & {\ast}   \\\cline{1-4}
 \underline{\ast} &   \ast &   \ast &   \ast &   {\ast}&    {\ast}  &  {\ast}  \\
       & \underline{\ast}  &  \ast  &  \ast  &   {\ast}&  {\ast}  &   {\ast}  \\
       &        & \underline{\ast}   & \ast   &   {\ast}   &  {\ast}   &   {\ast} \\
       &        &        & \underline{\ast}   &  {\ast}&  {\ast}  &  {\ast}  \\\cline{1-6}
       &        &        &        &   \underline{\ast}  & \ast        &      \ast  \\
       &        &        &        &        &   \underline{\ast}   &     \ast   \\ \cline{5-7}
       &        &        &        &        &         &   \underline{\ast}
\end{array} \right ] = 
\left [ \begin{array}{ccccccc}
  1   &        &        &        &        &         &        \\
       &   1    &        &        &        &         &        \\
       &        &   1    &        &        &         &        \\
       &        &        &   1    &        &         &        \\
       &        &        &        &   1    &         &        \\
       &        &        &        &        &    1    &        \\
       &        &        &        &        &         &       1\\\hline
 \ast  &   \ast &   \ast &   \ast &   {\ast}&  {\ast}  & {\ast}   \\
 \ast  &   \ast &   \ast &   \ast &  {\ast}&   {\ast} & {\ast}   \\
 \ast  &   \ast &   \ast &   \ast &   {\ast} &   {\ast}  & {\ast}   \\\hline
 \underline{\ast} &   \ast &   \ast &   \ast &   {\ast}&    {\ast}  &  {\ast}  \\
       & \underline{\ast}  &  \ast  &  \ast  &   {\ast}&  {\ast}  &   {\ast}  \\
       &        & \underline{\ast}   & \ast   &   {\ast}   &  {\ast}   &   {\ast} \\
       &        &        & \underline{\ast}   &  {\ast}&  {\ast}  &  {\ast}  \\
       &        &        &        &   \underline{\ast}  & \ast        &      \ast  \\
       &        &        &        &        &   \underline{\ast}   &     \ast   \\ 
       &        &        &        &        &         &   \underline{\ast}
\end{array} \right ].
\end{equation}
Stage~1 starts by filling in the first four columns of $C$ and $K$ below the first four diagonals.
Stage~2 deposits the next two columns of $C$ and $K$ below the fourth and fifth diagonals,
leaving the first four columns unchanged. Stage~3 fills in the last column and completes.
\end{example}


\begin{example}
Let $A = \left[ \begin{array}{cccc} 1 & 5 & 19  \end{array} \right]$ and $S = \diag(2,6,  72)$. Then 
$$
T = \left[ \begin{array}{cccc} 1 & 1 & 5 \\   & 2 & 4  \\   &   & 6  \end{array} \right]
\mbox{~~is the Hermite form of~~} \left[ \begin{array}{c}  A \\ S \end{array} \right ].$$
We will show how the algorithm transforms
$$ 
\left [ \begin{array}{c|c} T & I_3 \\
A & \\
S &  \end{array} \right ] \stackrel{1}{\rightarrow} \cdot \stackrel{2}{\rightarrow} \left [ \begin{array}{c|c} T & I_3 \\
       & C \\
       & K  \end{array} \right ] 
$$ 
to produce $C$ and $K$.  Transforming the first two columns in stage~1 and then the last column in stage~2, 
we obtain
$$
\left[ \begin{array}{ccc||ccc} 1 & 1 & 5  & 1 \\   & 2 & 4  & & 1 \\   &   & 6& & & 1 \\  \hline 1 & 5 & 19 & \\ \cline{1-3}  2 & & \\ & 6 &  \\ & & 72  \end{array} \right ] \stackrel{1}{\rightarrow} 
\left[ \begin{array}{ccc||ccc} 1 & 1 & 5  & 1 \\   & 2 & 4  & & 1 \\   &   & 6& & & 1 \\  \hline  &  & 24 & 1 &0  & \\  & & 18 & 2 & 2 &  \\ & & 18& & 3 &  \\ \cline{3-5} & & 72  \end{array} \right ] \stackrel{2}{\rightarrow} 
 \left[ \begin{array}{ccc||ccc} 1 & 1 & 5  & 1 \\   & 2 & 4  & & 1 \\   &   & 6& & & 1 \\  \hline  &  &  &  1 & 0& 8 \\ \cline{4-6}  &  &  &  2 & 2 & 9 \\ & & & & 3 & 10 \\ & & & & & 12  \end{array} \right ]
$$
giving 
$$
C := \left[ \begin{array}{ccc} 1 & 0 & 8 \end{array} \right] ~~\mbox{ and } ~~ K := \left [ \begin{array}{ccc} 2 & 2 & 9 \\ & 3 & 10 \\ & & 12 \end{array} \right].
$$
We remark that the first stage computes modulo~$6^2$
while the second stage computes modulo~$72^2$.  The computation
that produces column~3 in the second matrix uses formula~(\ref{eq:appU2}),
but with $U_1$ the identity matrix, and blocks $Q$ and $\ast_2$ in $U_2$ the zero matrix.
\end{example}

\section{Matrix multiplication modulo diagonal matrices}\mylabel{sec:partial}

Algorithm \texttt{HermiteBasis} makes use of matrix multiplication
modulo a diagonal matrix in the operations $F_1 := \colmod(FV_1,S_1)$,
$F_2 := \colmod(CV_2,S_2)$ and $B := \colmod(H_1F,S)$.
The first two of these have the form
``tall $\times$ square,'' while the last is of the form ``Hermite $\times$ tall.''
The Hermite basis computation in Section~\ref{sec:hermbas}
requires an additional type of the form ``wide $\times$ tall'' 
for the operation $B = \colmod(E\bar{A}_2,\bar{S}_2)$.
In this section we show how to do these operations with a good bit complexity
rather than algebraic complexity.  Our constructions are based on partial linearization.
We begin by stating the results we need.
The actual construction and proofs of Theorem~\ref{thm:partial} and~\ref{thm:partial2}
are in Section~\ref{subsec:partial}.  

Consider first computing the product 
of an $n \times m$ matrix $A$ and $m \times m$ matrix $B$, with $n \geq m$:
$$
\left [ \begin{array}{c} \phantom{A} \\
A \\
\phantom{A} \end{array} \right ]
\left [  \begin{array}{c} B\end{array} \right ]=
\left [ \begin{array}{c} \phantom{A} \\
C \\
\phantom{A} \end{array} \right ].
$$
In an algebraic model, $C$ is computed in $O(nm^{\omega-1})$
arithmetic operations from the domain of entries.  Here  we consider
the cost in bit operations of computing $AB$ over the integers, but
in the following scenario where we are given  nonsingular diagonal
matrices $E$ and $F$:
\begin{itemize}
\item[(i)]  We know that $A=\colmod(A,E)$ and $B=\colmod(B,F)$.
\item[(ii)]  We only need $\colmod(AB,F)$.
\end{itemize}

Let $D$ be a common upper bound for both $\log \det E$ and $\log \det F$.
Then $D$ is an upper bound for the sum of the logarithms of
the $m$ diagonal entries of both $E$ and $F$, and the
average bitlength of the columns of $A$ and $B$ is bounded by $O(D/m +1)$.
Up to log factors, Theorem~\ref{thm:partial} shows that $\colmod(AB,F)$
can be computed in a number of bit operations equal to the algebraic cost 
$O(nm^{\omega-1})$ times $D/m+1$.

\begin{theorem} \mylabel{thm:partial} 
Let $E,F \in \Z_{\geq 0}^{m \times m}$ be nonsingular diagonal matrices.
Let $A =\colmod(A,E) \in \Z^{n \times m}$ and $B= \colmod(B,F) \in \Z^{m \times m}$,
with $n \geq m$. 
Then $\colmod(AB,F)$ 
can be computed in 
$O(nm^{\omega-1}\, \M(D/m + \log m))$ 
bit operations,
where $D$ is an upper bound 
for $\log \det E$ and $\log \det F$.
\end{theorem}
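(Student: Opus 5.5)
We will prove Theorem~\ref{thm:partial} by partial linearization, the technique of \cite[Section~4 and Theorem~36]{BirmpilisLabahnStorjohann21}. The idea is to replace $A$ and $B$ by matrices of slightly larger dimension whose entries all have bitlength about $d := \lceil D/m \rceil + \lceil \log_2 m\rceil$. We then do one ordinary integer matrix multiplication of a tall matrix by a square one, and recover $\colmod(AB,F)$ by a cheap recombination followed by reduction modulo $F$. Let the diagonal entries of $E$ and $F$ be $e_1,\ldots,e_m$ and $f_1,\ldots,f_m$. Column $j$ of $A$ has bitlength at most $\length(e_j)$, and column $j$ of $B$ has bitlength at most $\length(f_j)$. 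The sums of these bitlengths are bounded by $D+m$.

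\textbf{Step 1 (expand columns of $A$ and rows of $B$).} Write each entry of column $j$ of $A$ in radix $X:=2^d$ with $k_j:=\lceil \length(e_j)/d\rceil$ digits, so that $A_{*j}=\sum_{t<k_j} A^{(j,t)}X^t$. Form $\bar A\in\Z^{n\times \bar m}$ with columns $A^{(j,t)}$, where $\bar m=\sum_j k_j\le m+ (D+m)/d = O(m)$. Correspondingly, let $\bar B\in\Z^{\bar m\times m}$ have rows $X^tB_{j*}$. Then $AB=\bar A\bar B$ exactly. The entries of $\bar B$ are too large, so instead replace row $X^tB_{j*}$ by $\colmod(X^tB_{j*},F)$. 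This is legitimate because only $\colmod(AB,F)$ is needed. Now $\bar B=\colmod(\bar B,F)$ still has column $j$ bounded by $f_j$.

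\textbf{Step 2 (expand columns of $\bar B$).} Similarly split each column $j$ of $\bar B$ into $\ell_j:=\lceil\length(f_j)/d\rceil$ radix-$X$ digit columns, giving $\hat B\in\Z^{\bar m\times \hat m}$ with $\hat m=O(m)$ and entries in $[0,X)$. Computing $\bar B$ costs $O(\bar m m)$ reductions of integers of size $O(d+\log f_j)$. Summed over columns, this costs $O(m^2\,\M(D/m+\log m))$ (up to using superlinearity of $\M$ over the columns), which is within budget since $n\ge m$. Now $\bar A\hat B$ is an $n\times\hat m$ by $\hat m$ product: $\bar A$ is $n\times O(m)$ and $\hat B$ is $O(m)\times O(m)$, with entries of bitlength $d$. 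Split $\bar A$ into $\lceil n/m\rceil$ row blocks and do $O(n/m)$ products of $O(m)\times O(m)$ matrices with $d$-bit entries. Each product costs $O(m^\omega\,\M(d+\log m))$, so the total is $O(nm^{\omega-1}\M(D/m+\log m))$. The entries of the result have bitlength $O(d+\log m)$.

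\textbf{Step 3 (recombine and reduce).} Column $j$ of $\bar A\bar B$ equals $\sum_{t<\ell_j} (\bar A\hat B)_{*,(j,t)}X^t$. For each of the $n$ rows, recombining column $j$ is an addition of $\ell_j$ shifted $O(d)$-bit integers, costing $O(\ell_j d+\log f_j)$ bit operations. Reducing the resulting integer of bitlength $O(\length(f_j)+d)$ modulo $f_j$ costs $O(\M(\length(f_j)+d))$. Summing over $j$ and using superlinearity gives $O(n\,\M(D+md))$ per row set. We must be careful here: in the worst case this is $n\M(D)$, which could exceed $nm^{\omega-1}\M(D/m)$ only if a single $f_j$ is huge. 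Submultiplicativity gives $\M(D)\le \M(m)\M(D/m)\in O(m^{\omega-1-\epsilon_0}\M(D/m))$, so the bound still holds.

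\textbf{Main obstacle.} The delicate point is Step 1's replacement of $X^tB_{j*}$ by its reduction modulo $F$, together with the accounting for the dimensions $\bar m,\hat m$. We must check that $\sum_j k_j=O(m)$ holds even when the $e_j$ are very unbalanced. This follows from $\sum_j \length(e_j)\le D+m$ and $d\ge D/m$. We must also check that the cost of forming $\bar B$, which involves rows scaled by large powers $X^t$ up to $2^{\length(e_j)}$, is bounded by $m$ rows times $\M$ of the column sizes. I would handle the latter by computing $X^t\bmod f_i$ incrementally and using the same superlinearity/submultiplicativity argument as in Step 3.
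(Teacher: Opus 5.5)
Your proposal is correct and follows the paper's partial-linearization proof essentially step for step: column linearization of $A$, row expansion of $B$ reduced column-modulo $F$, column linearization of that result, one $n\times O(m)$ by $O(m)\times O(m)$ integer product, and finally compression and reduction modulo $F$. The differences are cosmetic (your $\bar B$ and $\hat B$ are the paper's $\hat B$ and $\bar B$, and you fold $\log m$ into $d$), apart from one slip in Step 2. There, superlinearity gives $O(m\,\M(D+md))$ for the row expansion, not $O(m^2\,\M(D/m+\log m))$, and you need submultiplicativity, as your last paragraph indicates, to reach $O(m^{\omega-\epsilon_0}\,\M(D/m+\log m))$, which is exactly the paper's bound for that step.
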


We will also need to apply the column-modulo matrix multiplication
technique of Theorem~\ref{thm:partial} with
an input matrix $A$ that may not necessarily have all entries
nonnegative.  To this end, let $A^{(+)}$ denote the the matrix $A$
but with all negative entries zeroed out.
Then $A = A^{(+)} - (-A)^{(+)}$ with $A^{(+)}$ and $(-A)^{(+)}$ being over $\Z_{\geq 0}$.

\begin{corollary} \mylabel{cor:partial1}  
Let $F \in \Z_{\geq 0}^{m \times m}$ be a nonsingular diagonal matrix.
Let $A \in \Z^{n \times m}$ and $B= \colmod(B,F) \in \Z^{m \times m}$,
with $n \geq m$. 
Then $\colmod(AB,F)$ 
can be computed in 
$O(nm^{\omega-1}\, \M(D/m + \log m))$ 
bit operations,
where $D$ is an upper bound for both $\log \det F$ and
the sum of the bitlengths of the columns of $A$.
\end{corollary}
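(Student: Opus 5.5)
The plan is to reduce Corollary~\ref{cor:partial1} to Theorem~\ref{thm:partial} by splitting $A$ into its positive and negative parts and then massaging each part into the form the theorem requires. Write $A = A^{(+)} - (-A)^{(+)}$. Both summands are nonnegative, and each column of either one has bitlength at most that of the corresponding column of $A$. Since $\colmod$ is compatible with subtraction, we get
\[
\colmod(AB,F) = \colmod\bigl(\colmod(A^{(+)}B,F) - \colmod((-A)^{(+)}B,F),F\bigr).
\]
The final subtraction and reduction cost $O(nm\,\M(D/m+\log m))$ once we note that column $j$ of the result has entries below $f_j$. So it suffices to handle a nonnegative $A$.

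For nonnegative $A$, let $d_j$ be the bitlength of column $j$ of $A$, so that $\sum_j d_j \le D$. Set $E := \diag(2^{d_1},\ldots,2^{d_m})$. Every entry of column $j$ lies in $[0,2^{d_j})$, hence $A = \colmod(A,E)$. Moreover $\log_2 \det E = \sum_j d_j \le D$. Together with the hypothesis $\log \det F \le D$, this shows that $D$ is a common bound for $\log\det E$ and $\log\det F$, up to the constant factor between $\log$ and $\log_2$, which is harmless inside $\M(\cdot)$ by superlinearity. Theorem~\ref{thm:partial} then applies directly to $A$ and $B$ with moduli $E$ and $F$, and computes $\colmod(AB,F)$ in $O(nm^{\omega-1}\,\M(D/m+\log m))$ bit operations. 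We apply it twice, once for each of $A^{(+)}$ and $(-A)^{(+)}$, which only doubles the cost.

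There is one detail to check, and it is the only point needing care: Theorem~\ref{thm:partial} requires $E$ to be nonsingular with nonnegative entries, and it uses $\det E$ rather than actual bitlengths. The construction $E=\diag(2^{d_j})$ settles both requirements. The convention $\length(0)=1$ gives $d_j\ge 1$, so $E$ is nonsingular even for zero columns. Forming $E$ and the two nonnegative parts takes only linear time in the input size. I do not expect a genuine obstacle, because all the real work is already contained in Theorem~\ref{thm:partial}.
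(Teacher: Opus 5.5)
Your proposal is correct and is essentially the paper's own proof. The paper also splits $A = A^{(+)} - (-A)^{(+)}$, notes that both parts are reduced column-modulo $E=\diag(2^{d_1},\ldots,2^{d_m})$ with $\log_2\det E \le D$, applies Theorem~\ref{thm:partial} twice, and returns $\colmod(C_1-C_2,F)$.

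Your aside about a constant factor between $\log$ and $\log_2$ is unnecessary. The hypothesis $\sum_j d_j \le D$ already bounds $\log\det E$ directly. Had such a constant appeared, it would be absorbed by submultiplicativity of $\M$, not superlinearity.
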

\begin{proof}
Let $d_i$ be the bitlength of column $i$ of $A \in \Z^{n \times m}$, $1\leq i\leq m$.
Then $A^{(+)}$ and $(-A)^{(+)}$ are reduced column-modulo $E$,
where $E=\diag(2^{d_1},2^{d_2},\ldots,2^{d_m})$. By assumption, 
$D \geq \log_2 \det E$.  Use Theorem~\ref{thm:partial}
to compute $C_1 := \colmod(A^{(+)}B,F)$ and $C_2 := \colmod((-A)^{(+)}B,F)$
in the allotted time, and then compute $\colmod(AB,F)$ as $\colmod(C_1
- C_2,F)$.
\end{proof}

\begin{corollary}  \mylabel{cor:partial}
Suppose we are given as input
\begin{enumerate}
\item $H \in \Z^{n \times n}$, a nonsingular Hermite basis
with at most $m$ nontrivial columns,
\item $S \in \Z^{m \times m}$, a nonsingular Smith form with $m \leq n$, and
\item $M \in \Z^{n \times m}$, satisfying $M = \colmod(M,S)$.
\end{enumerate}
Then $\colmod(HM,S)$ can be computed in
$O(nm^{\omega-1}\, \M(D/m + \log m))$ 
bit operations,
where $D$ is an upper bound for $\log \det H$ and  $\log \det S$.
\end{corollary}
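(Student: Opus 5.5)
The plan is to reduce to Corollary~\ref{cor:partial1} (or Theorem~\ref{thm:partial}) by splitting $H$ into its identity part and its nontrivial columns. Let $J \subseteq \{1,\ldots,n\}$ be the index set of the nontrivial columns of $H$, those with diagonal entry $>1$, so $|J| \leq m$. Then
\[
H = I + \sum_{j\in J} (H_{\ast j} - e_j) e_j^T,
\]
that is, $H = I + N P$, where $P \in \Z^{m' \times n}$ selects the rows indexed by $J$ and $N \in \Z^{n\times m'}$ collects the columns $H_{\ast j}-e_j$, with $m' = |J| \le m$. Hence
\[
HM = M + N (PM),
\]
and $PM$ is the $m' \times m$ submatrix of $M$ consisting of the rows indexed by $J$. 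This submatrix is still reduced column-modulo $S$, and we pad it with zero rows to make it $m\times m$. It therefore suffices to compute $\colmod(N\,(PM),S)$, add $M$, and reduce column-modulo $S$. The final addition and reduction cost only $O(nm\,\M(D/m+\log m))$, by superlinearity together with $\sum_i \length(S_{ii}) \le m + \log\det S$.

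Next we verify the hypotheses of Corollary~\ref{cor:partial1} for $A := N$ (padded to $n \times m$) and $B := PM$. Since $H$ is a Hermite basis, every entry in column $j$ lies in $[0,h_j]$, where $h_j$ is the diagonal entry. Thus the bitlength of column $j$ of $N$ is at most $1+\log_2 h_j$. Summing over $j \in J$ gives at most $m + \log_2 \prod_j h_j = m + \log_2\det H \le m + D$. The second matrix satisfies $B=\colmod(B,S)$, and $n\ge m$. Corollary~\ref{cor:partial1}, applied with bound $D' = D + m$, then gives cost $O(nm^{\omega-1}\M(D'/m + \log m)) = O(nm^{\omega-1}\M(D/m+\log m))$, since $D'/m = D/m + 1$. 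The $+1$ is absorbed using $\M(d+1)\le \M(2d)\le 2\M(d)$ for $d\ge1$, which follows from superlinearity when $d\geq 1$; the argument $D/m+\log m$ is at least $1$ once $m\ge 2$, and the case $m=1$ is trivial. If one prefers to avoid the corollary, the entries of $N$ are nonnegative, so Theorem~\ref{thm:partial} applies directly with $E = \diag(2^{d_1},\ldots,2^{d_m})$.

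The only subtle point is the bookkeeping for the column bitlengths of $N$. We must use that Hermite off-diagonal entries are bounded by the diagonal entry of their column, so that the total is controlled by $\log\det H$ rather than by $m\log\|H\|$. We also need the fact that the unit columns contribute nothing, which is exactly why they are split off as $I$. Identifying $J$ and extracting $N$ and $PM$ costs only $O(n^2)$ reading plus data movement, which is dominated by the stated bound. The main thing to double-check is that the $-e_j$ shift does not create negative entries that break Theorem~\ref{thm:partial}. It does not, since $h_j-1\ge 0$, and in any case Corollary~\ref{cor:partial1} handles signs.
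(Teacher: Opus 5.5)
Your argument is the paper's proof: write $HM = M + (H-I)M$, keep only the at most $m$ nonzero columns of $H-I$ and the matching rows of $M$, and apply Theorem~\ref{thm:partial}. Taking $E$ to be the diagonal entries of $H$ on those columns, the extracted block is already reduced column-modulo $E$ with $\log\det E=\log\det H\le D$, so there is no $+m$ to absorb. Two small fixes: absorbing the $+1$ needs submultiplicativity, $\M(2d)\le \M(2)\M(d)$, because superlinearity gives the reverse inequality $\M(2d)\ge 2\M(d)$; and reading all of $H$ in $O(n^2)$ is not dominated by the bound when $m\ll n$, but it is unnecessary, since a column with diagonal entry $1$ is $e_j$ and scanning the diagonal suffices to locate the nontrivial columns.
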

\begin{proof}
Since $HM = (H-I)M + M$, it will suffice to show how to
compute $\colmod((H-I)M,S)$ in the allotted time. By
assumption, $H-I$ has at least $n-m$ zero columns. Let $\mathcal S
\subseteq \{1,2,\ldots, n\}$ be a subset of column indices, with
$|\mathcal S|=m$, which captures all the nonzero columns of $H-I$,
and let $\bar{H} \in \Z^{n\times m}$ be the submatrix of $H-I$ comprised
of columns $\mathcal S$. Let $\bar{M} \in \Z^{m \times m}$ be the
subset of $M$ comprised of rows $\mathcal S$.
Then $\colmod((H-I)M,S) = \colmod(\bar{H}\bar{M},S)$,
which by Theorem~\ref{thm:partial} can be computed in the stated time.
\end{proof}

Theorem~\ref{thm:partial2} gives the analogous result of Theorem~\ref{thm:partial} in the case where  $A$ is transposed:
$$
\left [ \begin{array}{ccc} \phantom{A} &
A & \phantom{A} \end{array} \right ]
\left [ \begin{array}{c} \phantom{A} \\
B \\
\phantom{A} \end{array} \right ] =
\left [  \begin{array}{c} C\end{array} \right ].
$$

\begin{theorem} \mylabel{thm:partial2} 
Let $E,F \in \Z_{\geq 0}^{m \times m}$ be nonsingular diagonal
matrices.  Let $A =\rowmod(A,E) \in \Z^{m \times n}$ and $B=
\colmod(B,F) \in \Z^{n \times m}$, with $n \geq m$.  Then $\colmod(AB,F)$ 
can be computed in 
$O(nm^{\omega-1}\, \M(D/m + \log m))$ 
bit operations, where $D$ is upper bound for $\log
\det E$ and $\log \det F$.
\end{theorem}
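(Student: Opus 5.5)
We prove Theorem~\ref{thm:partial2} by partial linearization: the entries of the short dimension are split into chunks of roughly uniform bitlength, so that a single product of dimension about $O(m)\times O(n)\times O(m)$ over entries of bitlength $O(D/m+\log m)$ does the work. Write $E=\diag(e_1,\ldots,e_m)$ and $F=\diag(f_1,\ldots,f_m)$. Row $i$ of $A$ has entries in $[0,e_i)$ and column $j$ of $B$ has entries in $[0,f_j)$. Set $d:=\lceil D/m\rceil+1$, the target chunk bitlength.

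\textbf{Step 1: expand $A$ into rows of bitlength $d$.} Write each row of $A$ in radix $2^d$. Row $i$ needs $k_i=\lceil \length(e_i)/d\rceil \le 1+\log_2 e_i/d$ digit rows, so $A=\sum_t 2^{dt}A_i^{(t)}$ row by row. Stacking all digit rows gives $A'\in\Z^{m'\times n}$ with entries in $[0,2^d)$ and
$$m'\le m+D/d\le 2m.$$
Then $AB$ is recovered from $A'B$ by the row-combination $P A'B$, where $P$ sums the digit rows of each original row with weights $2^{dt}$. The transpose trick is the point here: rows of $A$ are reduced modulo different $e_i$, which is exactly why the expansion is done along rows.

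\textbf{Step 2: expand $B$ into columns of bitlength $d$.} Do the same for the columns of $B$. Column $j$ splits into $\ell_j\le 1+\log_2 f_j/d$ digit columns, giving $B'\in\Z^{n\times m''}$ with $m''\le 2m$ and entries in $[0,2^d)$. Now $A'B'$ is an $m'\times n$ by $n\times m''$ product with entries of bitlength $d$. Since $n\ge m$, we split the inner dimension $n$ into $\lceil n/m\rceil$ blocks of size $m$. That gives $O(n/m)$ products of $O(m)\times O(m)$ matrices, each costing $O(m^\omega)$ operations on integers of bitlength $O(d+\log n)$, and the block results are then summed. The factor $\log n$ is tame: each block product has entries of bitlength at most $2d+\log m$, and summing the $n/m$ block results costs only $O(m^2\cdot n/m)$ additions of slightly longer integers. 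So all products cost $O(nm^{\omega-1}\M(D/m+\log m))$.

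\textbf{Step 3: reassemble modulo $F$.} The remaining step, which I expect to be the main technical obstacle, is to recombine $\colmod(P A'B' Q,F)$ without reintroducing large integers. Here $Q$ recombines digit columns and $P$ recombines digit rows.
\begin{itemize}
\item Column recombination: reduce each block entry immediately modulo $f_j$ and recombine the $\ell_j$ digit columns with Horner's rule modulo $f_j$. The cost is $O(m'\sum_j \ell_j\,\M(\log f_j))$, which is within budget since $\sum_j \ell_j\le 2m$ and $\log f_j\le \ell_j d$ (with superlinearity of $\M$).
\item Row recombination: for each original row $i$ and column $j$, evaluate $\sum_t 2^{dt}(\cdot)$ modulo $f_j$ with Horner's rule. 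The intermediate values stay in $[0,f_j)$.
\end{itemize}
The danger is that an individual $f_j$ can be as large as $2^D$. The per-entry cost is then $\M(\log f_j)$ times $k_i$ steps, and summing over $i,j$ gives $\sum_i k_i\sum_j \M(\log f_j)\le 2m\,\M(D)$. For $m\ge 2$ and $\omega>2$ this fits inside $nm^{\omega-1}\M(D/m)$ only after using submultiplicativity: $\M(D)\le \M(m)\M(D/m)$ with $\M(m)\in O(m^{\omega-1-\epsilon_0})$, so $m\,\M(D)\in O(m^{\omega}\M(D/m))$. This is where I would concentrate the verification, together with checking that the $\log m$ term absorbs the accumulation in inner products.
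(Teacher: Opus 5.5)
\textbf{Gap: the column recombination in Step 3 is over budget.} Your overall architecture matches the paper's proof. Like the paper, you row-linearize $A$, column-linearize $B$ with chunk size about $D/m$, split the inner dimension into $\lceil n/m\rceil$ blocks, and compress modulo $F$ with a Horner scheme over the digit rows. Your row-recombination estimate is also right. The problem is the column recombination.

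For each of the $m'$ rows and each $j$, you combine the $\ell_j$ digit columns by $\ell_j$ sequential Horner steps modulo $f_j$. Each step rewrites a residue of $\log f_j$ bits, so the cost is $\Theta(m'\sum_j \ell_j\,\M(\log f_j+d))$. This is a sum of products in which $\ell_j$ is large exactly when $f_j$ is. The facts you cite ($\sum_j\ell_j\le 2m$ and $\log f_j\le \ell_j d$) only give $O(m^{\omega+1-\epsilon_0}\M(d))$, not the target.

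Concretely, take $n=m$, $E=2I_m$ and $F=\diag(1,\ldots,1,f_m)$ with $\log f_m=D=m\log m$. Then $m'=m$ and $\ell_m\approx m$. Your column recombination needs $\Omega(m\cdot m\cdot D)=\Omega(m^3\log m)$ bit operations. The theorem allows $O(m^{\omega}\M(2\log m))$, which is $o(m^3\log m)$ for $\omega<3$.

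The danger you flagged for the row recombination is harmless there. The $\sum_i k_i\le 2m$ Horner steps are shared by all columns: each is one row-vector operation costing $\sum_j\M(\log f_j+d)$, so the total is a product of two linearly bounded sums. It is the column side that breaks, because the long Horner chains for a large $f_j$ are repeated for every row.

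\textbf{The repair.} Use the paper's column-compression step. Since $2^d$ is a power of two and each digit-column entry has only $O(d+\log m)$ bits, form $\sum_u c_u2^{du}$ exactly over $\Z$ by shifted additions from least to most significant bit, in linear time. Only then reduce once modulo $f_j$. By superlinearity and submultiplicativity, this costs per row
$\sum_j\M(\log f_j+d+\log m)\le \M(m)\,\M(O(D/m+\log m))$,
which is $O(m^{\omega-\epsilon_0}\M(D/m+\log m))$ over the $O(m)$ rows.

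The paper also performs this compression and the reduction modulo $F$ on each inner block product, and then adds the $\lceil n/m\rceil$ reduced blocks modulo $F$. This removes the $\log n$ growth in your Step~2, where you left the cost of summing the block products over $\Z$ unquantified. Counted naively that cost is $O(nm(d+\log n))$, which is too large when $m$ and $D$ are bounded and $n\to\infty$. Tree summation would also fix it.
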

 
\subsection{Proofs of Theorems~\ref{thm:partial} and \ref{thm:partial2}}\mylabel{subsec:partial}

\paragraph{Partial linearization}
We will make use of partial linearization \cite{StorjohannDagstuhl06}
in the proofs of Theorems~\ref{thm:partial} and \ref{thm:partial2}
and so begin by defining some notation used for this construction.
For a modulus  $X \in \Z_{>0}$ and length $t \in \Z_{\geq 0}$,
define $\vec{X}^{(t)}$ to be  the expansion/compression vector
$$
\vec{X}^{(t)} := \left [ \begin{array}{c} 1 \\
X \\
\vdots \\
X^{t-1} 
\end{array} \right ] \in \Z^{t \times 1}.
$$
Then premultiplying $u \in \Z^{1 \times \ast}$ expands one row into $t$ rows:
$$
\vec{X}^{(t)} u = \left [ \begin{array}{c} u \\
Xu \\
\vdots \\
X^{t-1}u \end{array} \right ] \in \Z^{t \times \ast}.
$$
Postmultiplying $v = \left [ \begin{array}{cccc} v_0 & v_1 &
\cdots & v_{t-1}\end{array} \right ] \in \Z^{\ast \times t}$
compresses $t$ columns into one column:
$$
v \vec{X}^{(t)}  = \left [ \begin{array}{c} v_0 +v_1X + \cdots + 
v_{t-1} X^{t-1} \end{array} \right ] \in \Z^{\ast \times 1}.
$$

For a tuple $\vec{t} = (t_1,t_2,\ldots,t_m)$ of nonnegative integers, define
$$\vec{X}^{(\vec{t})}:=
\left [ \begin{array}{cccc} \vec{X}^{(t_1)} \\
 & \vec{X}^{(t_2)} \\
 & & \ddots \\
 & & & \vec{X}^{(t_m)} \end{array} \right ] \in \Z^{|\vec{t}| \times m},
$$
where $|\vec{t} |= \sum_i t_i$.

\paragraph{Proof of Theorem  \ref{thm:partial}}
Let $X\in \Z_{>0}$ be the smallest power of two such that that $\log
X \geq D/m$.  We will construct $\bar{A} \in [0,X)^{n \times
|\vec{e}|}$ with $|\vec{e}| < 2m$, and $\bar{B} \in [0,X)^{|\vec{e}|
\times |\vec{f}|}$ with  $|\vec{f}| < 2m$,  such that \begin{equation}
\mylabel{eq:target} \colmod(AB,F) =
\colmod(\bar{A}\bar{B}\vec{X}^{(\vec{f})},F).  \end{equation}

Let $\bar{A}$ be the matrix obtained from $A$ by replacing column
$i$ ($1\leq  i \leq m$) with the $n \times e_i$ matrix corresponding
to the coefficients of its $X$-adic expansion.   Note that we can
take $e_i\geq 0$ to be minimal such that $X^{e_i}\geq E_{ii}$,
giving the upper bound
\begin{eqnarray*} e_i & < & 1+(\log E_{ii})/(\log X) \\
 & \leq & 1 + (\log E_{ii})/(D/m) \\
 & = & 1 + (m/D)\log E_{ii}.
\end{eqnarray*}
Since $D \geq \log \det E$, we see that the column dimension 
$|\vec{e}|=\sum_{i=1}^m e_i$ of $\bar{A}$ satisfies $|\vec{e}| < 2m$.

Note that $A = \bar{A}  \vec{X}^{(\vec{e})}$
and thus $AB = \bar{A} \vec{X}^{(\vec{e})}B$.  
Define $\hat{B} = \colmod( \vec{X}^{(\vec{e})}B,F) \in \Z^{|\vec{e}| \times m}$.  
Then $\colmod(AB,F) = \colmod(\bar{A} \hat{B},F)$.
Let $\bar{B}$ be the matrix obtained from $\hat{B}$ by replacing
column $i$ ($1\leq i\leq m$) with the $e \times f_i$ matrix
corresponding to its $X$-adic expansion. Similar to $\bar{A}$,
we can take $f_i\geq 0$ to be minimal such that $X^{f_i}\geq F_{ii}$,
in which case the column dimension $|\vec{f}|$ of $\bar{B}$
satisfies $|\vec{f}| < 2m$.  At this point we have constructed
$\bar{A}$ and $\bar{B}$ such that (\ref{eq:target}) is satisfied.

The complete recipe to compute $C=\colmod(AB,F)$ using partial linearization is as follows:

\begin{tabbing}
\hbox{~~~~~~~~~~~~~~~} \= step 1 : Construct $\bar{A}$ such that 
           $A = \bar{A} \vec{X}^{(\vec{e})}$ \hbox{~~~} \= : column linearization \\
\>step 2 :  Compute $\hat{B} := \colmod( \vec{X}^{(\vec{e})}B,F)$ \> : row expansion \\
\> step 3 : Construct $\bar{B}$ such that $\hat{B} = \bar{B} \vec{X}^{(\vec{f})}$ \>
 : column linearization  \\
\> step 4 : $\bar{C} := \bar{A}\bar{B}$ \> : product over $\Z$  \\
\> step 5 : $C := \colmod(\bar{C}\vec{X}^{(\vec{f})},F)$ \> : column compression 
\end{tabbing}

We now give a cost analysis in terms of bit operations. 
The ``product over $\Z$'' step~4 dominates the cost so we consider it first. 
Since both $\bar{A}$ and
$\bar{B}$ have column dimension $<2m$, and $\bar{A}$ has row
dimension $n\geq m$, the cost is $O(n m^{\omega-1} \, \M(\log X +
\log m))$.  (The $\log m$ additive term in $\M(\log X + \log m)$
accounts for the fact that entries of $\bar{C} = \bar{A}\bar{B}$ can have
magnitude $O(mX)$ since they are the result of inner products of
length $<2m$.)  Since $\log X \in O(D/m)$, the cost is as stated
in the theorem. Next we show that steps~1, 2, 3 and~5 can be done 
in a time that is dominated by the cost bound just derived for step~4.

Since $X$ is a power of~2, the two ``column linearization'' steps~1 and~3
can be done in linear time $O(nm\log X)$.

The ``row expansion'' step~2 can be done
using $|\vec{e}|-m \leq m$ iterations of the following operation: multiply by $X$
a row vector in $\Z^{1 \times m}$ that is reduced column-modulo
$F$, and then reduce it column-modulo $F$.  The cost is $O(m
\sum_{i=1}^m \M(\log X + \log F_{ii}))$, which simplifies to $O(m
\, \M(m (\log X + D/m)))$ using first the superlinearity of $\M$
and then the assumption that $D \geq \log \det F$.  Simplifying the
$\M(m(\log X + D/m))$ term in this cost estimate using $\M(m(\log
X + D/m)) \leq \M(m) \M(\log X + D/m)$ and $\M(m) \in
O(m^{\omega-1-\epsilon_0})$ shows that the cost of step~2 is bounded
by $O(m^{\omega-\epsilon_0}\,\M(D/m + \log X))$ bit operations.

For the ``column compression'' step~5 we first compute $\bar{C}
\vec{X}^{(\vec{f})}$.  Since $||\bar{C}|| \in O(m X)$ 
and $X$ is a power of~2, the binary representation of
entries in $\bar{C}\vec{X}^{(\vec{f})}$ can be computed, in order from least to most
significant bit, with $n(|\vec{f}|-m)  < nm$ additions of integers having bitlength $O(\log
X + \log m)$, that is, in $O(nm(\log X + \log m))$ bit operations.
It remains to
reduce entries in $\bar{C} \vec{X}^{(\vec{f})}$ column-modulo $F$.
A similar analysis as done for the ``row expansion''  step~2, now
allowing for the fact that column $i$ of $\bar{C} \vec{X}^{(\vec{f})}$
can have bitlength $O(\log F_{ii} + \log m)$, shows that the cost
of this column-modulo $F$ operation is bounded by
$O(nm^{\omega-1-\epsilon_0}\,\M(D/m + \log X + \log m))$ bit operations. \qed

\paragraph{Proof of Theorem  \ref{thm:partial2}}

Let $X\in \Z_{>0}$ be the smallest power of two such that that $\log
X \geq D/m$.  We are going to construct an $\bar{A} \in [0,X)^{|\vec{e}|
\times n}$ with $|\vec{e}| < 2m$, and $\bar{B} \in [0,X)^{n
\times |\vec{f}|}$ with $|\vec{f}| < 2m$, with

\begin{equation} \mylabel{eq:target2} 
\colmod(AB,F) = \colmod\bigl (\hbox{Transpose}(\vec{X}^{(\vec{e})})\bar{A}\bar{B}\vec{X}^{(\vec{f})},F\bigr ).
\end{equation}

Let $\bar{A}$ be the matrix obtained from $A$ by replacing row $i$
($1\leq  i \leq m$) with the $e_i \times n$ matrix corresponding
to the coefficients of its $X$-adic expansion.  Let $\bar{B}$ be
the matrix obtained from $B$ by replacing column $i$ ($1\leq i\leq
m$) with the $n \times f_i$ matrix corresponding to its $X$-adic
expansion.  If we choose $e_i$ and $f_i$ as in the proof of
Theorem~\ref{thm:partial}, we have $|\vec{e}|,|\vec{f}| < 2m$.
At this point we have constructed $\bar{A}$ and $\bar{B}$ such that
(\ref{eq:target2}) holds.

Decompose $\bar{A}$ and $\bar{B}$ conformally into $t = \lceil n/m\rceil $ blocks so that
\begin{equation} \mylabel{eq:decomp}
\bar{A}\bar{B} = \stackrel{\textstyle \bar{A}}{\left [ \begin{array}{c|c|c|c} \bar{A}_1 & \bar{A}_1 & \cdots
 & \bar{A}_t \end{array} \right ]}
\stackrel{\textstyle \bar{B}}{
\renewcommand{\arraystretch}{1.2}
\left [ \begin{array}{c} \bar{B}_1 \\\hline
\bar{B}_2 \\\hline
\vdots \\\hline
\bar{B}_t
\end{array} \right ]},
\end{equation}
with each block $\bar{A}_{\ast}$ of column dimension $m$, except
for possibly $\bar{A}_t$, which may have fewer than $m$ columns.
Then
\begin{eqnarray*}
\colmod(AB,F) & = & \colmod\bigl(\sum_{i=1}^t \hbox{Transpose}(\vec{X}^{(\vec{e})})\bar{A}_i\bar{B}_i\vec{X}^{(\vec{f})},F
\bigr ) \\
   & = & \colmod\bigl(\hbox{Transpose}(\vec{X}^{(\vec{e})})
\colmod\bigl ( \sum_{i=1}^t \bar{A}_i\bar{B}_i\vec{X}^{(\vec{f})},F\bigr),F\bigr).
\end{eqnarray*}

The complete recipe to compute $C=\colmod(AB,F)$ using partial linearization is as follows:
\begin{tabbing}
\hbox{~~~~~~~} \= step 1 : Construct $\bar{A}$ such that $A = \vec{X}^{(\vec{e})}\bar{A}$ \hbox{~~~} \= : row linearization \\
\> step 2 : Construct $\bar{B}$ such that $B = \bar{B} \vec{X}^{(\vec{f})}$ \> : column linearization  \\
\> step 3 : Decompose $\bar{A}$ and $\bar{B}$ as in (\ref{eq:decomp}) \> : block decomposition \\
\> step 4 : $\bar{C}_i := \bar{A}_i\bar{B}_i$, $1\leq i\leq t$ \> : products over $\Z$  \\
\>step 5 : $\hat{C}_i := \colmod\bigl(\bar{C}_i\vec{X}^{(\vec{f})},F\bigr )$, $1\leq i\leq t$ \> : column compressions \\
\> step 6 : $\hat{C} := \colmod\bigl( \sum_{i=1}^t \hat{C}_i,F\bigr)$ \> : modular matrix sum \\
\> step 7 : $C := \colmod(\hbox{Transpose}(\vec{X}^{(\vec{e})})\hat{C},F)$ \> : row compression
\end{tabbing}

We now give a cost analysis in terms of bit operations.  Similar
to the proof of Theorem~\ref{thm:partial}, the ``products over
$\Z$'' step~4 has cost $O(n m^{\omega-1} \, \M(\log X + \log m))$.
Next we show that the other steps are dominated by this cost.

Like in the the proof of Theorem~\ref{thm:partial}, the ``row
linearization'' step~1 and ``column linearization'' step~2 can both be done in
linear time $O(nm\log X)$. The ``block decomposition'' step~3 does not involve
any computation.  Similar to the proof of Theorem~\ref{thm:partial},
the ``column compressions'' step~5 can be done in time
$O(nm^{\omega-1-\epsilon_0}\,\M(D/m + \log X + \log m)$, which dominates
the cost of the ``modular matrix sum'' step~6.

Finally, consider the ``row compression'' step~7.  Using a procedure
with computational steps identical to those in the recipe for the
 ``row expansion'' step in the proof of Theorem~\ref{thm:partial}, but incorporating
here a Horner type scheme 
to accumulate the results using some additional summations,
gives the same cost bound of $O(m^{\omega-\epsilon_0}\, \M(D/m +\log X))$
as derived in Theorem~\ref{thm:partial}. \qed

\section{Complexity analysis of Algorithm~\texttt{HermiteBasis}} \mylabel{sec:runtime}

Algorithm~\texttt{HermiteBasis}$(S,F,k,m)$ takes as input a reduced Smith
massager $(S,F)$ that has an index $(k,m)$ Hermite basis.
By Corollary~\ref{cor:atmostm}, such an $S$ will have at most $m$
nontrivial invariant factors, and so
by Lemma~\ref{lem:sreduce} 
the additional precondition of the algorithm 
that $S$ and $F$ have column dimension $m$ is not an essential restriction.
In this section we establish the following result.

\begin{theorem} \mylabel{thm:hbasrun} If, up to trivial invariant
factors in $S$ and corresponding leading zero columns in $F$,
$[\epsilon](S,F,k,m)$ is a valid input to Algorithm~\texttt{HermiteBasis},
then the Hermite basis of $\Rel(S,F)$ can be computed in\\
$
O(nm^{\omega-1}\, \B((\log \det S)/m + \log m)(\log m)^2\log(1/\epsilon))
$
bit operations.  The algorithm is randomized of the Las Vegas
type: it either returns the correct output, or reports \textsc{fail}
with probability at most $\epsilon$.
\end{theorem}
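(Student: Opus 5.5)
Correctness and the failure probability are already settled by Theorems~\ref{thm:hbcorr} and~\ref{thm:hbprob}. What remains is the bit complexity. The first step is to reduce to a genuinely valid input: strip the trivial invariant factors of $S$ and the matching zero columns of $F$, which by Lemma~\ref{lem:sreduce} leaves $\Rel(S,F)$ unchanged and costs nothing. After this we assume $S\in\Z^{m\times m}$ and $F\in\Z^{n\times m}$ with $\det S=\det H$.

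\textbf{Cost of one call.} Let $d=\log\det S$ and write $T[\epsilon](m,d)$ for the cost of a call with these parameters. I bound the nonrecursive work in the \textbf{else} branch line by line. In both parts the input matrices have dimensions $O(n)\times O(m)$.
\begin{itemize}
\item The two Hermite computations cost $O(nm^{\omega-1}\B(d/m+\log m))$ by Theorem~\ref{thm:hermbas}.
\item The two calls to \texttt{SmithMassager}$[\epsilon/4]$ on the $m\times m$ matrices $T$ and $K$ cost $O(m^\omega\B(d/m+\log m)(\log m)^2\log(1/\epsilon))$ by Theorem~\ref{thm:costSM}. This needs the sum of column bitlengths of $T$ and $K$ to be $O(d+m)$. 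That holds because both are Hermite forms with determinant dividing $\det S$: their entries are bounded columnwise by the diagonal entries, and $\det T\cdot\det K=\det S$.
\item The three $\colmod$ products cost $O(nm^{\omega-1}\M(d/m+\log m))$ by Theorem~\ref{thm:partial} and Corollary~\ref{cor:partial}. Here $H_1$ has at most $m_1\le m$ nontrivial columns and $\det H_1\mid\det S$. The matrices $V_1,V_2$ and $C$ are reduced modulo $S_1$, $S_2$ or $K$, all of determinant at most $\det S$.
\end{itemize}
Since $m\le n$ in all cases, the nonrecursive cost is
\[
O\bigl(nm^{\omega-1}\B(d/m+\log m)(\log m)^2\log(1/\epsilon)\bigr).
\]

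\textbf{Recursion.} The subproblems have dimensions $\lfloor m/2\rfloor$ and $\lceil m/2\rceil$, precisions $d_1,d_2$ with $d_1+d_2=d$ (since $\det S=\det S_1\det S_2$ by coprimality), and failure parameter $\epsilon/4$. The base case costs $O(n\B(d))$ by Lemma~\ref{lem:tbasecase}. The main obstacle is that $d$ may split unevenly, so $\B(d_i/(m/2))$ can be as large as $\B(2d/m)$. I handle this with the same trick as in the proof of Theorem~\ref{thm:hermbas}. Put $D=d+m\log m$. At recursion depth $j$ there are $2^j$ subproblems of dimension about $m/2^j$, and their precisions sum to $d$. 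Each costs at most $n(m/2^j)^{\omega-1}\B(D/(m/2^j))$ times the polylog factors. By superlinearity of $\B$, the sum at a level is dominated by concentrating all the precision in one node, giving
\[
n(m/2^j)^{\omega-1}\B(2^jD/m)\le nm^{\omega-1}2^{-j(\omega-1)}\B(2^j)\B(D/m).
\]
Using $\B(2^j)\in O(2^{j(\omega-1-\epsilon_0)})$, level $j$ contributes $O(nm^{\omega-1}\B(D/m))\,2^{-j\epsilon_0}$, so the levels form a geometric series (root-heavy). The factor $\log(4^j/\epsilon)=\log(1/\epsilon)+2j$ grows only linearly in $j$, so the series stays convergent with constant $O(\log(1/\epsilon))$. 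Finally $\B(D/m)=\B(d/m+\log m)$, which gives the stated bound.
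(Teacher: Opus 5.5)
Your proposal is correct and follows essentially the same route as the paper. Both bound the recursion tree level by level, using superlinearity and submultiplicativity of $\B$ together with $\B(2^j)\in O(2^{j(\omega-1-\epsilon_0)})$ to get a root-heavy geometric series that absorbs the extra $j+\log(1/\epsilon)$ factor from the shrinking failure probabilities. The differences are cosmetic: the paper pads $m$ to a power of two via Lemma~\ref{lem:ext} rather than using your ``dimension about $m/2^j$'' treatment of uneven splits (which works up to constants), and it bounds the \texttt{SmithMassager} calls separately by an $m^{\omega}$ term (Lemmas~\ref{lem:rundet} and~\ref{lem:runprob}) instead of folding them into the per-node $nm^{\omega-1}$ cost.
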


We defer the proof of Theorem~\ref{thm:hbasrun} to the end of this
section.

For convenience, we will initially assume that $m$ is a power of
two.  The recursion tree for Algorithm~\texttt{HermiteBasis} is
then a perfect binary tree of height $\log_2 m$.  Lemma~\ref{lem:rundet}
bounds the total cost of all work except for the calls to
the \texttt{SmithMassager}.  Lemma~\ref{lem:runprob} then bounds
the total cost of all calls to the randomized
algorithm~\texttt{SmithMassager}. Lemma~\ref{lem:ext} shows
that assuming $m$ to be a power of two is not an essential restriction, 
with the proof of Theorem~\ref{thm:hbasrun} then following.

\begin{lemma} \mylabel{lem:rundet} 
Not counting the cost of the calls to \texttt{SmithMassager},
Algorithm \texttt{HermiteBasis} has running time bounded by
$O(nm^{\omega-1}\, \B((\log \det S)/m + \log m))$
bit operations. This result assumes that $m$ is a power of two.
\end{lemma}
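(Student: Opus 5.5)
We bound the work done at one node of the recursion tree and then sum over the levels. Consider a node at depth $i$ ($0 \le i \le \log_2 m$). Its dimension parameter is $m/2^i$, its modulus is $\bar{S}$ of dimension at most $m/2^i$, and its $F$ has $n$ rows. Write $d_v := \log \det \bar{S}$ for this node $v$.

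The non-\texttt{SmithMassager} work at $v$ consists of the following operations.
\begin{itemize}
\item Two Hermite computations, via Theorem~\ref{thm:hermbas}. The inputs are $A$ with at most $n$ rows, and $B$ with at most $n$ rows reduced column-modulo $S$. Their cost is $O(n(m/2^i)^{\omega-1}\,\B(2^i d_v/m + \log m))$.
\item The products $F_1:=\colmod(FV_1,S_1)$ and $F_2:=\colmod(CV_2,S_2)$, via Theorem~\ref{thm:partial}.
\item The product $B:=\colmod(H_1F,S)$, via Corollary~\ref{cor:partial}.
\end{itemize}
Each product has the same form of cost with $\M$ in place of $\B$, provided the relevant determinants are $O(d_v)$. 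For $H_1$ this holds because $\det H_1 = \det S_1$ divides $\det S$. For $T$ and $K$, $\det T \cdot \det K = \det S$, since $\mathcal L(S)\subseteq\mathcal L(T)$ and $K$ is the Hermite basis of $-ST^{-1}$ by Lemma~\ref{lem:keyM}. For $V_1$ and $V_2$, they are reduced Smith massagers, so their column bitlengths are bounded by those of $S_1$ and $S_2$. In each call I must check that the entries fed to Theorem~\ref{thm:partial} are reduced modulo diagonal matrices whose $\log\det$ is at most $d_v + O(m/2^i)$. A few of these, such as $C$ in $\colmod(CV_2,S_2)$, need Corollary~\ref{cor:partial1} with column bitlengths controlled by $\det K$. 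The base case costs $O(n\,\B(d_v))$ by Lemma~\ref{lem:tbasecase}.

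The key structural fact is that determinants split across the children. Since the inputs are coprime, $\det S = \det H = \det H_1\det H_2 = \det \bar S_1\det \bar S_2$. By induction, the sum of $d_v$ over the $2^i$ nodes at depth $i$ equals $d := \log\det S$.

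The level-$i$ cost is therefore $O\bigl(n(m/2^i)^{\omega-1}\sum_v \B(2^i d_v/m+\log m)\bigr)$. By superlinearity of $\B$ this is at most
\[
O\bigl(n(m/2^i)^{\omega-1}\,\B(2^i d/m + 2^i\log m)\bigr),
\]
since $\sum_v \B(x_v)\le \B(\sum_v x_v)$ when all $x_v\ge1$. Using submultiplicativity,
\[
\B(2^i(d/m+\log m)) \le \B(2^i)\,\B(d/m+\log m),
\]
and $\B(2^i)\in O(2^{i(\omega-1-\epsilon_0)})$. The level-$i$ cost is then $O(nm^{\omega-1}\,\B(d/m+\log m))\cdot 2^{-i\epsilon_0}$. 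This is a geometric series in $i$, so the sum over all $\log_2 m+1$ levels is $O(nm^{\omega-1}\,\B(d/m+\log m))$. This is the root-heavy behaviour announced in the overview. For the $\M$ terms the same argument applies with $\M\le\B$.

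The main obstacle is the bookkeeping in the partial-linearization calls: verifying that every operand is reduced column- or row-modulo some diagonal matrix whose $\log\det$ is $O(d_v + m/2^i)$. This matters especially for $C$ and the transformation blocks, which I would handle using the Hermite-form entry bounds ($C$ reduced modulo the diagonal of $K$). A second point to get right is that $n \ge m/2^i$ at every node, which holds since $k+m\le n$; this is needed so that the "tall" hypotheses of Theorem~\ref{thm:partial} hold.
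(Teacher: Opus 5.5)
Your proof is correct and follows the paper's argument. You bound each internal node by $n\bar m^{\omega-1}\B(\bar D/\bar m+\log \bar m)$ via Theorem~\ref{thm:hermbas}, Theorem~\ref{thm:partial} and Corollary~\ref{cor:partial}; you use that $\log\det$ splits exactly across children, so superlinearity collapses each level; and submultiplicativity with $\B(2^i)\in O(2^{i(\omega-1-\epsilon_0)})$ then gives a geometric series in $i$. The only differences are cosmetic. You absorb the leaves into the same level sum by using the global $\log m$ term in every argument of $\B$, whereas the paper sums the base cases separately. You also spell out the hypothesis checks that the paper leaves implicit: $\det T\cdot\det K=\det S$, $C$ reduced modulo the diagonal of $K$, and $n\ge \bar m$.
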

\begin{proof}
We will prove that the sum of the cost of the nonrecursive work
over all nodes in the recursion tree, not counting the cost of the
calls to \texttt{SmithMassager}, satisfies the bound stated in the
theorem.  For brevity, let $D := \log \det S$.  When discussing a
particular node, we will distinguish the current value of a parameter
by using a bar, for example, a particular node corresponds to the
calling sequence $(\bar{S},\bar{F},\ast,\bar{m})$.

The internal nodes correspond to calling sequences with $\bar{m} >1$.
Each internal node has two children corresponding to the two recursive
calls. The cost of the nonrecursive work at a particular internal
node depends on $n$ (fixed for all nodes), $\bar{m}$ (decreasing
by a factor of~2 for the subproblems) and $\bar{D} := \log \det
\bar{S}$ (splitting into $\bar{D} = \bar{D}_1 + \bar{D}_2$ corresponding
to $\det \bar{S} = (\det \bar{S}_1)(\det \bar{S}_2)$ for the
subproblems).  Level $i$ of the tree has $2^i$ nodes, each with
$\bar{m}=m/2^i$.  The sum of $\bar{D}$ over all nodes at any
particular level is equal to $D$.

Consider the top level of the recursion in Algorithm~\texttt{HermiteBasis}.
We claim that, ignoring the two calls to \texttt{SmithMassager}, and up to 
a multiplicative constant, the
cost of the nonrecursive work in the ``\texttt{else}'' case is
bounded  by
\begin{equation} \mylabel{eq:tcmulcost1}
n\bar{m}^{\omega -1} \, \B(\bar{D}/\bar{m}
+ \log_2 \bar{m}).
\end{equation}
Theorem~\ref{thm:hermbas} gives the bound for computing $T$ and
$(K,C)$.  Theorem~\ref{thm:partial} gives the bound
for computing $\colmod(FV_1,S_1)$ and $\colmod(CV_2,S_2)$.
Corollary~\ref{cor:partial} gives the bound for computing
$\colmod(H_1F,S)$. (In fact, for these three partially linearized
multiplications the slightly better bound~(\ref{eq:tcmulcost1}) with $\B$ replaced
by $\M$ also holds.)

At this point we have established that each internal node of the
recursion tree has associated cost bounded by~(\ref{eq:tcmulcost1}).
We now bound the sum of~(\ref{eq:tcmulcost1}) over all nodes in a
non-leaf level $i$.  Any non-leaf level has $\bar{m} \geq 2$ and
thus the argument of $\B$ in~(\ref{eq:tcmulcost1}) satisfies
$\bar{D}/\bar{m} + \log_2 m\geq 1$.
The superlinearity of $\B$ gives that
the sum of~(\ref{eq:tcmulcost1}) over all nodes at a non-leaf level
$i$ is bounded by
\begin{eqnarray}
n\bar{m}^{\omega-1}\, \B(D/\bar{m} + 2^i\log_2 \bar{m}) & = & n(m/2^i)^{\omega-1}\, \B(D/(m/2^i) + 2^i\log_2 \bar{m}) \nonumber \\
 &  = & nm^{\omega-1}(1/2^i)^{\omega-1} \,
\B( 2^i\times (D/m +\log_2 \bar{m})) \nonumber \\
 & \leq & nm^{\omega-1}(1/2^i)^{\omega-1} \, \B(2^i) \B(D/m + \log_2 \bar{m}) \nonumber\\
 & \in & O(nm^{\omega-1} \B(D/m + \log \bar{m})) \times \frac{1}{(2^{\epsilon_0})^{i}}.
\mylabel{eq:bsecond}
\end{eqnarray}
Here, we used $\B(2^i) \in O((2^i)^{\omega-1-\epsilon_0})$
to obtain (\ref{eq:bsecond}).

If we substitute the global upper bound  $\log m$ for $\log \bar{m}$
in~(\ref{eq:bsecond}), then the only term in (\ref{eq:bsecond})
that depends on $i$ is $1/(2^{\epsilon_0})^i$.  Summing
$(1/2^{\epsilon_0})^i$ over all levels $i\geq 0$ yields a constant,
so the total cost at all internal nodes is $O(nm^{\omega-1}\,
\B(D/m+\log m))$, which is bounded by the cost stated in the theorem.

It remains to bound the cost of the base cases 
corresponding to $\bar{m}=1$.
The base cases coincide with calling sequences 
$(h_iI_1,\ast,k+i-1,1)$ for $1\leq i\leq m$,
with the $h_i$ being the
diagonal entries of the block $\bar{H}$ in the index $(k,m)$ Hermite basis $H$ of $\Rel(S,F)$,
see~(\ref{eq:partm}).
Lemma~\ref{lem:tbasecase} bounds the cost of a base case
by $O(n\, \B((\log h_i)+1))$, where we have added $+1$ to ensure
the argument of $\B$ is bounded from below by 1.
The superlinearity of $\B$ then
gives
\begin{eqnarray*}
\sum_{i=1}^m n\, \B((\log h_i)+1) &\leq &n\, \B(D+m)\\ 
    &= & n \, \B(m ((D/m)+1)) \\
 & \leq & n\, \B(m)\B(D/m+1)  \\
&  \in & O(nm^{\omega-1-\epsilon_0}\,\B(D/m+1)). \qedhere
\end{eqnarray*}
\end{proof}

\begin{lemma} \mylabel{lem:runprob} 
The total cost of all calls to \texttt{SmithMassager} in
Algorithm~\texttt{HermiteBasis} is bounded by \\
$O(m^{\omega}\, \B((\log \det S)/m + \log m)(\log m)^2 \log(1/\epsilon))$ bit
operations.  This result assumes that $m$ is a power of two.
\end{lemma}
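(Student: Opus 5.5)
The plan is to mirror the level-by-level accounting in the proof of Lemma~\ref{lem:rundet}, but applied to the two calls to \texttt{SmithMassager} made at each internal node. At a node with calling sequence $[\bar\epsilon](\bar{S},\bar{F},\ast,\bar{m})$ the two calls are on the $\bar{m}\times\bar{m}$ matrices $T$ and $K$, each with failure parameter $\bar\epsilon/4$. By Theorem~\ref{thm:costSM} each call costs
\[
O\bigl(\bar{m}^{\omega}\,\B(D_T/\bar{m}+\log\bar{m})(\log\bar{m})^2\log(4/\bar\epsilon)\bigr),
\]
where $D_T$ bounds the sum of the column bitlengths of the input. The first step is therefore to bound these column bitlength sums by $O(\log\det\bar{S}+\bar{m})$.

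For $T$, a Hermite basis, each column has its entries bounded by its diagonal entry. Hence $D_T\le \bar{m}+\log\det T$. Since $\det T=\det S_1=\det H_1$ and $\det H_1$ divides $\det\bar{S}$ (by the coprime splitting $\det\bar S=\det\bar S_1\det\bar S_2$ shown in the previous section), this gives $D_T\le\bar{m}+\log\det\bar{S}$. The same argument applies to $K$, which is a Hermite form with $\det K=\det H_2$ by Theorem~\ref{thm:bas} and Theorem~\ref{thm:S}.

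Next comes the failure parameter. At depth $i$ of the recursion the parameter is $\epsilon/4^{i}$, so each call at that depth uses $\epsilon/4^{i+1}$. This gives $\log(1/\bar\epsilon)\le \log(1/\epsilon)+2(i+1)$. Since $i\le\log_2 m$, this is $O(\log(1/\epsilon)\log m)$ in the worst case. However, that bound alone would cost an extra $\log m$ factor, so I will keep the dependence on $i$ explicit. The cost at one node is then
\[
O\bigl(\bar{m}^{\omega}\B(\bar{D}/\bar{m}+\log\bar{m})(\log m)^2(\log(1/\epsilon)+i+1)\bigr),
\]
with $\bar{D}=\log\det\bar{S}$.

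Then I sum over level $i$, which has $2^i$ nodes with $\bar{m}=m/2^i$ and $\sum\bar{D}=D$. As in~(\ref{eq:bsecond}), superlinearity of $\B$ followed by submultiplicativity gives
\[
\sum 2^{-i\omega}m^{\omega}\,\B(\cdot)\le m^{\omega}2^{-i\omega}\,\B(2^i)\,\B(D/m+\log m)\in O\bigl(m^{\omega}\B(D/m+\log m)\bigr)\,2^{-i(1+\epsilon_0)}.
\]
Here the exponent gain is even better than in Lemma~\ref{lem:rundet}, since the factor is $\bar m^{\omega}$ rather than $\bar m^{\omega-1}$. Multiplying by $(\log(1/\epsilon)+i+1)$ and summing over $i$, the series $\sum_i (i+1)2^{-i(1+\epsilon_0)}$ converges. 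The total is therefore $O(m^{\omega}\B(D/m+\log m)(\log m)^2\log(1/\epsilon))$, using $\log(1/\epsilon)\ge 1$ for $\epsilon\le 1/2$.

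The main obstacle I expect is justifying the column-bitlength bound for $T$ and $K$ at every node via the determinant identities, including the degenerate cases where trivial invariant factors are stripped. I also need to make sure that $\log\det\bar S$ at the children really partitions $D$ at each level. The latter follows from $\det\bar{S}=\det\bar{S}_1\det\bar{S}_2$ applied inductively.
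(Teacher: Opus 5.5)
Your proposal is correct and follows essentially the same route as the paper. Both arguments bound the two \texttt{SmithMassager} calls at each internal node using the fact that $\det T$ and $\det K$ divide $\det\bar{S}$, turn the failure parameter $\epsilon/4^{i+1}$ at depth $i$ into an additive $O(i)$ in the $\log(1/\epsilon)$ factor, sum each level via superlinearity and submultiplicativity of $\B$ as in (\ref{eq:bsecond}), and finish with a convergent series in $i$; your explicit column-bitlength bound $\bar{m}+\log\det T$ for the Hermite inputs and your sharper per-level decay $2^{-i(1+\epsilon_0)}$ (the paper settles for $2^{-i\epsilon_0}$ by replacing $n$ with the global $m$) are harmless refinements rather than a different argument.
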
 
\begin{proof} 
Consider the top level of the recursion.  
The two calls to \texttt{SmithMassager}$[\epsilon/4]$ are with arguments
$T$ and $K$.  Both $\det T$ and $\det K$ are divisors of $\det S$.
By Theorem~\ref{thm:costSM},
the calls to \texttt{SmithMassager}$[\epsilon/4]$ have cost 
bounded by that stated in the lemma.  At level $i$ of the recursion, 
the calls to \texttt{SmithMassager} should return \textsc{fail}
with probability at most $\epsilon/4^{i+1}$, which increases
the $\log(1/\epsilon)$ factor to $(i+ \log(1/\epsilon))$.

Similar to the derivation of the bound~(\ref{eq:bsecond}) in the
proof of Lemma~\ref{lem:rundet},
the total cost of the calls to \texttt{SmithMassager} at level $i$ is bounded by
\begin{equation} 
\mylabel{eq:ccost}
m^{\omega}\, \B((\log \det S)/m + \log \bar{m})(\log \bar{m})^2 \times
\frac{i+\log(1/\epsilon)}{(2^{\epsilon_0})^i},
\end{equation} 
where $\bar{m} = m/2^i$.
To arrive at~(\ref{eq:ccost}) it suffices to replace $n$ with $m$ in~(\ref{eq:bsecond}) and to multiply by the missing
factor $(\log \bar{m})^2 (i + \log(1/\epsilon))$. 
The result now follows by 
substituting the global upper bound $\log m$ for $\log \bar{m}$
into~(\ref{eq:ccost}) and then noting that $$
\sum_{i\geq 0} \frac{i+\log(1/\epsilon)}{(2^{\epsilon_0})^i} \in O(\log(1/\epsilon)). \qedhere
$$
\end{proof}

\begin{lemma} \mylabel{lem:ext} 
If $(S,F,k,m)$ is a valid calling sequence to
Algorithm~\texttt{HermiteBasis}, then
the calling sequence 
$(\diag(I_t,S),\diag(\boldz_{t}, F),k,m+t)$ is also valid, 
$t \in \Z_{\geq 0}$.
Moreover, the Hermite basis of $\Rel(S,F)$
is the trailing submatrix of the
Hermite basis of $\Rel(\diag(I_t,S),\diag(\boldz_{t}, F))$.
\end{lemma}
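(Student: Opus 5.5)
The plan is to verify the input conditions of \texttt{HermiteBasis} one by one for the padded sequence. Then we identify the lattice $\Rel(\diag(I_t,S),\diag(\boldz_t,F))$ with $\Rel(S,F)$ using Lemma~\ref{lem:sreduce}. Here $\diag(\boldz_t,F)$ is read as $\left[\begin{array}{c|c}\boldz & F\end{array}\right]$, that is, $t$ zero columns prepended to $F$ and no new rows. This is the reading consistent with the remark after Lemma~\ref{lem:sreduce} about implicitly adding trivial invariant factors.

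\textbf{Step 1: the formal input conditions.}
\begin{itemize}
\item $\diag(I_t,S)$ is nonsingular and in Smith form, since $1$ divides every invariant factor.
\item $\left[\begin{array}{c|c}\boldz & F\end{array}\right]$ is reduced column-modulo $\diag(I_t,S)$: residues modulo $1$ are $0$, and the columns of $F$ are already reduced modulo $S$.
\item The column dimension is now $m+t$. The parameter condition $0\le k\le n-m$ of the statement refers to the original $m$, and this is exactly the sense of ``up to trivial invariant factors'' in Theorem~\ref{thm:hbasrun}.
\end{itemize}

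\textbf{Step 2: coprimality.} I will check that the Hermite basis of
$$\left[\begin{array}{cc} I_t & \\ & S\\ \boldz & F\end{array}\right]$$
is $I$. Its lattice contains $\diag(I_t,\boldz)$, and it contains the rows $[\boldz\mid v]$ for every $v\in{\mathcal L}(S)+{\mathcal L}(F)={\mathcal L}(I)$. Hence it is all of $\Z^{1\times(m+t)}$.

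\textbf{Step 3: same lattice.} Lemma~\ref{lem:sreduce} applied with $\hat S=S$ and $\hat F=F$ gives
$$\Rel(\diag(I_t,S),\left[\begin{array}{c|c}\boldz & F\end{array}\right])=\Rel(S,F).$$
The two lattices are equal, so they have the same Hermite basis $H$. In particular, the index $(k,m)$ precondition is inherited verbatim. The Hermite basis is $n\times n$ in both cases, and the ``trailing submatrix'' in the statement is then the whole matrix. If instead $\diag(\boldz_t,F)$ is meant literally, with $t$ extra zero rows prepended, then a relation vector $[q\mid p]$ is unconstrained in $q$. The Hermite basis is then $\diag(I_t,H)$, which is index $(k+t,m)$, and its trailing $n\times n$ submatrix is $H$.

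\textbf{Main obstacle.} The only delicate point is reconciling the index and dimension bookkeeping: whether the padded call's parameter should be $m$ or $m+t$, and whether $k$ must shift by $t$ under the literal block-diagonal reading. I would first settle the intended meaning of $\diag(\boldz_t,F)$, then state which version of the index condition holds. That condition follows from the lattice identification in Step 3 (or its $\diag(I_t,H)$ variant), so no computation is needed beyond Lemma~\ref{lem:sreduce}.
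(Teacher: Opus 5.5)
Your proposal does not establish that the padded calling sequence is valid, and the cause is your primary reading. Suppose $\diag(\boldz_t,F)$ meant $\left[\begin{array}{c|c}\boldz & F\end{array}\right]$ with no new rows. Then the fourth parameter $m+t$ would require $0\le k\le n-(m+t)$ and an index $(k,m+t)$ Hermite basis. Both are impossible for an $n\times n$ basis once $k+m+t>n$. That is exactly the case the lemma exists for: the proof of Theorem~\ref{thm:hbasrun} pads $m$ up to a power of two, for example in the top-level call $(S,G,0,n)$ of Theorem~\ref{thm:general} when $n$ is not a power of two. Saying that the condition ``refers to the original $m$'' and that the index $(k,m)$ precondition is ``inherited verbatim'' swaps the required precondition (index $(k,m+t)$) for a different one; it does not verify it. The fact that the ``Moreover'' clause becomes vacuous under this reading is a further sign it is not the intended one.

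The paper means the literal block-diagonal matrix, with $t$ new zero rows and columns. Your fallback computation for that reading is correct: the lattice is $\Z^{1\times t}\times\Rel(S,F)$, its Hermite basis is $\diag(I_t,H)$, and the trailing $n\times n$ block is $H$. The step you leave open as the ``main obstacle'' is easy but needed, and the answer is that $k$ does not shift. Being index $(k,m)$ just says that columns $1,\ldots,k$ and $k+m+1,\ldots,n$ of $H$ are unit columns. Hence columns $1,\ldots,k$ and $k+m+t+1,\ldots,n+t$ of $\diag(I_t,H)$ are unit columns, so $\diag(I_t,H)$ is index $(k,m+t)$, and $0\le k\le (n+t)-(m+t)$ holds. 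Your Step~2 coprimality argument carries over unchanged to $\left[\begin{array}{c}\diag(I_t,S)\\ \diag(\boldz_t,F)\end{array}\right]$, since the new zero rows contribute nothing. Reducedness is clear, and this completes validity. The paper gets coprimality and the basis in one step from Lemma~\ref{rem:alt}: the augmented matrix built from $\diag(I_t,S)$ and $\diag(\boldz_t,F)$ has Hermite form with leading block $I_{t+m}$ and trailing block $\diag(I_t,H)$.
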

\begin{proof} Let $H$ be the Hermite
basis of $\Rel(S,F)$.  The result follows from Lemma~\ref{rem:alt} by noting that 
$$
\setlength{\arraycolsep}{.7\arraycolsep} \renewcommand{\arraystretch}{.9}
\left [ \begin{array}{cc|cc}  
I_t & & \\
 & S & \\\hline 
\boldz_{t} &   & I_t & \\
& F & &  I_n \end{array} \right ] 
\mbox{~~has Hermite form~~}
\left [ \begin{array}{cc|cc}  
I_t & & \\
 & I_m & & \ast \\\hline
       &  & I_t & \\
 &   & &  H \end{array} \right ]  \qedhere
$$
\end{proof}

\paragraph{Proof of Theorem~\ref{thm:hbasrun}}
By working with the input $[\epsilon](\diag(I_t,S),\diag(\boldz_{t},F),k,m+t)$ of
Lemma~\ref{lem:ext}, where $t$ is at most $m-1$, we may assume
without loss of generality that $m$ is a power of 2.  The result
now follows as a corollary of Theorem~\ref{thm:hbcorr} (correctness),
Theorem~\ref{thm:hbprob} (probability), and Lemmas~\ref{lem:rundet} and~\ref{lem:runprob}
(running time). \qed

\section{Hermite basis of an arbitrary relations lattice}\mylabel{sec:general}

In this section we show how to compute the Hermite basis $H$ of
$\Rel(M,G)$ with arbitrary inputs.  Lemma~\ref{lem:tostandard} shows
how to construct new inputs that satisfy the preconditions of
Algorithm~\texttt{HermiteBasis}.  Theorem~\ref{thm:general} then
gives overall cost estimates for computing $H$.

\begin{lemma} \mylabel{lem:tostandard} Let $M \in \Z^{\ell \times
m}$ have full column rank and $G \in \Z^{n \times m}$.  There exists
a randomized algorithm
$ (S,F) \leftarrow \texttt{ToSmithCoprime}[\epsilon](M,G) $
that computes matrices $S$ and $F$ such that $\Rel(S,F)=\Rel(M,G)$,
with $S$ a nonsingular Smith form, $F$ reduced column-modulo $S$,
and $S$ and $F$ coprime.  The cost of the algorithm is
$O((\ell + n) m^{\omega-1} \,\B(D/m + \log m) (\log m)^2\log(1/\epsilon))$
bit operations, where 
$D$ is a bound for the sum of the bitlengths of
the columns of $M$ and $G$. 
The algorithm either returns a correct output, or reports \textsc{fail} with
probability at most $\epsilon$.
\end{lemma}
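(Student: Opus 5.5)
The algorithm follows the chain of transformations~(\ref{eq:tf}) from the introduction, and each link is justified by a result of Section~\ref{sec:relations}. First find a nonsingular $m\times m$ submatrix $M_1$ of $M$ by selecting $m$ linearly independent rows. For this step I would compute a row rank profile of $M$ modulo a random prime, or reuse the rank computation inside \texttt{SmithMassager}. Lemma~\ref{lem:denequiv}.\ref{fact3} lets us reorder the rows of $M$ as $\left[\begin{array}{c} M_1\\ M_2\end{array}\right]$.

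Transformation~2 calls \texttt{SmithMassager}$[\epsilon/3](M_1)$ to get $(S_1,V_1)$. We then set $M_3=\colmod(M_2V_1,S_1)$ and $G_1=\colmod(GV_1,S_1)$. By Lemma~\ref{thm:diag} this gives $\Rel(M,G)=\Rel\bigl(\left[\begin{array}{c}S_1\\ M_3\end{array}\right],G_1\bigr)$. The two products are tall$\times$square products, which Corollary~\ref{cor:partial1} handles (applied blockwise on $m$-row slices). The cost is $O((\ell+n)m^{\omega-1}\M(D/m+\log m))$, because $\log\det S_1\le\log|\det M_1|\le D+O(m\log m)$ by Hadamard's bound on the columns.

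Transformation~3 computes the Hermite basis $T_1$ of ${\mathcal L}(M_3)+{\mathcal L}(S_1)$ using Theorem~\ref{thm:hermbas}.\ref{thm:hb1}. Lemma~\ref{lem:compress}.\ref{compp1} then gives $\Rel(M,G)=\Rel(T_1,G_1)$. Transformation~4 calls \texttt{SmithMassager}$[\epsilon/3](T_1)$ to get $(S_2,V_2)$ and sets $G_2=\colmod(G_1V_2,S_2)$, which is justified by Lemma~\ref{lem:diag}. Transformation~5 computes the Hermite basis $T_2$ of ${\mathcal L}(G_2)+{\mathcal L}(S_2)$ with Theorem~\ref{thm:hermbas}.\ref{thm:hb1}. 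Theorem~\ref{thm:hermbas}.\ref{thm:hb2} then yields $C,K$, and Theorem~\ref{thm:bas} gives $\Rel(T_1,G_1)=\Rel(K,C)$ with coprime inputs. Transformation~6 is one more call \texttt{SmithMassager}$[\epsilon/3](K)$, producing $(S,V_3)$, followed by $F=\colmod(CV_3,S)$. By Lemma~\ref{lem:diag}, $\Rel(K,C)=\Rel(S,F)$. Coprimality survives because $V_3$ is a Smith massager: $WV_3\equiv I\bmod S$, so any common right divisor of $(S,CV_3)$ would give one of $(K,C)$ through Lemma~\ref{lem:reduce}. A cleaner route is that Theorem~\ref{thm:S} gives $\det H=\det K=\det S$, which forces coprimality. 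Failure occurs only in the three randomized calls, so the probability is at most $\epsilon$.

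The main obstacle is the size bookkeeping. Every later modulus determinant, namely $\det S_1$, $\det T_1$, $\det K$ and $\det S$, divides $\det M_1$, so each is bounded by $D+O(m\log m)$. Every later matrix is reduced column-modulo a Smith form of at most that determinant, so its column-bitlength sum is $O(D+m\log m)$. With these bounds, each call to Theorem~\ref{thm:costSM}, Theorem~\ref{thm:hermbas} and Theorem~\ref{thm:partial} costs at most $O((\ell+n)m^{\omega-1}\B(D/m+\log m)(\log m)^2\log(1/\epsilon))$. The inputs of row dimension $\ell+n$ can be processed in $\lceil(\ell+n)/m\rceil$ blocks of $m$ rows. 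The one delicate point is selecting $M_1$ within this budget. I would first try a randomized compression $RM$ with small random $R$ to locate independent rows, and then verify that the chosen $M_1$ is nonsingular via the \texttt{SmithMassager} call.
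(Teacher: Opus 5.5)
Your chain of six transformations matches the paper's proof: you justify each link by the same lemma, bound each cost by the same results, and your size bookkeeping is the paper's (every later modulus determinant divides $\det M_1$, and Hadamard gives $\log_2|\det M_1|\le \bar D:=D+(m/2)\log_2 m$).

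\textbf{The gap is in transformation~1 and the failure count.} Selecting $m$ independent rows of $M$ within the budget is itself randomized. So it is not true that failure ``occurs only in the three randomized calls'', and splitting $\epsilon$ into thirds does not give the claimed bound.

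Your proposed safeguard does not work either.
\begin{itemize}
\item You cannot certify $M_1$ by handing it to \texttt{SmithMassager}. Theorem~\ref{thm:costSM} assumes a nonsingular input and promises nothing otherwise.
\item \texttt{SmithMassager} cannot be run on the rectangular $M$ to ``reuse'' a rank computation.
\item A compression $RM$ does not single out rows of $M$.
\end{itemize}

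Your first suggestion is the right one, and it is the paper's. Compute a row rank profile of $M$ modulo a random prime $p\in\Theta(\bar D)$, and add two observations you are missing.
\begin{itemize}
\item \emph{Success probability.} Fix a nonsingular $m\times m$ submatrix of $M$. Its determinant is nonzero with absolute value at most $2^{\bar D}$, so it has few prime divisors of that size. Hence rank $m$ modulo $p$ occurs with probability at least $1/2$.
\item \emph{Certification.} When the rank modulo $p$ is $m$, the selected rows satisfy $\det M_1\not\equiv 0\bmod p$. This certifies $\det M_1\neq 0$, so the step is Las Vegas.
\end{itemize}
Repeating $O(\log(1/\epsilon))$ times pushes this step's failure probability below $\epsilon/4$. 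Allotting $\epsilon/4$ to each of the three \texttt{SmithMassager} calls as well gives total failure probability at most $\epsilon$, as in the paper.

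\textbf{Coprimality of the final $(S,F)$.} The paper leaves this implicit, so arguing it is a plus, but your first justification is muddled. A common divisor of $(S,CV_3)$ does not yield one of $(K,C)$ ``through Lemma~\ref{lem:reduce}''.

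Your determinant argument is sound once you note that the proof of Theorem~\ref{thm:S} via Lemma~\ref{rem:alt} works for any nonsingular modulus, such as $K$.

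There is also a direct route. From $XK+YC=I$, $KV_3=Q_1S$ and $WV_3=I+Q_2S$ you get $I=(WXQ_1-Q_2)S+WY\,CV_3$. Hence $S$ and $CV_3$, and therefore $S$ and $F$, are coprime.
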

\begin{proof}
We will perform a sequence of six transformations from the input $(M,G)$ to $(S,F)$:
$$ (M,G)
\stackrel{1}{\rightarrow} \left (\left [ \begin{array}{c} M_1 \\\hline M_2 \end{array} \right ], G \right )
\stackrel{2}{\rightarrow} \left (\left [ \begin{array}{c} S_1 \\\hline M_3 \end{array} \right ], G_1\right)
\stackrel{3}{\rightarrow} (T_1, G_1)
\stackrel{4}{\rightarrow} (S_2, G_2) 
\stackrel{5}{\rightarrow} (K, C)
\stackrel{6}{\rightarrow} (S, F).
$$
Considering transformations~1 to~6 in turn we have: $M_1$ is a
nonsingular submatrix of $M$, $S_1$ is the Smith form of $M_1$,
$T_1$ is the Hermite basis of the previous modulus, $S_2$ is the
Smith form of $T_1$, $(K,C)$ are coprime inputs, and $S$ is the
Smith form of $K$.  We will show how to do each transformation
within the allotted time such that the transformed
inputs generate the same relations lattice.

Before detailing the transformations, we explain how the cost
estimate for each depends on a common upper bound $\bar{D} := D +
(m/2)\log_2 m$ that captures the bitlength of the inputs.   
By assumption, $D$, and thus also $\bar{D}$, 
bounds the sum of the bitlengths of the columns of
$M$ and $G$.  By Hadamard's bound, the base~2 logarithm of the
absolute value of any nonzero minor of $M$ is bounded by $\bar{D}$.
In particular, $\log_2 |\det M_1| \leq \bar{D}$.  The bitlength of
subsequent inputs is then bounded by $\bar{D}$ by noting that $\det S_1 = |\det
M_1|$, $(\det T_1) \mid (\det S_1)$, $\det S_2 = \det T_2$, and
$(\det K) \mid (\det S_2)$.  Moreover, $M_3$ and $G_1$ will be
reduced column-modulo $S_1$, $G_2$ will be reduced column-modulo
$S_2$, and $C$ will be reduced column-modulo the diagonals of $K$.

Transformation~1 replaces $M$ with $PM$, where $P$ is a permutation
matrix such that $PM$ has principal $m \times m$ submatrix $M_1$
nonsingular.  Correctness follows from
Lemma~\ref{lem:denequiv}.\ref{fact3}.  A suitable $P$ can be found
in the allotted time, with a chance of failure bounded by $\epsilon/4$,
by randomly choosing primes $p \in \Theta(\bar{D})$ and computing
a rank revealing matrix
decomposition~\cite{IbarraMoranHui,JeannerodPernetStorjohann13,
DumasPernetSultan17} of $A$ modulo $p$.  See, for example, \cite[Proof
of Theorem~1]{SaundersWan04}.

For the remaining transformations, the second matrix in each pair
has row dimension $n$, and $M_2$ and $M_3$ have row dimension
$\ell-m$; the cost estimate in the theorem uses the common upper bound $\ell + n$ 
for these row dimensions.  
All remaining matrices $M_1$, $S_1$,
$T_1$, $S_2$ and  $K$ are square of dimension $m$.

Transformations~2 to~6 are accomplished as follows. For each
transformation we cite the lemma that shows correctness.  For each
step, the results showing that the computation can be done in the
allotted time is cited to the right.  As noted above, a valid argument
for the function $\M$ or $\B$ in the cost estimates given by the results
cited to the right is $\bar{D}/m + \log m \in O(D/m + \log m)$.
\begin{center}

\begin{minipage}{15cm}
$\stackrel{2}{\rightarrow}$:  Lemma~\ref{thm:diag} \\
\mbox{~~~~} $(S_1,V_1)  :=  \texttt{SmithMassager}[\epsilon/4](M_1)$ \hfill \emph{\# Theorem~\ref{thm:costSM}}\\
\mbox{~~~~} $M_3  :=  \colmod(M_2V_1,S_1)$ \hfill\emph{\# Corollary~\ref{cor:partial1}}\\
\mbox{~~~~} $G_1  :=  \colmod(GV_1,S_1)$ \hfill \emph{\# Corollary~\ref{cor:partial1}}\\
\end{minipage} 

\begin{minipage}{15cm}
$\stackrel{3}{\rightarrow}$: Lemma~\ref{lem:compress}.\ref{compp1}\\
\mbox{~~~~} $T_1 := $ the Hermite basis of ${\mathcal L}(S_1) + {\mathcal L}(M_3)$
                 \hfill \emph{\# Theorem~\ref{thm:hermbas}.\ref{thm:hb1}}  \\
\end{minipage}

\begin{minipage}{15cm}
$\stackrel{4}{\rightarrow}$:  Lemma~\ref{lem:diag}\\
\mbox{~~~~} $(S_2,V_2)  :=  \texttt{SmithMassager}[\epsilon/4](T_1)$ \hfill \emph{\# Theorem~\ref{thm:costSM}}\\
\mbox{~~~~} $G_2  :=  \colmod(G_1V_2,S_2)$ \hfill \emph{\# Theorem~\ref{thm:partial}}\\
\end{minipage}

\begin{minipage}{15cm}
$\stackrel{5}{\rightarrow}$:  Lemma~\ref{thm:bas}\\
\mbox{~~~~} $T_2 := $ the Hermite basis of ${\mathcal L}(S_2) + {\mathcal L}(G_2)$ 
                  \hfill \emph{\# Theorem~\ref{thm:hermbas}.\ref{thm:hb1}} \\
\mbox{~~~~} $\left [\begin{array}{ccc} T_2 & & \ast \\
  & I & C\\
  & &  K \end{array}\right ] : = $ the  Hermite basis of $
\left [ \begin{array}{ccc} T_2 & & I \\
G_2 & I &  \\
S_2 & &  \end{array}\right ]$  \hfill \emph{\# Theorem~\ref{thm:hermbas}.\ref{thm:hb2}}  \\
\end{minipage}

\begin{minipage}{15cm}
$\stackrel{6}{\rightarrow}$:  Lemma~\ref{lem:diag}\\
\mbox{~~~~} $(S,V_3)  :=  \texttt{SmithMassager}[\epsilon/4](K)$ \hfill \emph{\# Theorem~\ref{thm:costSM}}\\
\mbox{~~~~} $F  :=  \colmod(CV_3,S)$ \hfill \emph{\# Theorem~\ref{thm:partial}}
\end{minipage}
\end{center}

Including step~1, the above recipe has four Las Vegas steps, each with probability
of failure bounded by $\epsilon/4$. The overall probability of
failure is thus bounded by $\epsilon$, as required.
\end{proof}

\begin{theorem}  \mylabel{thm:general}
Let $M \in \Z^{\ell \times m}$ have full column rank and $G \in \Z^{n \times m}$. 
There exists a Las Vegas randomized algorithm that computes the Hermite basis $H$ of a $\Rel(M,G)$ in
\begin{equation} \mylabel{eq:costx1}
O((\ell+n)m^{\omega-1}\, \B(D/m + \log m) (\log m)^2)
\end{equation}
plus 
\begin{equation} \mylabel{eq:costx2}
 O(n^{\omega}\, \B(D/n  + \log n)(\log n)^2)
\end{equation}
bit operations, where 
$D$ is a bound for the sum of the bitlengths of the column of $M$ and $G$.

Moreover, if
$n \in O(m)$, or
the Hermite basis $H$ of $\Rel(M,G)$ is $(k,\bar{m})$ for a known $(k,\bar{m})$ with $\bar{m} \in O(m)$,
then the running time is~(\ref{eq:costx1}), without requiring the extra term~(\ref{eq:costx2}).
\end{theorem}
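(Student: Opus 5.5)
The plan is to compose Lemma~\ref{lem:tostandard} with Theorem~\ref{thm:hbasrun}, with $\epsilon=1/4$ for each, so the overall failure probability is at most $1/2$. First call \texttt{ToSmithCoprime}$[1/4](M,G)$. It returns $(S,F)$ with $\Rel(S,F)=\Rel(M,G)$, $S\in\Z^{m\times m}$ a nonsingular Smith form, $F\in\Z^{n\times m}$ reduced column-modulo $S$, and $S,F$ coprime. Lemma~\ref{lem:tostandard} bounds the cost of this step by~(\ref{eq:costx1}). From the proof of that lemma we also get $\log\det S\le \bar D = D+(m/2)\log_2 m$, since $\det S=\det K$ divides $\det S_2$, which divides $\det S_1=|\det M_1|$, and $M_1$ is bounded by Hadamard's inequality.

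Next call \texttt{HermiteBasis}. There are two cases.

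\emph{Case 1: a valid index is known.} Suppose the Hermite basis $H$ of $\Rel(M,G)$ is known to have index $(k,\bar m)$ with $\bar m\in O(m)$. By Corollary~\ref{cor:atmostm}, $S$ has at most $\bar m$ nontrivial invariant factors. By Lemma~\ref{lem:sreduce}, and padding with trivial factors via Lemma~\ref{lem:ext} if $\bar m>m$, we can form a valid input $[1/4](\hat S,\hat F,k,\bar m)$. Theorem~\ref{thm:hbasrun} then bounds the cost by
\[
O\bigl(n\bar m^{\omega-1}\,\B((\log\det S)/\bar m+\log\bar m)(\log\bar m)^2\bigr).
\]
Two estimates turn this into~(\ref{eq:costx1}):
\begin{itemize}
\item Since $\bar m\in O(m)$, the factor $\bar m^{\omega-1}$ is $O(m^{\omega-1})$.
\item If $\bar m<m$, bound $(\log\det S)/\bar m\le (m/\bar m)(\bar D/m)$. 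Then use submultiplicativity, $\B((m/\bar m)x)\le\B(m/\bar m)\B(x)$, together with $\B(m/\bar m)\in O((m/\bar m)^{\omega-1})$. This absorbs the factor into $\bar m^{\omega-1}(m/\bar m)^{\omega-1}=m^{\omega-1}$, giving $\B(D/m+\log m)$ as in~(\ref{eq:t2}).
\end{itemize}

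\emph{Case 2: no index is known.} Take the trivial index $(0,n)$, which every $n\times n$ Hermite basis has. Pad $S$ and $F$ up to column dimension $n$ with trivial invariant factors and zero columns (Lemma~\ref{lem:ext}), which requires $m\le n$. If $m>n$, then $S$ has at most $n$ nontrivial factors by Corollary~\ref{cor:atmostm}, so we instead truncate with Lemma~\ref{lem:sreduce}. Theorem~\ref{thm:hbasrun} gives cost
\[
O\bigl(n^{\omega}\,\B((\log\det S)/n+\log n)(\log n)^2\bigr).
\]
Using $\log\det S\le D+(m/2)\log m$, the argument of $\B$ is at most $D/n+(m/n)\log m+\log n$.
\begin{itemize}
\item If $m\le n$, then $(m/n)\log m\le\log n$, and the cost is~(\ref{eq:costx2}).
\item If $n\in O(m)$, then $n^{\omega}\le n\,m^{\omega-1}\cdot O(1)$. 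The bound $\B(\log\det S/n+\cdots)\le \B(D/m+\log m)$ follows as in Case~1, since the at most $\min(m,n)$ nontrivial factors let us take the index dimension $\min(m,n)$. So this case also falls under~(\ref{eq:costx1}).
\end{itemize}

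The main obstacle is the bookkeeping of the precision term when the index dimension $\bar m$ passed to \texttt{HermiteBasis} differs from $m$. One must verify carefully that $\bar m^{\omega-1}\B(\bar D/\bar m)$ is dominated by $m^{\omega-1}\B(\bar D/m)$ for $\bar m\le m$. The exponent assumption $\B(d)\in O(d^{\omega-1-\epsilon_0})$ is exactly what makes this work, as in the proof of Theorem~\ref{thm:hermbas}.

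Finally, the Las Vegas property and the failure probability of at most $1/2$ follow from a union bound over the two randomized calls. Correctness follows from $\Rel(S,F)=\Rel(M,G)$, Lemma~\ref{lem:ext}, and Theorem~\ref{thm:hbcorr}.
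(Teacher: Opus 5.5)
Your proposal is correct and is essentially the paper's proof. You call \texttt{ToSmithCoprime}$[1/4]$ and then \texttt{HermiteBasis}$[1/4]$ with index $(0,n)$ or the known $(k,\bar m)$, bound $\log\det S\le D+(m/2)\log_2 m$, and invoke Theorem~\ref{thm:hbasrun}; your padding or truncation of $S$ and your submultiplicativity absorption for $\bar m<m$ (or $n<m$) make explicit what the paper handles by writing $\bar m=cm$.

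One small slip: when $m<n$, the fact that $S$ has at most $m$ nontrivial invariant factors does not let you take index dimension $m$. That would be the converse of Corollary~\ref{cor:atmostm}, which fails (cf.\ Example~\ref{ex:1}). You do not need it, though: with index $(0,n)$ and $m<n\le cm$ you have $(\log\det S)/n\le(\log\det S)/m$ and $n^{\omega-1}\in O(m^{\omega-1})$, which gives (\ref{eq:costx1}) directly.
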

\begin{proof}
The algorithm for the general case has two steps:
\begin{center}
\begin{minipage}{8cm}
$(S,G) := \texttt{ToSmithCoprime}[1/4](M,G)$\\
$H := \texttt{HermiteBasis}[1/4](S,G,0,n)$
\end{minipage}
\end{center}
By Lemma~\ref{lem:tostandard}, $(S,G)$ can be computed in
time~(\ref{eq:costx1}).  From the proof of Lemma~\ref{lem:tostandard}
we have that $\log_2 \det S \leq D + (m/2)\log_2 m$.  Using this
bound for $\log \det S$, Theorem~\ref{thm:hbasrun} states that $H$
can be computed in time~(\ref{eq:costx2}).  Since both steps are
Las Vegas with each having  a probability of failure  bounded by
$1/4$, the overall probability of failure is bounded by $1/2$.

Now consider the case where $H$ is index $(k,\bar{m})$ for a known
$(k,\bar{m})$ with $\bar{m} \in O(m)$.
Then we modify the second step to be
\begin{center}
\begin{minipage}{8cm}
$H := \texttt{HermiteBasis}[1/4](S,G,k,\bar{m})$.  
\end{minipage}
\end{center}
To capture the 
assumption that $\bar{m} \in O(m)$, let $\bar{m}=cm$ for some constant
$c$.  Then by Theorem~\ref{thm:hbasrun} the call the \texttt{HermiteBasis}
has cost $O(n(cm)^{\omega-1} \B(D/(cm) + \log cm)(\log cm)^2)$, which is
bounded by~(\ref{eq:costx1}) when $c$ is a constant.

Finally, if $n \in O(m)$ then we know that $H$ is index $(0,\bar{m})$
for $\bar{m}=n \in O(m)$, so this case has already been handled.
\end{proof}

\section{Applications} \mylabel{sec:examples}

In this section we give five examples of the use of our algorithm
to compute the Hermite basis of a relations lattice.  The problems
we consider take as input one or more integer matrices.  In terms
of these input matrices we define $M$ and $F$ such that $\Rel(M,F)$
is the lattice we seek.  We then use Theorem~\ref{thm:general} to
bound the cost of computing the Hermite basis of $\Rel(M,F)$.

\subsection{Hermite basis of a full column rank matrix}

The Hermite basis of a full column rank $A$ is the Hermite basis
of $\Rel(A,I)$.  Using Theorem~\ref{thm:general} to compute the
Hermite basis of $\Rel(A,I)$ gives the following.

\begin{theorem} \mylabel{thm:EXhermite} Let $A \in \Z^{n \times m}$
have full column rank.  There exists a Las Vegas randomized
algorithm that computes the Hermite form of $A$ in $O(nm^{\omega-1}\,\B(D/m
+ \log m)(\log m)^2))$ bit operations, where $D$ is a bound for the
sum of the bitlengths of the columns of $A$.
\end{theorem}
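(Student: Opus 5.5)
The plan is to apply Theorem~\ref{thm:general} to the relations lattice $\Rel(A,I_m)$. Here the modulus is $M:=A\in\Z^{n\times m}$, so $\ell=n$, and $G:=I_m\in\Z^{m\times m}$, so the row dimension of $G$ is $m$. First I will confirm that $\Rel(A,I_m)=\{p\in\Z^{1\times m}\mid p=\boldz\bmod A\}={\mathcal L}(A)$. This follows directly from definition~(\ref{eq:defil}), or alternatively from Lemma~\ref{rem:alt}, which identifies ${\mathcal L}(T)={\mathcal L}(A)+{\mathcal L}(I)$ and ${\mathcal L}(H)=\Rel(A,I)$. The Hermite basis of $\Rel(A,I_m)$ is therefore the Hermite basis of ${\mathcal L}(A)$. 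This is exactly the set of nonzero rows of the Hermite form of $A$, and the remaining $n-m$ rows are zero. So outputting the Hermite basis padded with zero rows gives the Hermite form.

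Next comes the size parameter. Theorem~\ref{thm:general} needs a bound on the sum of the column bitlengths of both $M$ and $G$. Each column of $I_m$ has bitlength $1$, so the bound $D$ for $A$ gives the bound $D+m$ for $M$ and $G$ together. Then $(D+m)/m+\log m = D/m+1+\log m$. This is in $O(D/m+\log m)$ when $m\ge 2$, and $m=1$ is trivial because the Hermite form is then a gcd computation. By the assumptions on $\B$ (superlinearity and submultiplicativity), the constant factor inside $\B$ can be absorbed.

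Finally, for the cost, the row dimension of $G$ is $m$, so we are in the ``moreover'' case $n_G=m\in O(m)$ of Theorem~\ref{thm:general}. The extra term~(\ref{eq:costx2}) therefore disappears, and the cost is $O((n+m)m^{\omega-1}\B(D/m+\log m)(\log m)^2)$. Since $A$ has full column rank, $n\ge m$, so $n+m\le 2n$ and this becomes the claimed bound. The Las Vegas property, with failure probability at most $1/2$, is inherited from Theorem~\ref{thm:general}. There is no real obstacle in this argument; the only points needing care are the dimension bookkeeping (the role of $n$ in Theorem~\ref{thm:general} is played here by $m$) and absorbing the $+m$ from the identity columns into $D$.
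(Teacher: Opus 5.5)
Your proposal is correct and is the paper's argument: the paper observes that the Hermite basis of $A$ is the Hermite basis of $\Rel(A,I)$ and invokes Theorem~\ref{thm:general}. Your bookkeeping fills in what the paper leaves implicit, namely that the row dimension of $G=I_m$ is $m$ so the extra term drops, and that $n+m\le 2n$. One small simplification: every column has bitlength at least $1$, so $D\ge m$ and $D+m\le 2D$, which means the separate treatment of $m=1$ is unnecessary.
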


\subsection{Remainder modulo a Hermite form}
As done in~\cite{NeigerVu17} for polynomial matrices, we can compute
the remainder of a matrix with respect to a Hermite basis, but now
over $\Z$.  Recall that
the remainder of $F \in \Z^{n \times m}$ with respect to
a Hermite basis $T \in \Z^{m \times m}$ is 
the matrix $\bar{F}$ of
Lemma~\ref{lem:compress}.\ref{compp2}.
Here, we observe that 
\begin{equation*}
\Rel \left ( T, \left [ \begin{array}{c} -F \\\hline I \end{array} \right ] \right )
\mbox{~~has Hermite basis~~}
H = \left [\begin{array}{c|c} I & \bar{F} \\\hline & T \end{array} \right ].
\end{equation*}
We obtain the following result using Theorem~\ref{thm:general} to
compute $H$, using the fact that $H$ is index $(n,m)$.

\begin{theorem} Let $T \in \Z^{m \times m}$ be a Hermite basis and
let $F \in \Z^{n \times m}$. There exists a Las Vegas randomized
algorithm that computes the remainder of $F$ with respect to $T$
in $$O((n+m)m^{\omega-1}\, \B(D/m + \log m) (\log m)^2)$$ bit operations,
where $D$ is bound for the sum of the bitlengths of the columns of $T$ and $F$.
\end{theorem}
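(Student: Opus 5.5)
The plan is to apply Theorem~\ref{thm:general} to the relations lattice
$$\Rel\left(T, \left[\begin{array}{c} -F \\\hline I_m\end{array}\right]\right),$$
using the index-$(n,m)$ shape of its Hermite basis to avoid the extra term~(\ref{eq:costx2}). The work splits into three steps.

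\emph{Step 1: identify the Hermite basis.} I first verify that
$$H = \left[\begin{array}{c|c} I & \bar{F} \\\hline & T\end{array}\right]$$
is the Hermite basis of this lattice.
\begin{itemize}
\item \emph{Membership.} A row vector $[\,p \mid q\,]$ with $p \in \Z^{1\times n}$ and $q\in\Z^{1\times m}$ lies in the lattice exactly when $q - pF \in {\mathcal L}(T)$. The rows of $H$ satisfy this. For the first block this says $\bar{F} - F = \boldz \bmod T$, which is Lemma~\ref{lem:compress}.\ref{compp2}. For the second block it says $T = \boldz \bmod T$, which is trivial.
\item \emph{Generation.} Let $[\,p\mid q\,]$ be in the lattice. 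Subtracting $p$ times the first block row leaves $[\,\boldz \mid q - p\bar{F}\,]$. Since $q - pF \in {\mathcal L}(T)$ and $\bar{F}-F = \boldz \bmod T$, also $q - p\bar{F} \in {\mathcal L}(T)$. So the remainder is an integer combination of the rows of $[\,\boldz\mid T\,]$.
\item \emph{Hermite shape.} $H$ is upper triangular with nonnegative entries. Off-diagonal entries in each column are reduced: $T$ is itself in Hermite form, and $\bar{F}$ is reduced column-modulo the diagonal of $T$ by definition of the remainder. By uniqueness, $H$ is the Hermite basis.
\end{itemize}
This also shows that $H$ is index $(n,m)$ in the sense of~(\ref{eq:shape}): the first $n$ columns are trivial and the block $\bar{H}=T$ has $m$ columns.

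\emph{Step 2: apply Theorem~\ref{thm:general}.} I invoke the theorem with modulus $T$ (square, full column rank, so $\ell = m$) and with $G = [\,-F \mid I\,]^T$ stacked, which has row dimension $n+m$ and $m$ columns. Because the index $(k,\bar m) = (n,m)$ is known in advance with $\bar m = m \in O(m)$, the ``moreover'' clause applies. The cost is therefore~(\ref{eq:costx1}) alone:
$$O\bigl((\ell + (n+m))\,m^{\omega-1}\,\B(D'/m+\log m)(\log m)^2\bigr) = O\bigl((n+m)\,m^{\omega-1}\,\B(D'/m+\log m)(\log m)^2\bigr).$$

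\emph{Step 3: bound $D'$.} Here $D'$ is the sum of the column bitlengths of $T$ and $G$. Stacking $I$ below $-F$ changes a column's bitlength only when that column of $F$ is entirely zero, and then only from $1$ to at most $2$. Negation does not change bitlength. Hence $D' \le D + O(m)$, which contributes only an additive constant inside the argument of $\B$.

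Finally, reading off the top-right block of $H$ returns $\bar{F}$, and the algorithm inherits the Las Vegas property from Theorem~\ref{thm:general}. The only step requiring any care is Step 1, and in particular checking that $\bar{F}$'s column reduction is exactly the Hermite reduction condition for $H$. This follows directly from the definition of the remainder.
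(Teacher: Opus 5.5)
Your proposal is correct and is essentially the paper's argument: the paper observes that $\Rel\left(T,\left[\begin{array}{c} -F \\\hline I \end{array}\right]\right)$ has Hermite basis $\left[\begin{array}{c|c} I & \bar{F} \\\hline & T \end{array}\right]$, which is index $(n,m)$, and then applies the ``moreover'' clause of Theorem~\ref{thm:general}. Your Step~1 supplies the verification the paper leaves implicit, and one harmless nitpick on Step~3 is that $\length(0)=\length(1)=1$ in the paper's convention, so stacking $I$ under $-F$ changes no column bitlength and in fact $D'\le D$.
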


\subsection{Hermite form of a product of matrices}

Suppose $A$ and $B$ are nonsingular.  An obvious way to compute the
Hermite form $H$ of the product $AB$ is to compute $C = AB$
explicitly, and then compute the Hermite form of $C$.  But the total
size of $C$ can be large.  For example, consider the $(2n+1)
\times (2n+1)$ matrices
\begin{equation*} 
A = \left [ \begin{array}{ccc} I_n & 2^nv & \\
 &  2^n+1 & \\
& & I_n \end{array} \right ] \hbox{~~and~~}
B = \left [ \begin{array}{ccc} I_n &   & \\
 & 1  & w \\
 &  & 2I_n \end{array} \right ],
\end{equation*}
where $v \in \Z^{n \times 1}$ and $w \in \Z^{1 \times n}$ are the
column and row vector of all ones, respectively.  Both $A$ and $B$
are in Hermite form, and the sum of the bitlengths of all entries
in $A$ and $B$, as well as  the Hermite form $H$ of $AB$, is $O(n^2)$.
But $C=AB$ contains the $n \times n$ submatrix $2^nvw$ filled with
$2^n$, and so the sum of the bitlength of all entries in $C$ is
$\Omega(n^3)$.  Instead of computing $C$ explicitly, we can use the
fact that the Hermite basis of $AB$ is equal to the Hermite basis
of $\Rel(AB,I)$.  Up to transformations allowed by Lemma~\ref{lem:denequiv}
we have
\begin{equation} \Rel(AB,I) 
  = \Rel \Biggl ( \left [ \begin{array}{c|c} A & \\ I  & B \end{array} \right ]  ,
 \left [ \begin{array}{c|c} \boldz & I \end{array} \right] \Biggr ).  \mylabel{eq:s2}
\end{equation}
Note that identity~(\ref{eq:s2}) applies also to rectangular matrices $A$ and
$B$.  Using Theorem~\ref{thm:general} to compute the Hermite basis
of the right hand side of~(\ref{eq:s2}) gives the following.
\begin{theorem} Let $A \in \Z^{n \times m}$ and $B \in \Z^{m \times
\ast}$ be such that $AB$ has full column rank.  There exists a Las Vegas
randomized algorithm that computes the Hermite basis of $AB$ in
$$O((n+m)m^{\omega-1}\,\B(D/m + \log m)(\log m)^2)$$ bit operations, where
$D$ is a bound for the sum of the bitlengths of the columns of $A$
and $B$.
\end{theorem}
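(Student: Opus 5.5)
The plan is to apply Theorem~\ref{thm:general} to the right-hand side of~(\ref{eq:s2}). Three things need checking: the identity~(\ref{eq:s2}) itself, that the dimensions and bit-size parameter of the new inputs fit the stated bound, and that the extra term~(\ref{eq:costx2}) is not incurred.

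\emph{Step 1: justify~(\ref{eq:s2}).} Let $B\in\Z^{m\times q}$, with $AB$ of full column rank $q$. Then $B$ has full column rank, so the modulus
\[
\left[\begin{array}{c|c} A & \\ I_m & B\end{array}\right]
\]
has full column rank. Apply Lemma~\ref{lem:denequiv}.\ref{factTruncate}, read in the reverse direction as in the proof of Lemma~\ref{thm:diag}, with top block $M_1=B$ and bottom block $M_2=AB$. This gives that $\Rel(AB,I)$ equals the projection onto the first $q$ coordinates of $\Rel(B,[I;\ A])$. Directly from the definition, $p\in\Rel(AB,I)$ iff $p=xAB$ for some integer row $x$.

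On the other side, a vector $[u\mid p]$ lies in the relations lattice of the right-hand side of~(\ref{eq:s2}) iff
\[
[\,\boldz \mid p\,]=[\,x \mid y\,]\left[\begin{array}{c|c} A & \\ I & B\end{array}\right]
\]
for some integer $x,y$. This says $y=-xA$ and $p=yB=-xAB$. So $p$ ranges over exactly $\mathcal L(AB)$, and the identity holds. No projection is actually needed, since $F=[\,\boldz\mid I_q\,]$ has $q$ rows.

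\emph{Step 2: apply Theorem~\ref{thm:general}.} The modulus has $\ell=n+m$ rows and $m+q$ columns, and $F$ has $q$ rows. Since $AB$ has full column rank, $q\le m$. Hence the column dimension is $m+q\le 2m$, so it is in $O(m)$, and the row dimension of $F$ is $q\in O(m+q)$. The ``moreover'' clause of Theorem~\ref{thm:general} therefore applies, and the extra term~(\ref{eq:costx2}) is not needed.

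The column bitlengths are as follows. The first $m$ columns of the modulus stack $A$ over $I$, so each has bitlength at most the corresponding column bitlength of $A$. The last $q$ columns stack $\boldz$ over $B$, with the $I_q$ block of $F$ also lying in these columns. The sum of column bitlengths is therefore bounded by $D+O(m)$, where $D$ is the sum of the column bitlengths of $A$ and $B$. For this I take $D$ to account for the columns of $B$, using that each column bitlength is at least~$1$.

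Substituting into~(\ref{eq:costx1}), with $\ell+n'\le n+2m$ and $m'\le 2m$, gives
\[
O((n+m)m^{\omega-1}\,\B(D/m+\log m)(\log m)^2).
\]
The constant factors $2^{\omega-1}$ and the $O(m)$ additive term in the bitlength are absorbed, using submultiplicativity of $\B$ and $D/m\ge 1$.

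\emph{Main obstacle.} The step needing the most care is the bookkeeping of the column-dimension parameter. The bound must come out in terms of $m$, and for this we need $q\le m$. If $B$ were wide with $q>m$, then $AB$ could not have full column rank, so this case does not arise. The remaining subtlety is confirming that the column bitlength sum of $B$ is captured by $D$, as the statement asserts.
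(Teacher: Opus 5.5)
Your proposal is correct and follows the paper's route: the paper obtains this theorem by applying Theorem~\ref{thm:general} to the right-hand side of~(\ref{eq:s2}), and your direct check of~(\ref{eq:s2}) together with the bookkeeping (the new column dimension is $m+q\le 2m$, so the ``moreover'' clause removes the term~(\ref{eq:costx2})) supplies exactly the details the paper leaves implicit. Two small slips to fix: the modulus has full column rank because $AB$ does, not merely because $B$ does (e.g.\ $A=\boldz$ shows full column rank of $B$ alone is not enough), and the preliminary appeal to Lemma~\ref{lem:denequiv}.\ref{factTruncate} with $M_1=B$, $M_2=AB$ is ill-typed and should be dropped, since your direct computation already proves the identity.
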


\subsection{Intersection of bases}

Suppose $A,B \in \Z^{n \times m}$ have full column rank.  Then
\begin{equation} \mylabel{eq:inpy}
{\mathcal L}(A) \cap {\mathcal L}(B) = 
\Rel \Biggl ( \left [ \begin{array}{c|c} A & \\ & B \end{array} \right ],
\left [ \begin{array}{c|c} I & I \end{array} \right ] \Biggr ).
\end{equation}
Using Theorem~\ref{thm:general} to compute the Hermite basis of the
right hand side of~(\ref{eq:inpy}) gives the following.

\begin{theorem} Let $A,B \in \Z^{n \times m}$ have full column
rank.  There exists a Las Vegas randomized algorithm that computes
the Hermite basis of ${\mathcal L}(A) \cap {\mathcal L}(B)$ in
$O(nm^{\omega-1}\,\B(D/m + \log m)(\log m)^2)$ bit operations, where
$D$ is a bound for the sum of the bitlengths of the columns
of $A$ and $B$.
\end{theorem}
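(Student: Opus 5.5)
The theorem should follow directly from Theorem~\ref{thm:general}, applied to the relations lattice on the right hand side of~(\ref{eq:inpy}). The work splits into three parts: justify the identity~(\ref{eq:inpy}), check that the hypotheses of Theorem~\ref{thm:general} hold, and confirm that the cost contains only the term~(\ref{eq:costx1}) and not~(\ref{eq:costx2}).

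\textbf{The identity~(\ref{eq:inpy}).} Write $M=\diag(A,B)\in\Z^{2n\times 2m}$ and $F=\left[\begin{array}{c|c} I_n & I_n\end{array}\right]$. Here the identity blocks are taken $n\times n$, so that the row dimension of $F$ is $n$, matching a vector $p\in\Z^{1\times n}$. Note that $F$ must then have $2n$ columns, and $M$ must have $2n$ columns as well. With $A,B\in\Z^{n\times m}$ as written, the modulus should really be the block matrix $\diag(A,B)$ acting on the row space $\Z^{1\times 2m}$. So I would adopt the dimensionally consistent reading, $F=\left[\begin{array}{c|c} I_m & I_m\end{array}\right]\in\Z^{m\times 2m}$, with $p\in\Z^{1\times m}$ and the lattices ${\mathcal L}(A),{\mathcal L}(B)\subseteq\Z^{1\times m}$.

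Then $pF=[\,p\mid p\,]$, and this equals $\boldz\bmod\diag(A,B)$ exactly when there are $q_1,q_2$ with $p=q_1A$ and $p=q_2B$, that is, when $p\in{\mathcal L}(A)\cap{\mathcal L}(B)$. The modulus $\diag(A,B)$ has full column rank because $A$ and $B$ do. The intersection has rank $m$, so its Hermite basis is $m\times m$ and nonsingular, as required.

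\textbf{Applying Theorem~\ref{thm:general}.} The parameters are $\ell=2n$, column dimension $2m$, and row dimension of $F$ equal to $m$. The sum of the column bitlengths of $\diag(A,B)$ and $F$ is at most the corresponding sum for $A$ and $B$ plus $2m$, so a bound $D'=D+2m$ is valid. Substituting into~(\ref{eq:costx1}) gives
\[
O((2n+m)(2m)^{\omega-1}\,\B((D+2m)/(2m)+\log 2m)(\log 2m)^2)
\subseteq O(nm^{\omega-1}\,\B(D/m+\log m)(\log m)^2),
\]
using $n\ge m$ (full column rank) and the submultiplicativity of $\B$ to absorb the constants.

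\textbf{Avoiding the extra term.} Since the row dimension of $F$ is $m\in O(2m)$, the ``moreover'' clause of Theorem~\ref{thm:general} applies, and the extra term~(\ref{eq:costx2}) is not needed. The algorithm is Las Vegas with failure probability at most $1/2$, inherited from Theorem~\ref{thm:general}.

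The only real obstacle is bookkeeping: reconciling the block dimensions in~(\ref{eq:inpy}) and verifying the bitlength bound $D'$. Everything else is a direct invocation of Theorem~\ref{thm:general}.
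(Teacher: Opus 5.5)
Your proposal is correct and follows the paper's route exactly: the paper obtains this theorem simply by applying Theorem~\ref{thm:general} to $\Rel(\diag(A,B),[\,I_m \mid I_m\,])$ as in~(\ref{eq:inpy}). Your verification of the identity, the bitlength bound $D+2m$, and the use of the ``moreover'' clause (row dimension $m$ against column dimension $2m$) fill in the bookkeeping the paper leaves implicit, and the initial $I_n$ reading was unnecessary since the paper's $[\,I \mid I\,]$ leaves the dimension unspecified and $I_m$ is the only consistent choice.
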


\subsection{Multivariable Chinese remainder problem}

The Chinese remainder problem asks to 
find an integer $x$ satisfying 
\begin{equation} \mylabel{eq:crt}
\setlength{\arraycolsep}{.7\arraycolsep} \renewcommand{\arraystretch}{.7}
x \left [ \begin{array}{cccc}
1 &  1 &  \cdots & 1 \end{array} \right ] = 
\stackrel{\textstyle \vec{b}}{
\left [ \begin{array}{cccc}
b_1 &  b_2 & \cdots & b_n  \end{array} \right ]} \bmod 
\stackrel{\textstyle M}{ \left [ \begin{array}{cccc} d_1 & \\
 & d_2 &  \\
& & \ddots & \\
& & & d_n \end{array} \right ]}\end{equation}
for a given vector $\vec{b} \in \Z^{1 \times n}$ 
of remainders and a diagonal matrix $M\in \Z_{\geq 0}^{n \times n}$
of nonzero moduli.
In the classical setting one typically makes the assumption that the
moduli $d_i$ are pairwise relatively prime --- in which case there is a
 unique solution for $x$ in the range $[0,\det M)$ --- but the case of arbitrary moduli has
also been considered~\cite{Ore52}.
Another natural generalization of~(\ref{eq:crt}) takes as input
an $A \in \Z^{n \times n}$ and asks for a vector $\vec{x} \in \Z^{1 \times n}$ (if it exists) to
the linear diophantine system
$\vec{x} A = \vec{b} \bmod M.$
Knill~\cite{Knill12} and Sury~\cite{Sury15} give additional background
and history, referring to this linear diophantine system 
as the {\em multivariable Chinese remainder problem}.

Here we observe that 
\begin{equation}\mylabel{crt3}
\renewcommand{\arraystretch}{1.2}
\Rel \Biggl ( M, \left [\begin{array}{c} -\vec{b} \\\hline A \end{array} \right ] 
\Biggr )
\mbox{~~has Hermite basis~~}
\left [ \begin{array}{c|c} h & \vec{x}_p \\\hline
 & \bar{H} \end{array} \right ],
\end{equation}
where $\vec{x}_p \in \Z^{1 \times n}$  is a particular solution to the scaled system
$\vec{x}A  = h\vec{b} \bmod M$,
and the scaling factor $h\in \Z_{>0}$ is minimal to achieve
consistency. 
Using Theorem~\ref{thm:general} to compute the Hermite basis 
of~(\ref{crt3}) gives the following.  

\begin{theorem} \mylabel{thm:mcrt}  
Let $M\in \Z_{\geq 0}^{n \times n}$ be nonsingular and diagonal, 
$A \in \Z^{n \times n}$ and $\vec{b} \in \Z^{1 \times n}$. 
There exists a Las Vegas randomized algorithm that
computes:
\begin{itemize}
\item[(i)]  The minimal $h \in \Z_{>0}$ such that $\vec{x} A = h\vec{b} \bmod M$ has a solution for $\vec{x} \in \Z^{1 \times n}$.
\item[(ii)]  A particular solution vector $\vec{x}_p \in \Z^{1 \times n}$ such that $\vec{x}_p A = h\vec{b} \bmod M$.
\item[(iii)]  A Hermite basis $\bar{H} \in \Z^{n \times n}$ such that 
$\{ \vec{x} \, \mid \, \vec{x} A = h\vec{b} \bmod M \} = \{ \vec{x}_p + v \bar{H} \, \mid \, v \in \Z^{1 \times n} \}$.
\end{itemize}
If $A$ and $\vec{b}$ are reduced  column-modulo $M$, then
the cost of the algorithm is 
$$O(n^{\omega}\,\B((\log \det M)/n + \log n)(\log n)^2)$$ 
bit operations.
\end{theorem}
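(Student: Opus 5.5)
We prove the correctness of the identity~(\ref{crt3}) and then read off the cost from Theorem~\ref{thm:general}. Here the modulus $M$ is $n\times n$ (so $\ell=m=n$), and the second argument $F=\left[\begin{array}{c} -\vec b \\ A\end{array}\right]$ has $n+1$ rows. Let $H$ be the Hermite basis of $\Rel(M,F)$, an $(n+1)\times(n+1)$ upper triangular matrix. A row $[\,y \mid \vec x\,]$, with $y\in\Z$ and $\vec x\in\Z^{1\times n}$, lies in $\Rel(M,F)$ exactly when $\vec xA = y\vec b \bmod M$.

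\textbf{Structure of $H$.} Write $H=\left[\begin{array}{c|c} h & \vec x_p\\ & \bar H\end{array}\right]$. The set of first coordinates $y$ of lattice elements is an ideal of $\Z$. Because $H$ is a basis in triangular form, this ideal is generated by $h$. Hence $h$ is the minimal positive $y$ for which the scaled system is consistent, which gives~(i). The first row of $H$ then shows that $\vec x_pA=h\vec b \bmod M$, which gives~(ii).

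The rows with first coordinate $0$ form the sublattice $\{\vec x \mid \vec xA=\boldz \bmod M\}=\Rel(M,A)$. Since $H$ is triangular, this sublattice is generated by the rows of $[\,0\mid\bar H\,]$, and $\bar H$ is in Hermite form. Now take any solution $\vec x$ of $\vec xA=h\vec b \bmod M$. Then $\vec x-\vec x_p$ lies in $\Rel(M,A)$, so it belongs to $\mathcal L(\bar H)$. This gives the affine description in~(iii).

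One check is needed. $\bar H$ must be nonsingular, so that it really is a Hermite basis of the homogeneous lattice. This holds because $(\det M)I\subseteq \Rel(M,A)$.

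\textbf{Cost.} Apply Theorem~\ref{thm:general} with $\ell=m=n$ and row dimension $n+1$ for $F$. Since $n+1\in O(m)$, the extra term~(\ref{eq:costx2}) is not needed. The cost is therefore $O(n\cdot n^{\omega-1}\,\B(D/n+\log n)(\log n)^2)$.

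It remains to bound $D$ by $O(\log\det M + n)$. This is the only point needing care.
\begin{itemize}
\item The columns of $M$ have bitlengths $1+\lfloor\log_2 d_i\rfloor$.
\item Since $A$ and $\vec b$ are reduced column-modulo $M$, every entry in column $i$ of $F$ is less than $d_i$ in absolute value. So the column bitlengths of $F$ are bounded the same way, as noted in Section~\ref{ssec:cost}.
\end{itemize}
Summing over the columns gives $D\le 2(n+\log_2\det M)$. Therefore $D/n+\log n \in O((\log\det M)/n+\log n)$, and the cost bound follows.

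The Las Vegas property is inherited directly from Theorem~\ref{thm:general}.
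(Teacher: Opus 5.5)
Your proposal is correct and takes essentially the paper's route: the paper just asserts the block structure~(\ref{crt3}) of the Hermite basis of $\Rel(M,F)$, with $F$ the stacked matrix of $-\vec b$ over $A$, and then invokes Theorem~\ref{thm:general}. You also supply what the paper leaves implicit: the ideal argument for the minimality of $h$, the identification $\mathcal{L}(\bar H)=\Rel(M,A)$ giving the affine description in (iii), the use of the clause $n+1\in O(m)$, and the bound $D\le 2(n+\log_2\det M)$, which follows from the column-modulo reduction of $A$ and $\vec b$.
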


\begin{remark}
While computing a basis for a lattice of integer relations has
not had considerable attention to date, the same is not true for
the case of polynomials. In this case the CRT corresponds to polynomial
interpolation while modules of polynomial relations define such
well known problems as rational interpolation, multi-point Pad\'e
approximation and more generally M-Pad\'e approximation along with
their matrix generalizations~\cite{BeckermannLabahn97,NeigerVu17}.
\end{remark}

\section{Conclusion and topics for future research}\mylabel{sec:conc}

Let $M \in \Z^{n \times n}$ be a nonsingular integer matrix and $d=
\log n  + \log ||M||$.  In this paper we have given a Las Vegas
randomized algorithm to compute the Hermite form of $M$ using
\begin{equation} \mylabel{eq:ccost1} O(n^{\omega} (\log n)^2 \,
d^{1+\epsilon}) 
\end{equation} 
bit operations, where $\omega$ is the exponent of matrix multiplication
and the $+\epsilon$ captures the cost of integer arithmetic, which
can be pseudo-linear.  We make the common assumptions that $\omega > 2$
and $\epsilon < \omega-2$.

Our approach is to solve a more general problem, that of computing
the basis in Hermite form of an integer relations lattice. To compute
the Hermite form $H$ of $M$, we start with a Smith massager $(S,F)$
for $M$ that can be computed in cost~(\ref{eq:ccost1}) using an
existing Las Vegas randomized algorithm.  We then give a divide and
conquer algorithm to compute the Hermite basis of the relations
lattice $\Rel(S,F)$, which is equal to $H$, also in Las Vegas
cost~(\ref{eq:ccost1}).  The divide and conquer algorithm itself
is randomized because it uses Smith massager computations during
the construction of the inputs for the recursive calls.

In terms of complexity,
the main open problem concerns the need for randomization:
find (deterministic) algorithms to compute the Hermite
form $H$ of $M$ and the Smith form $S$ of $M$ using
\begin{equation} \mylabel{eq:ccost2}
(n^{\omega}d)^{1+o(1)}
\end{equation} bit operations.
Regarding the Smith form, on the one hand, the existence of an algorithm for the more general
problem of computing a Smith massager $(S,F)$ of $M$ in
cost~(\ref{eq:ccost2}) would answer positively the question about
computing $H$.  On the other hand, finding just the largest invariant
factor of $S$ in cost~(\ref{eq:ccost2}) is open.


A natural analogue to the problems we consider is computing Hermite
and Smith forms of matrices over the ring of univariate polynomials
$\K[x]$, $\K$ a field.  Over $\K[x]$ the cost is 
the number of required field operations from $\K$ and 
$d$ is a bound for the degrees of entries in the input matrix.
Given a nonsingular $M \in \K[x]^{n \times n}$, 
deterministic algorithms with cost~(\ref{eq:ccost2}) are known for computing
the Hermite form~\cite{LabahnNeigerZhou17} and 
for computing 
the largest invariant factor~\cite{ZhouLabahnStorjohann14}.
However, the problem of computing the Smith form in cost~(\ref{eq:ccost2}) remains open.

Our algorithm constructively reduces the problem of computing the
Hermite form to that of  multiplying integer matrices, and is thus  well
suited to take advantage of hardware and software support
both for matrix multiplication and fast integer arithmetic.  At
present we have a preliminary C implementation for the computation of a Smith
massager \cite{Ziwen}, one of the core subroutines needed.  
It remains to implement the fast Hermite and Howell constructions along with the
specialized column-modulo matrix multiplication operations.  
Such implementations require developing techniques to bridge
the gap between an algorithm with good asymptotic complexity and
one having practical performance.

\bibliographystyle{plainnat}

\newcommand{\SortNoop}[1]{}

\end{document}